\documentclass[draftclsnofoot]{IEEEtran}
 \onecolumn
\usepackage{latexsym}
\usepackage{cite}
\usepackage{color}
\usepackage{bm}
\usepackage{amsmath, amssymb}
\usepackage[dvips]{graphics}
\usepackage{graphicx}
\usepackage{psfrag}
 \allowdisplaybreaks

\newtheorem{definition}{Definition}

\newtheorem{theorem}{Theorem}
\newtheorem{lemma}[theorem]{Lemma}

\newtheorem{corollary}[theorem]{Corollary}

\begin{document}
\title{MAC with Action-Dependent State Information at One Encoder}
\date{August, 2012}
\author{Lior Dikstein, Haim H. Permuter and Shlomo (Shitz) Shamai}
\maketitle \footnotetext[1]{Parts of this paper appeared in the IEEE International Symposium on
Information Theory (ISIT 2012),Cambridge, MA, US, July 2012 and at the IEEE 27th Convention of
Electrical and Electronics
Engineers in Israel (IEEEI 2012), Nov. 2012.}%
 \footnotetext[2]{This work has been supported by the CORNET Consortium Israel Ministry for
Industry and Commerce.}

\begin{abstract}
Problems dealing with the ability to take an action that affects the states of state-dependent
communication channels are of timely interest and importance. Therefore, we extend the study of
action-dependent channels, which until now focused on point-to-point models, to multiple-access
channels (MAC). In this paper, we consider a two-user, state-dependent MAC, in which one of the
encoders, called the informed encoder, is allowed to take an action that affects the formation of
the channel states. Two independent messages are to be sent through the channel: a common message
known to both encoders and a private message known only to the informed encoder. In addition, the
informed encoder has access to the sequence of channel states in a non-causal manner. Our
framework generalizes previously evaluated settings of state dependent point-to-point channels
with actions and MACs with common messages. We derive a single letter characterization of the
capacity region for this setting. Using this general result, we obtain and compute the capacity
region for the Gaussian action-dependent MAC. The unique methods used in solving the Gaussian
case are then applied to obtain the capacity of the Gaussian action-dependent point-to-point
channel; a problem was left open until this work. Finally, we establish some dualities between
action-dependent channel coding and source coding problems. Specifically, we obtain a duality
between the considered MAC setting and the rate distortion model known as ``Successive Refinement
with Actions''. This is done by developing a set of simple duality principles that enable us to
successfully evaluate the outcome of one problem given the other.

\end{abstract}
\begin{keywords}
Actions, binning, channel capacity, channel coding, dirty paper coding,  duality between channel coding and source coding, Gel'fand-Pinsker channel, non-causal side information, rate distortion, successive refinement with actions.
\end{keywords}


\section{Introduction}
STATE-DEPENDENT channels model a communication situation where the channel is time variant. Such channels characterize a significant collection of communication scenarios, ranging from interfering transmissions models to cases where the states are generated by nature. Problems of coding for these channels have received much attention due to the wide range of their potential applications. These applications vary from modeling communication links such as fading, to interference in a wireless network. Furthermore, problems relating Multiple Access Channels (MAC) with channel state information (CSI) have been thoroughly studied due to their importance in modeling wireless communication systems.  Most of the channels investigated until now have been examined under the assumption that the states affecting the channel cannot be influenced by the communication system.

\par In this paper, we consider an action-dependent channel. The motivation for studying these channels stems from the implications and practical use of the 'action' in modeling important communication scenarios. For instance, one interpretation of the action could be a noisy public relay. In this case, the relay outputs are modeled as a function of the messages. We provide a relay output sequence to be transmitted: $A^n(M_1,M_2)$ and obtain the state sequence, $S^n$, via the memoryless noisy transformation $p(s|a)$. The relay outputs are public and, therefore, monitored beforehand so that $S^n$ is known at the transmitter. Our model is the natural extension of that for the two-user cognitive setting.

So far the most studies on action-dependent channels focused on point-to-point channels. In this
work, we broaden the research of action-dependent channels to MACs. We consider a MAC where one
of the encoders is allowed to take an action that affects the formation of the channel states.
\begin{figure}[h!]
    \begin{center}
        \begin{psfrags}
            \psfragscanon
            \psfrag{A}[][][0.8]{\ \ \ \ \ \ \ \ \ \ \ \ \ \ \ \ \ \  Uninformed}
            \psfrag{B}[][][0.8]{\ \ \ \ \ \ \ \ \ \ \ \ \ \ \ \ \ \ Informed}
            \psfrag{C}[][][0.8]{\ \ \ \ \ \ \ \ \ \ \ \ \ \ \ \ \ \ MAC}
            \psfrag{D}[][][0.8]{\ \ \ \ \ \ \ \ \ \ \ \ \ \ \ \ \ \ \ \ \ $p(y|x_1,x_2,s)$}
            \psfrag{E}[][][0.8]{\ \ \ \ \ \ \ \ \ \ \ \ \ \ Decoder}
            \psfrag{F}[][][0.8]{\ \ \ \ \ \ \footnotesize{$X_1^n(M_1)$}}
            \psfrag{G}[][][0.8]{\ \ \ \ \ \ \ \ \ \footnotesize{$X_2^n(M_1,M_2)$}}
            \psfrag{H}[][][0.8]{$Y^n$}
            \psfrag{I}[][][0.8]{\ \ \ \ \ \ \ $p(s|a)$}
            \psfrag{J}[][][0.8]{$M_1$}
            \psfrag{K}[][][0.8]{$M_2$}
            \psfrag{L}[][][0.8]{\ \ \ $(\hat{M}_1,\hat{M}_2)$}
            \psfrag{M}[][][0.8]{\ \ \ \ \ \ \ \ \ \ \ \ \ \ \ \ \ \ Encoder}
            \psfrag{N}[][][0.8]{\ \ \ \ \ \ \ \ \ \ \ \ \ \ \ \ \ \ Encoder}
            \psfrag{O}[][][0.8]{$A^n(M_1,M_2)$}
            \psfrag{P}[][][0.8]{$S^n$}
            \centerline{\includegraphics[scale = .6]{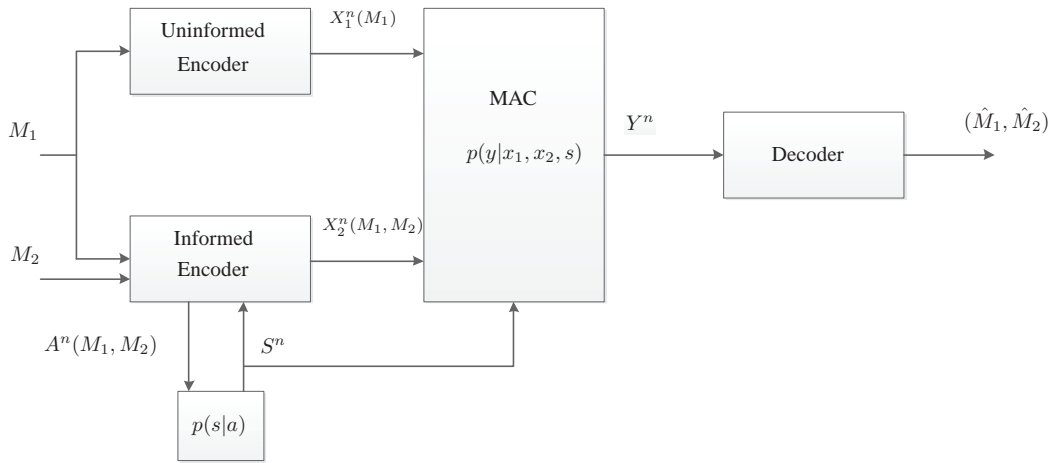}}
            \caption{The MAC with action-dependent state information at one encoder, which is considered in this paper.} \label{channel1}
            \psfragscanoff
        \end{psfrags}
     \end{center}
 \end{figure}
Specifically, we examine a MAC communication system, that is illustrated in Fig. \ref{channel1},
where two encoders have access to a common message and only one, the informed encoder, has access
to a private message. This encoder can generate an action sequence dependent on both messages
that, in turn, affects the channel states. Furthermore, the states affected by the action
sequence are accessible non-causally to the informed encoder when producing the channel input. We
will refer to this model as the Action-MAC. We characterize the capacity region of this channel
for the general finite alphabet case with a single letter expression. This is done by using a
random coding scheme and a binning technique, where the encoding of the messages is done in three
parts. In the first part, the uninformed encoder encodes a message using its signal, $X_1$. In
the second part, the informed encoder encodes a message using the actions where $X_1$ is used as
side information. In the third stage, a Gel'fand-Pinsker coding is done from the informed encoder
where the action and $X_1$ are used as side information.


\par The study of state-dependent channels dates back to Shannon \cite{Shannon}, who first introduced and characterized the capacity of a state-dependent, memoryless point-to-point channel with independent and identically distributed (i.i.d.) states available causally at the encoder. Gel'fand and Pinsker \cite{Gelfand-Pinsker}, and later Heegard and El Gamal \cite{Heegard-Gamal}, studied a case in which the encoder observes the channel states non-causally. They derived a single letter formula for the capacity, using a binning coding scheme. The main idea of this scheme is to generate a subcodebook for each message. Next, in order to send a message we send one of the codewords in the subcodebook that is jointly typical with the sequence of channel states. The results obtained by Gel'fand and Pinsker were used by Costa in his famous ``Writing on Dirty Paper'' \cite{Costa}. He applied these results to the case where there are two additive Gaussian noise sources, where one of the noises represents the interferences and is modeled by the channel's states.

\par An extension of this model, known as the Generalized Gel'fand-Pinsker (GGP) MAC, was studied by Somekh-Baruch, Shamai and Verdu, in their paper \cite{Baruch-Shamai-Verd}. They considered a MAC with common and private messages where the encoder informed of the private message is additionally informed of the channel states non-casually. They characterized the capacity region for the general finite alphabet case using a generalized binning coding scheme. A MAC with private messages at both encoders and state information known to one encoder channel was examined in \cite{Kotagiri07multipleaccess}, where an inner bound for the capacity region in the general discrete memoryless case was found. Further MAC models, where the states are known causally or strictly causally were considered by Lapidoth and Steinberg in \cite{DBLP:journals/corr/abs-1106-0380},\cite{5513459} and later by Li, Simeone and Yener in \cite{DBLP:journals/corr/abs-1011-6639}. Another version of a MAC, where the states are known noncausally at one encoder and causally at the second encoder, was considered by Zaidi, Piantanida and Shamai in \cite{DBLP:journals/corr/abs-1105-5975}.

\par The novel idea of an action-dependent state scenario was introduced in the work of Wiessman \cite{Tsachy-Weissman}. In his paper, he considered a point-to-point channel where the encoder is allowed to take an action that can affect the channel's states. He characterized the capacity for the case where the channel inputs are allowed to depend causally or non-causally on the state sequence. Furthermore, the Gaussian case for this channel was also introduced, however, only an achievable scheme was given and the capacity for this case remained unsolved.

\par In this work, we consider a MAC with a common and private message along with CSI at one encoder. However in our setting, this encoder can now take an action that affects the states. This  generalizes the results presented in the works of Somekh-Baruch, Shamai and Verdu \cite{Baruch-Shamai-Verd} as well as Wiessman \cite{Tsachy-Weissman}. We particularly focus on analyzing the Gaussian case for our channel model. We find the capacity region and compute it. In the process, the new results obtained help us find the capacity expression for the Gaussian point-to-point action-dependent channel left open in \cite{Tsachy-Weissman}. A similar result was also derived by obtaining a correspondence between the action-dependent point-to-point channel to the GGP MAC with only a common message. This same correspondence was also obtained simultaneously and independently by Choudhuri and Mitra in \cite{Choudhuri-Mitra}. Furthermore, we investigate the dual relationships between channel coding and source coding problems with actions. We establish a duality between our Action-MAC to the rate distortion model known as ``Successive Refinement with Actions'', presented in \cite{Khiang-Asnani-Weissman} and explore the similarities between them. We show how using the results gained from the solution of the Action-MAC assists us when analyzing the ``Successive Refinement with Actions'' setting and vice versa. We obtain a set of duality principles which help predict the outcome of one problem given the other. These set of principle also help establish more dualities between known channel coding and source coding problems with actions.

\par The remainder of the paper is organized as follows. Section \ref{Notation} presents the mathematical notation used in this paper and the exact formulation of the Action-MAC setting. In Section \ref{Main}, we state our capacity results, which include the capacity region of the Action-MAC, and discuss additional special cases that show consistency with previous works. Sections \ref{Achievability} and \ref{Converse} are devoted to describing the achievability coding scheme and the converse proof respectively. Section \ref{GaussMac} examines the Gaussian case for the Action MAC, where we find the capacity region for this specific setting and compute the region for different parameters. Here, we also find the capacity expression for the point-to-point model. In Section \ref{DualSection}, rate distortion coding duals are considered. We conclude in Section \ref{Conclusion} with a summary of this work.


\section{Notation and Problem Definition}\label{Notation}

Throughout the paper, random variables will be denoted by upper case letters, deterministic realizations or specific values will be denoted by lower case letters and calligraphic letters will denote the alphabets of the random variables. Let $x^n$ denote vectors of $n$ elements, i.e. $x^n=(x_1,x_2,...,x_n)$ and $x_i^j$ denote the $i-j+1$-tuple $(x_i,x_{i+1},...,x_j)$ when $j\geq i$ and an empty set otherwise. The probability distribution function of $X$, the joint distribution function of $X$ and $Y$ and the conditional distribution of $X$ given $Y$ will be denoted by $P_{X}$, $P_{X,Y}$ and $P_{X|Y}$, respectively.

\par We consider a channel coding MAC with action-dependent states, where the states are known non-causally at one encoder, as illustrated in Fig. \ref{channel1}. The Action-MAC setting consists of two transmitters (encoders) and one receiver (decoder). Let $n$ denote the block length and $\mathcal{A}, \mathcal{S}, \mathcal{X}_1, \mathcal{X}_2$ and $\mathcal{Y }$ be finite sets which denote the actions, states, the uninformed encoder's inputs, the informed encoder's inputs, and the outputs, respectively. A state information channel is described by a triple $(\mathcal{A},P_{S|A},\mathcal{S})$ and is assumed to be memoryless with transition probabilities:
\begin{equation}
p(s_i|a^i,s^{i-1},m_1,m_2)=p(s_i|a_i).
\end{equation}
\begin{definition}
A $((2^{nR_1},2^{nR_2}),n)$ code for the channel in Fig. \ref{channel1} consists of two sets of integers, $\mathcal{M}_1 = \{1,2,...,2^{nR_1}\}$ and $\mathcal{M}_2 = \{1,2,...,2^{nR_2}\}$, called message sets. An index is chosen uniformly and independently by the senders out of the message sets. The uninformed encoder selects a channel input sequence, $X_1^n = X_1^n(M_1)$. Given the messages $M_1,M_2$, an action sequence denoted  $A^n = A^n(M_1,M_2)$ is selected by the informed encoder. A state sequence, $S^n$, is then selected by the channel, with $A^n$ being the input chosen by the informed encoder. Next, a channel input sequence, $X_2^n = X_2^n(M_1,M_2,S^n)$, is selected. The output of the channel is denoted $Y^n$. The channel is characterized by the conditional probability $p(y_i|x_{1,i},x_{2,i},s_i)$ and is assumed to be memoryless. Therefore, both probabilities do not depend on the index $i$, i.e.
\begin{equation}
p(y_i,s_{i+1}|x_1^i,x_2^i,s^i)=p(y_i|x_{1,i},x_{2,i},s_i)p(s_{i+1}).
\end{equation}
The code is defined by the encoding functions:
\begin{equation}
f_1:\mathcal{M}_1\rightarrow \mathcal{X}_1^n\\
\end{equation}
and
\begin{equation}
f_2:\mathcal{M}_1\times\mathcal{M}_2\times\mathcal{S}^n\rightarrow \mathcal{X}_2^n,
\end{equation}
an action encoder
\begin{equation}
f_A:\mathcal{M}_1\times\mathcal{M}_2\rightarrow \mathcal{A}^n
\end{equation}
and a decoding function
\begin{equation}
g:\mathcal{Y}^n\rightarrow (\hat{\mathcal{M}_1}\times\hat{\mathcal{M}_2}).
\end{equation}
\end{definition}
We define the average probability of error for the $((2^{nR_1},2^{nR_2}),n)$ code as follows:
\begin{equation}
P_e^{(n)} = \frac{1}{2^{n(R_1+R_2)}}\sum_{m_1,m_2 \in \mathcal{M}_1 \times \mathcal{M}_2} \Pr\{g(Y^n)\neq (m_1,m_2)|(m_1,m_2) \text{ sent}\}.
\end{equation}
A pair rate $(R_1,R_2)$ is achievable if there exists a sequence of codes $(2^{nR_1},2^{nR_2},n)$ s.t. $P^{(n)}_e \rightarrow 0$.\\
The capacity region is the closure of all achievable rates.


\section{Capacity Results for the Action-MAC}\label{Main}
\subsection{Capacity Region}
The following theorem provides an expression for the capacity region of the MAC with action-dependent state information at one encoder channel, for finite alphabets $\mathcal{A},\mathcal{S},\mathcal{X}_1,\mathcal{X}_2$:
\begin{theorem} \label{TheoremMain}
The capacity region of the MAC with action-dependent state information at one encoder, as shown in Fig. \ref{channel1}, is the closure of the set that contains all the rates that satisfy
\begin{eqnarray}
R_2&\leq& I(U;Y|X_1)-I(U;S|A,X_1)\nonumber\\
R_1+R_2&\leq& I(X_1,U;Y)-I(X_1,U;S|A), \label{CapacityRegion}
\end{eqnarray}
for some joint probability distribution of the form
\begin{equation}
P_{A,S,U,X_1,X_2,Y} = P_{X_1}P_{A|X_1}P_{S|A}P_{U|S,A,X_1}P_{X_2|X_1,S,U}P_{Y|X_1,X_2,S}\label{distribution}
\end{equation}
and  $|U|\leq |\mathcal{A}||\mathcal{S}||\mathcal{X}_1||\mathcal{X}_2|+1$.
\end{theorem}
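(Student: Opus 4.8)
The plan is to prove the two inclusions separately: a direct part built on a three--layer random--coding and binning scheme, and a converse based on a Gel'fand--Pinsker--type single--letterization combined with the Csisz\'ar sum identity, both carried out ``conditioned on the action.'' For achievability, fix a distribution of the form \eqref{distribution} and a small $\epsilon>0$, and generate: $2^{nR_1}$ codewords $x_1^n(m_1)$ i.i.d.\ $\sim\prod_i P_{X_1}$; for each $m_1$ and each $m_2\in\{1,\dots,2^{nR_2}\}$ a single action codeword $a^n(m_1,m_2)$ i.i.d.\ $\sim\prod_i P_{A|X_1}(\cdot\,|x_{1,i}(m_1))$; and for each $(m_1,m_2)$ a bin of $2^{n\tilde R}$ codewords $u^n(m_1,m_2,k)$ i.i.d.\ $\sim\prod_i P_{U|A,X_1}(\cdot\,|a_i(m_1,m_2),x_{1,i}(m_1))$ with $\tilde R=I(U;S|A,X_1)+\epsilon$. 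The uninformed encoder sends $x_1^n(m_1)$; the informed encoder applies $a^n(m_1,m_2)$, observes $s^n$, finds (by the covering lemma, since $\tilde R>I(U;S|A,X_1)$) an index $k$ with $(x_1^n,a^n,u^n(m_1,m_2,k),s^n)$ jointly typical, and transmits $x_2^n\sim\prod_i P_{X_2|X_1,S,U}$; the decoder declares the unique $(\hat m_1,\hat m_2)$ for which some $k$ makes $(x_1^n(\hat m_1),u^n(\hat m_1,\hat m_2,k),y^n)$ jointly typical under $P_{X_1,U,Y}$.

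The decoder deliberately never recovers the action. After marginalizing $A$ out, a wrong $u^n$--codeword that keeps $m_1$ fixed is distributed $\sim P_{U|X_1}$, and one that changes $m_1$ is distributed $\sim P_{X_1}P_{U|X_1}$, each independent of $y^n$; the packing lemma then shows the error vanishes provided $R_2+\tilde R<I(U;Y|X_1)$ and $R_1+R_2+\tilde R<I(X_1,U;Y)$. Letting $\epsilon\to0$ and using $I(X_1;S|A)=0$ (which holds because $S-A-X_1$ under \eqref{distribution}), these collapse to exactly \eqref{CapacityRegion}. The key observation is that ``$(m_1,m_2)$ correct but $k$ wrong'' is \emph{not} a decoding error, which is what lets the binning cost be $I(U;S|A,X_1)$ and not something larger. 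Convexity of the region is obtained by incorporating a time--sharing variable into $U$ --- legitimate since $P_{S|A}$ is fixed, so $S$ stays independent of that variable given $A$ --- and the random generation of $X_2^n$ is de--randomized in the standard way into an admissible encoder $f_2$.

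For the converse, start from Fano's inequality, $nR_2\le I(M_2;Y^n|M_1)+n\epsilon_n$ and $n(R_1+R_2)\le I(M_1,M_2;Y^n)+n\epsilon_n$, and work throughout conditioned on $A^n$, using that $S^n$ is independent of the messages given $A^n$. I would take an auxiliary of the form $U_i=(M_2,A^n,Y^{i-1},S_{i+1}^n)$ together with $X_{1,i}=f_1(M_1)_i$, and invoke (a) channel and state memorylessness, which supply the conditional independences $S_i-A_i-(\cdot)$ and $Y_i-(X_{1,i},X_{2,i},S_i)-(\cdot)$, and (b) the Csisz\'ar sum identity applied with $(A^n,M_2)$ held fixed, which peels the term $I(U_i;S_i|A_i,X_{1,i})$ off each summand. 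This yields $R_2\le\frac1n\sum_i[I(U_i;Y_i|X_{1,i})-I(U_i;S_i|A_i,X_{1,i})]$ and the matching sum--rate bound, which a uniform time--sharing index absorbed into $U$ single--letterizes. One then verifies that the resulting $(X_1,A,S,U,Y)$ lies in the class \eqref{distribution} --- $S\perp X_1\mid A$ is immediate, and $Y\perp A\mid(X_1,S,U)$ holds precisely because $A$ is a function of $U$ --- and the cardinality bound $|\mathcal U|\le|\mathcal A||\mathcal S||\mathcal X_1||\mathcal X_2|+1$ follows from the Fenchel--Eggleston--Carath\'eodory theorem, preserving the joint law of $(A,S,X_1,X_2)$ and the two rate functionals.

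I expect the main obstacle to be exactly this action bookkeeping in the converse. In the action--free Generalized Gel'fand--Pinsker MAC the state is independent of the messages, so the classical identity $I(M_2;S^n|M_1)=0$ is available; here $S^n$ is \emph{produced} from $A^n(M_1,M_2)$ and is therefore correlated with the messages, so every step must carry the action as side information, and the auxiliary has to be chosen so that the Csisz\'ar telescoping still closes \emph{and} the single--letter law lands in \eqref{distribution} --- the latter forcing $A$ into $U$ to secure $Y\perp A\mid(X_1,S,U)$ --- all while the true channel input $X_{1,i}$, rather than $M_1$, is the quantity that ends up in the conditioning. Everything else reduces to routine random--coding estimates and standard cardinality and convexity arguments.
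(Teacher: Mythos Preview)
Your approach is essentially the paper's: a three--layer $X_1/A/U$ superposition with Gel'fand--Pinsker binning on the third layer for the direct part, and Fano plus a Csisz\'ar--sum single--letterization for the converse. Two remarks.

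\emph{Achievability.} Your choice to test only $(x_1^n,u^n,y^n)$ and deliberately not decode $A^n$ is a clean variant; the paper instead tests $(x_1^n,a^n,u^n,y^n)$, obtains packing exponents $I(A,U;Y|X_1)$ and $I(A,X_1,U;Y)$, and then observes (their Corollary~1) that absorbing $A$ into $U$ leaves the region unchanged --- so the two routes meet. Your marginalization argument (a wrong $u^n$ with $m_1$ fixed is $\sim P_{U|X_1}^n$ and independent of $y^n$) is exactly what makes the packing bounds come out as $I(U;Y|X_1)$ and $I(X_1,U;Y)$ directly.

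\emph{Converse.} There is one genuine gap. Your auxiliary $U_i=(M_2,A^n,Y^{i-1},S_{i+1}^n)$ omits $M_1$ (equivalently $X_1^{n\setminus i}$). After the Csisz\'ar identity the chain carries $H(Y_i\mid X_1^n,M_1,M_2,S_{i+1}^n,Y^{i-1},A^n)$ and $H(S_i\mid X_1^n,M_1,M_2,Y^{i-1},S_{i+1}^n,A^n)$; to rewrite these as $H(Y_i\mid X_{1,i},U_i)$ and $H(S_i\mid X_{1,i},A_i,U_i)$ you must put the leftover conditioning $(M_1,X_1^{i-1},X_{1,i+1}^n)$ into $U_i$. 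Dropping it would require $H(Y_i\mid X_{1,i},U_i^{\text{small}})\le H(Y_i\mid X_{1,i},M_1,X_1^{n\setminus i},U_i^{\text{small}})$, the reverse of ``conditioning reduces entropy,'' which is false in general. The paper accordingly takes $U_i=(X_1^{i-1},X_{1,i+1}^n,M_1,M_2,S_{i+1}^n,Y^{i-1},A^n)$. With that enlargement every step you outline goes through; the Markov $X_2-(X_1,S,U)-A$ still holds trivially because $A^n\subset U$, and your distribution check and cardinality argument are then the same as the paper's.
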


\begin{lemma}\label{lemma1}
The capacity region described in {\it Theorem 1}, given in (\ref{CapacityRegion}), is convex.
\end{lemma}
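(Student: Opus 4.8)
The plan is to prove convexity by the standard time-sharing argument, with the one twist that the time-sharing variable is absorbed into the auxiliary variable $U$ rather than kept as a separate coordinate. It suffices to show that the set of rate pairs satisfying (\ref{CapacityRegion}) for some distribution of the form (\ref{distribution}) is convex, since the closure of a convex set is convex. So fix $\lambda\in[0,1]$ and take two such pairs: $(R_1^{(1)},R_2^{(1)})$ realized by a distribution $P^{(1)}$ with auxiliary $U^{(1)}$, and $(R_1^{(2)},R_2^{(2)})$ realized by $P^{(2)}$ with auxiliary $U^{(2)}$. I would introduce a switch $Q\in\{1,2\}$, independent of the channel laws $P_{S|A}$ and $P_{Y|X_1,X_2,S}$, with $\Pr(Q=1)=\lambda$, and conditioned on $Q=q$ generate $(X_1,A,S,U^{(q)},X_2,Y)$ according to $P^{(q)}$. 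Since $P_{S|A}$ and $P_{Y|X_1,X_2,S}$ coincide under the two distributions, $Q$ only mixes the ``free'' conditionals $P_{X_1},P_{A|X_1},P_{U|S,A,X_1},P_{X_2|X_1,S,U}$.

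Next I would set $U=(U^{(Q)},Q)$ and check that the resulting joint law of $(A,S,U,X_1,X_2,Y)$ is again of the form (\ref{distribution}). The only structural requirements there are the channel-induced Markov relations $X_1-A-S$ and $(A,U)-(X_1,X_2,S)-Y$, and both survive time sharing because $S$ is drawn from $P_{S|A}$ and $Y$ from $P_{Y|X_1,X_2,S}$ regardless of the value of $Q$. In particular, letting the marginal of $X_1$ be the mixture $\lambda P^{(1)}_{X_1}+(1-\lambda)P^{(2)}_{X_1}$ is harmless: $Q$ is carried inside $U$, which sits downstream of $X_1$ in (\ref{distribution}) and is allowed to be correlated with it.

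It then remains to verify the two bounds in (\ref{CapacityRegion}) for $\lambda(R_1^{(1)},R_2^{(1)})+(1-\lambda)(R_1^{(2)},R_2^{(2)})$ under this distribution. Using $S-A-(X_1,Q)$, which gives $I(Q;S|A,X_1)=0$, the chain rule yields
\begin{equation}
I(U;Y|X_1)-I(U;S|A,X_1)=I(Q;Y|X_1)+\sum_{q}\Pr(Q=q)\big[I(U^{(q)};Y|X_1,Q=q)-I(U^{(q)};S|A,X_1,Q=q)\big]\ \ge\ \lambda R_2^{(1)}+(1-\lambda)R_2^{(2)},
\end{equation}
where the inequality uses $I(Q;Y|X_1)\ge 0$ and the fact that, conditioned on $Q=q$, the bracketed quantity is exactly the expression bounding $R_2$ in (\ref{CapacityRegion}) evaluated under $P^{(q)}$, hence at least $R_2^{(q)}$. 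An entirely parallel expansion, additionally using $I(Q;S|A)=I(X_1;S|A,Q)=0$, gives $I(X_1,U;Y)-I(X_1,U;S|A)\ge\lambda(R_1^{(1)}+R_2^{(1)})+(1-\lambda)(R_1^{(2)}+R_2^{(2)})$. Thus the convex combination lies in the region; if the alphabet of $U=(U^{(Q)},Q)$ is too large, I would finish by re-running the cardinality-reduction (support lemma) argument used to establish the bound on $|U|$ in Theorem \ref{TheoremMain}, which restores $|U|\le|\mathcal{A}||\mathcal{S}||\mathcal{X}_1||\mathcal{X}_2|+1$ without weakening the rate bounds.

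The main obstacle is really just the bookkeeping in the second step: making sure that merging $Q$ into $U$ does not violate the prescribed factorization order $P_{X_1}P_{A|X_1}P_{S|A}P_{U|S,A,X_1}P_{X_2|X_1,S,U}P_{Y|X_1,X_2,S}$. The point to get right is that a $Q$-dependent input distribution $P_{X_1}$ causes no inconsistency, precisely because $U$ comes after $X_1$ in this order and is permitted to be correlated with it; everything else is a mechanical application of the chain rule together with the Markov relations $X_1-A-S$ and $(A,U)-(X_1,X_2,S)-Y$.
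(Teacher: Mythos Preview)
Your proposal is correct and follows essentially the same time-sharing argument as the paper's proof in Appendix~\ref{LemmaProof}: introduce a binary switch $Q$, absorb it into the auxiliary as $\tilde{U}=(U,Q)$, and verify that the mixed distribution still has the factorization (\ref{distribution}). If anything you are slightly more careful than the paper, correctly treating $I(Q;Y|X_1)$ as a nonnegative slack term rather than asserting equality, and explicitly flagging the cardinality clean-up at the end.
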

The proof of Lemma \ref{lemma1} is given in appendix \ref{LemmaProof}.

In the following corollaries we present an alternative representation for the capacity region given in Theorem \ref{TheoremMain}, which provides a more intuitive view. Furthermore, we derive the corner points for the Action-MAC rate region and the distribution of rate resources.
\begin{corollary}\label{intuitiveRegion}
The following region is equivalent to the one presented in {\it Theorem 1}:
\begin{eqnarray}
R_2&\leq& I(A,U;Y|X_1)-I(U;S|X_1,A)\nonumber\\
R_1+R_2&\leq& I(X_1,A,U;Y)-I(X_1,U;S|A)\label{Corollary1}.
\end{eqnarray}
\begin{proof}
It is clear that the region (\ref{Corollary1}) contains the one in (\ref{CapacityRegion}) since $I(U;Y|X_1)\leq I(A,U;Y|X_1)$ and $I(X_1,U;Y)\leq I(X_1,A,U;Y)$. For the converse, we have:
\begin{align}
I(A,U;Y|X_1)-I(U;S|X_1,A)&=I(A,U;Y|X_1)-I(A,U;S|X_1,A)\nonumber\\
                         &=I(\tilde{U};Y|X_1)-I(\tilde{U};S|X_1,A),
\end{align}
taking $\tilde{U}=(A,U)$, where $(X_1,S,A,\tilde{U},X_2,Y)$ satisfies the same joint distribution relations as $(X_1,S,A,U,X_2,Y)$ .
\end{proof}
\end{corollary}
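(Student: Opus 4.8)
The plan is to prove set equality by establishing the two inclusions separately, holding the underlying joint law fixed in each direction. The inclusion of the region in (\ref{CapacityRegion}) inside the region in (\ref{Corollary1}) is the routine direction: for a fixed distribution of the form (\ref{distribution}) the subtracted terms are \emph{identical} in the two characterizations, since $I(U;S|A,X_1)=I(U;S|X_1,A)$ and the sum-rate penalty $I(X_1,U;S|A)$ appears verbatim in both, while the positive terms only grow when $A$ is adjoined, i.e. $I(U;Y|X_1)\le I(A,U;Y|X_1)$ and $I(X_1,U;Y)\le I(X_1,A,U;Y)$ by the chain rule and non-negativity of conditional mutual information. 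Hence every rate pair admissible under (\ref{CapacityRegion}) for a given distribution is also admissible under (\ref{Corollary1}) for the same distribution.

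For the reverse inclusion I would introduce the enlarged auxiliary $\tilde U=(A,U)$ and show that the bounds in (\ref{Corollary1}) written for $(A,U)$ coincide with the bounds in (\ref{CapacityRegion}) written for $\tilde U$. Concretely, $I(A,U;Y|X_1)=I(\tilde U;Y|X_1)$ and $I(X_1,A,U;Y)=I(X_1,\tilde U;Y)$ hold by the definition of $\tilde U$, whereas for the penalties one uses that adjoining an already-conditioned variable contributes nothing: $I(U;S|X_1,A)=I(A,U;S|X_1,A)=I(\tilde U;S|X_1,A)$ and $I(X_1,U;S|A)=I(X_1,A,U;S|A)=I(X_1,\tilde U;S|A)$, each because $I(A;S|\cdots,A)=0$. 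Substituting these four identities turns the pair of bounds in (\ref{Corollary1}) into exactly the pair in (\ref{CapacityRegion}) with $U$ replaced by $\tilde U$.

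The main obstacle, and the step that actually requires care, is confirming that the enlarged tuple $(X_1,A,S,\tilde U,X_2,Y)$ is still a legitimate member of the family (\ref{distribution}), so that the rewritten rate pair genuinely lies in the region of Theorem \ref{TheoremMain}. I would verify the factorization term by term: $P_{X_1}$, $P_{A|X_1}$, $P_{S|A}$ and $P_{Y|X_1,X_2,S}$ are untouched because $X_1,A,S,Y$ and the channel law do not involve the auxiliary; the factor $P_{\tilde U|S,A,X_1}$ is valid since $\tilde U=(A,U)$ has its $A$-coordinate deterministically fixed by the conditioning and its $U$-coordinate drawn from the original $P_{U|S,A,X_1}$; and crucially $P_{X_2|X_1,S,\tilde U}=P_{X_2|X_1,S,U}$ because in the original law $X_2$ depends on the past only through $(X_1,S,U)$, so additionally conditioning on $A$—now a component of $\tilde U$—leaves the conditional unchanged. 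This last point is precisely where the Markov structure $P_{X_2|X_1,S,U}$ of (\ref{distribution}) is essential: it guarantees that absorbing $A$ into the auxiliary creates no new dependence of $X_2$ on the action beyond what already passes through $U$. With the factorization verified, the rewritten point lies in the region defined by (\ref{CapacityRegion}), the reverse inclusion follows, and the two regions coincide.
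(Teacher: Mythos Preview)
Your proof is correct and follows essentially the same approach as the paper: the forward inclusion via $I(U;Y|X_1)\le I(A,U;Y|X_1)$ and $I(X_1,U;Y)\le I(X_1,A,U;Y)$, and the reverse inclusion via the substitution $\tilde U=(A,U)$. You give more detail than the paper does on why $(X_1,A,S,\tilde U,X_2,Y)$ remains in the admissible family (\ref{distribution}); in particular, your explicit check that $P_{X_2|X_1,S,\tilde U}=P_{X_2|X_1,S,U}$ spells out what the paper merely asserts in passing.
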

Furthermore, notice that if we rearrange the expressions in (\ref{Corollary1}) by using the chain rule and the Markov $X_1-A-S$, we can express the capacity region as:
\begin{eqnarray}
R_2 &\leq& I(A;Y)-I(X_1;Y) + I(X_1,U;Y|A) - I(X_1,U;S|A)\nonumber\\
R_1 + R_2 &\leq& I(A;Y) + I(X_1,U;Y|A) - I(X_1,U;S|A).\label{Intuition}
\end{eqnarray}

This result provides intuition, and makes immediate sense. First, the informed encoder transmits information using the action sequence $A$ at rate $I(A;Y)$. This first transmission is decoded and used at the decoder to decode a second transmission; hence the conditioning. Next, by Gel'fand-Pinsker given $A$: $X_1$ and $U$ can be decoded. Namely, we notice the same expression, $I(A;Y) + I(X_1,U;Y|A) - I(X_1,U;S|A)$, appears in the bound on $R_2$ as well as the bound on $R_1+R_2$. However, in the expression bounding $R_2$ we omit the rate $I(X_1;Y)$, which is the rate used by the uninformed user.

\begin{corollary}\label{corollary2}
Another presentation of the capacity region in Theorem \ref{TheoremMain} is:
\begin{eqnarray}
R_2 &\leq& I(A;Y|X_1) + I(Y;U|A,X_1) - I(S;U|A,X_1)\nonumber\\
R_1 + R_2 &\leq& I(X_1;Y)+I(A;Y|X_1)+ I(Y;U|A,X_1) - I(S;U|A,X_1),\label{DualCap}
\end{eqnarray}
which follows from a different reorganization of the region presented in Corollary \ref{intuitiveRegion} by using the chain rule together with the Markov  property $X_1-A-S$.
\end{corollary}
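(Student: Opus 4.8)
The plan is to obtain (\ref{DualCap}) purely as an algebraic re-expression of the region in Corollary \ref{intuitiveRegion}: no fresh achievability scheme or converse argument is needed, since it suffices to display a chain of \emph{equalities} between the right-hand sides of (\ref{Corollary1}) and (\ref{DualCap}) that is valid for every joint law of the form (\ref{distribution}). Because each step is an identity, the resulting equivalence automatically holds in both directions.

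First I would treat the bound on $R_2$. Starting from $I(A,U;Y|X_1)-I(U;S|X_1,A)$, I apply the chain rule to the first term, $I(A,U;Y|X_1)=I(A;Y|X_1)+I(U;Y|A,X_1)$, and merely relabel the second term as $I(U;S|X_1,A)=I(S;U|A,X_1)$ (reordering the conditioning variables and the two arguments of the mutual information). This already reproduces the first inequality of (\ref{DualCap}), and notably it uses no Markov structure at all.

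Next, for the sum-rate constraint I would expand $I(X_1,A,U;Y)$ by the chain rule as $I(X_1;Y)+I(A;Y|X_1)+I(U;Y|A,X_1)$. The only nontrivial point is to show $I(X_1,U;S|A)=I(S;U|A,X_1)$. Here I again use the chain rule, $I(X_1,U;S|A)=I(X_1;S|A)+I(U;S|A,X_1)$, and then invoke the Markov chain $X_1-A-S$, which is built into the factorization $P_{X_1}P_{A|X_1}P_{S|A}$ in (\ref{distribution}) and forces $I(X_1;S|A)=0$. Substituting gives $I(X_1,U;S|A)=I(U;S|A,X_1)=I(S;U|A,X_1)$, hence the second inequality of (\ref{DualCap}), which completes the equivalence.

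The main — indeed the only — obstacle is to be sure that the Markov relation $X_1-A-S$ genuinely holds across the entire feasible set (\ref{distribution}); this is immediate, since in that product form the state $S$ depends on everything in the past only through the action $A$. Everything else is routine bookkeeping with the chain rule, and no auxiliary result (convexity, cardinality bounds, etc.) is required.
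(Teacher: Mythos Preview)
Your proposal is correct and follows essentially the same approach as the paper: the paper's justification of Corollary~\ref{corollary2} is nothing more than the chain rule applied to $I(A,U;Y|X_1)$ and $I(X_1,A,U;Y)$ together with the Markov relation $X_1-A-S$ to kill the term $I(X_1;S|A)$, which is precisely what you do.
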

\par This is another intuitive perception for our rate region. Here, we can recognize that the uninformed user sends information at rate $I(X_1;Y)$. This transmission is then decoded at the decoder and used as side information. Next, the informed user sends information about the action at a rate $I(A;Y|X_1)$ (since $X_1$ was decoded and is now known at the decoder - hence the conditioning). Finally, now that $(X_1,A)$ are known at the decoder and hence conditioned upon, the rest of the information is sent in a Gel'fand Pinsker-like scheme at rate $I(U;Y|X_1,A) - I(U;S|X_1,A)$. Namely, we can  look at the conditioned expressions of the region as standard expressions of a ``MAC with common message and state information known to one encoder'' channel, presented in \cite{Baruch-Shamai-Verd}, i.e.
\begin{eqnarray}
R_2&\leq& I(U;Y|X_1)-I(U;S|X_1)\nonumber\\
R_1+R_2&\leq&I(X_1,U;Y)-I(X_1,U;S)\nonumber
\end{eqnarray}
with a supplement of side information $A$.
\par This presentation also helps us to recognize the corner points of our region. The region is illustrated in Fig. \ref{CornerFig}. The corner points $(R_1,R_2)$ are presented in Table \ref{CornerPointsTable}.

\begin{table}[h!]
\caption{Corner points $(R_1,R_2)$ of the Action-MAC setting}\label{CornerPointsTable}
\hspace{15mm}\begin{tabular}{|c|c|}
\hline
\textbf{$R_1$} & \textbf{$R_2$}\\ \hline\hline
 $I(X_1;Y)+I(A;Y|X_1)+ I(Y;U|A,X_1) - I(S;U|A,X_1)$ & $0$\\
$I(X_1;Y)$ & $I(A;Y|X_1) + I(U;Y|X_1,A) - I(U;S|X_1,A)$\\
\hline
\end{tabular}
\end{table}

\begin{figure}[h!]
\begin{center}
\begin{psfrags}
    \psfragscanon
    \psfrag{b}{$R_1$}
    \psfrag{a}{$R_2$}
    \psfrag{f}{$I(X_1;Y)$}
    \psfrag{h}{$I(A;Y) + I(X_1,U;Y|A) - I(X_1,U;S|A)$}
    \psfrag{d}{$ I(A;Y|X_1) + I(U;Y|X_1,A) - I(U;S|X_1,A)$}
\includegraphics[scale = 0.8]{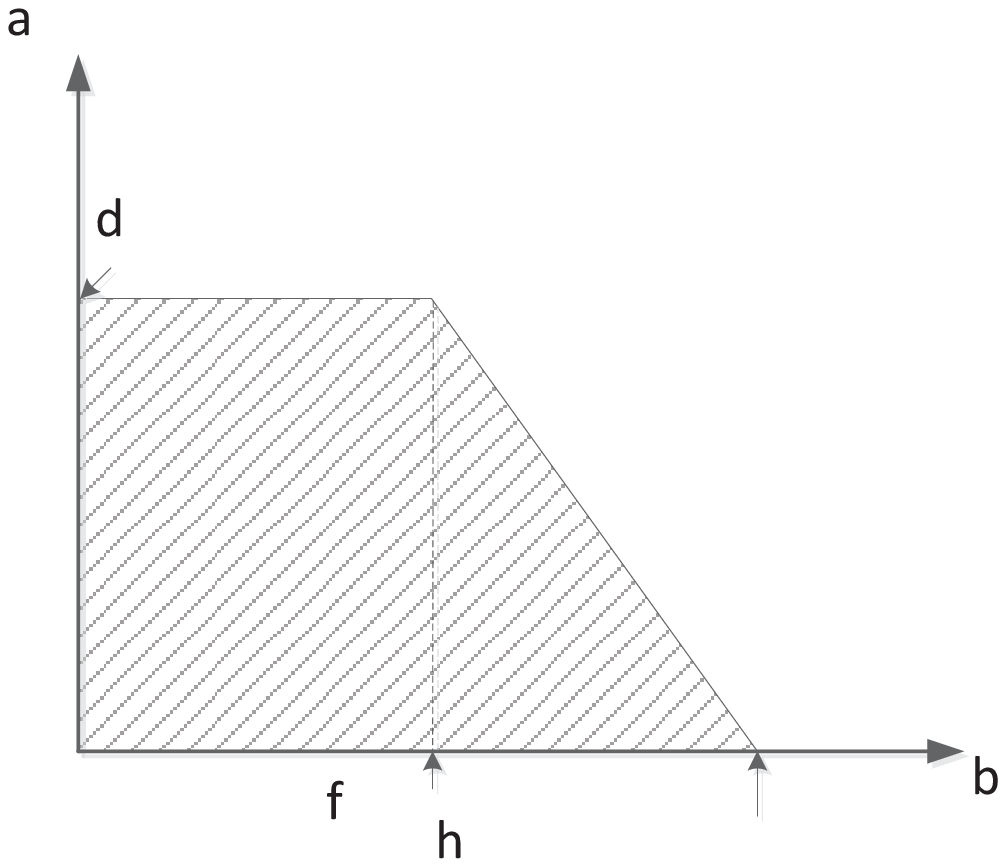}
\caption{The rate region for the Action-MAC} \label{CornerFig}
\psfragscanoff
\end{psfrags}
\end{center}
\end{figure}


\subsection{Special cases}
Before proving the theorem, let us examine some special cases in order to gain some insight and persuade ourselves that the following region is consistent with previous results.
\par {\it Case 1: The point-to-point channel with action-dependent states.} For this case, we take $R_1=0$, $P_{A|X_1}=P_A$, $P_{U|S,A,X_1}=P_{U|S,A}$, $P_{X_2|X_1,S,U}=P_{X_2|S,U}$. Hence, the region in Theorem \ref{TheoremMain} becomes:
\begin{eqnarray}
R_2&\leq& I(U;Y)-I(U;S|A)\nonumber
\end{eqnarray}
for a probability distribution of the form $P_AP_{S|A}P_{U|S,A}P_{X_2|S,U}P_{Y|X_2,S}$.
This result is the action-dependent point-to-point channel capacity discussed in  \cite{Tsachy-Weissman}.

\par {\it Case 2: MAC with a common message and a private message known to one encoder}, i.e. $|\mathcal{S}|=1$. For this case, we take $U=X_2$ and we have $I(U;S|A)=0$ and $I(X_1,U;S|A)=0$. Furthermore, there is no point in allocating resources to the action sequence. As a result, we obtain the following region:
\begin{eqnarray}
R_2&\leq&I(X_2;Y|X_1)\nonumber\\
R_1+R_2&\leq&I(X_1,X_2;Y),
\end{eqnarray}
for some joint probability distribution of the form $P_{X_1}P_{X_2|X_1}P_{Y|X_1,X_2}$.
\par {\it Case 3: Cooperative multiple-access encoding with states available at one transmitter} \cite{Baruch-Shamai-Verd}. Assume that we have a malfunction at the action encoder, i.e. we cannot choose an action that affects the formation of the states, but the Encoder 2 still knows the states noncausally. For this case $A = \emptyset$; therefore, the following expressions, $I(U;S|A)$ and $I(X_1,U;S|A)$, become $I(U;S)$ and $I(X_1,U;S)$, respectively. Hence, we have the following capacity region
\begin{eqnarray}
R_2&\leq& I(U;Y|X_1)-I(U;S|X_1)\nonumber\\
R_1+R_2&\leq&I(X_1,U;Y)-I(X_1,U;S),
\end{eqnarray}
where the probability distribution $P_{X_1}P_{A|X_1}P_{S|A}P_{U|S,A,X_1}P_{X_2|X_1,S,U}P_{Y|X_1,X_2,S}$ reduces to $P_{X_1}P_{S}P_{U|S,X_1}P_{X_2|X_1,S,U}P_{Y|X_1,X_2,S}$.
This is, indeed, the result of the Generalized Gel'fand Pinsker channel, introduced in \cite{Baruch-Shamai-Verd}.


\section{Proof of Achievability}\label{Achievability}
In this section we prove the achievability part of Theorem \ref{TheoremMain}. Throughout the achievability proof we use the definition of a strong typical set \cite{gamal2011network}. The set $\mathcal{T}_\epsilon^{(n)}(X,Y,Z)$ of $\epsilon$-typical $n$-sequences is defined by $\{(x^n,y^n,z^n):\frac{1}{n}|N(x,y,z|x^n,y^n,z^n)-p(x,y,z)|\leq \epsilon\cdot p(x,y,z)$ $\forall(x,y,z)\in \mathcal{X}\times\mathcal{Y}\times\mathcal{Z}\}$, where $N(x,y,z|x^n,y^n,z^n)$ is the number of appearances of $(x,y,z)$ in the $n$-sequence $(x^n,y^n,z^n)$.
\par The main idea of the proof is based on a random coding scheme, where the encoding of the messages is done in three parts. First, the uninformed encoder transmits $X_1^n(m_1)$ from it's code-book at rate $I(X_1;Y)$. Secondly, the informed encoder chooses an action sequence $A^n$. As a result, a state $S^n$ is generated. The action sequence is then sent at rate $I(A;Y|X_1)$, where $X_1$ is treated as side information known at both the informed encoder and the decoder. Next, the informed encoder can transmit at rate $I(U;Y|A,X_1) - I(U;S|A,X_1)$ using a Gel'fand Pinsker scheme; i.e. choosing a codeword $U^n$ from a subcode-book $C(m_1,m_2)$, such that it is jointly typical with $(X_1^n,A^n,S^n)$, where$(X_1^n,A^n)$ were already decoded and are now used as side information at the encoder and decoder.

\begin{proof}
Fix a joint distribution of $P_{A,S,U,X_1,X_2,Y} = P_{X_1}P_{A|X_1}P_{S|A}P_{U|S,A,X_1}P_{X_2|X_1,S,U}P_{Y|X_1,X_2,S}$ where $P_{Y|X_1,X_2,S}$ is given by the channel.

\par{\it Code Construction}: For the uninformed encoder, generate $2^{nR_1}$ independent codewords, $X_1^n(m_1)$, $m_1\in\{1,2,...,2^{nR_1}\}$, where each element is i.i.d. $\sim\prod_{i=1}^n p(x_{1,i})$. For the informed encoder, generate $2^{n(R_1+R_2)}$ action sequences, $A^n(m_1,m_2) \sim\prod_{i=1}^n p(a_i|x_{1,i})$. In addition, for each set of messages, $(m_1,m_2)$, where $m_1 \times m_2 \in \{1,2,...,2^{n(R_1+R_2)}\}$, generate $2^{n(R_1+R_2)}$ bins. Next, generate randomly $2^{n\tilde{R}}$ codewords, $u(1),u(2),...,u(2^{n\tilde{R}})$, each according to $\sim\prod_{i=1}^n p(u_i|a_i,x_{1,i})$. Distribute the codewords uniformly among the bins. Now, each message set $(m_1,m_2)$ has a subcode-book denoted $c(m_1,m_2)$ of $2^{n(\tilde{R}-(R_1+R_2))}$ codewords. The code construction is illustrated in Fig. \ref{AchievadilityFig}.

\begin{figure}[h!]
\begin{center}
\begin{psfrags}
    \psfragscanon
    \psfrag{A}[][][0.9]{$2^{nR_1}$ codewords $X_1^n$}
    \psfrag{B}[][][0.9]{}
    \psfrag{C}[][][0.9]{\ \ \ \ \ \ \ \ $2^{n(\tilde{R}-(R_1+R_2))}$ codewords $U^n$}
    \psfrag{D}[][][0.9]{$2^{nR_1}-1$}
    \psfrag{E}[][][0.9]{$2^{nR_1}$}
    \psfrag{F}[][][0.9]{\ \ \ \ \ \ \ \ \ \ \ $2^{nR_2}$ Bins}
    \centerline{\includegraphics[scale = .8]{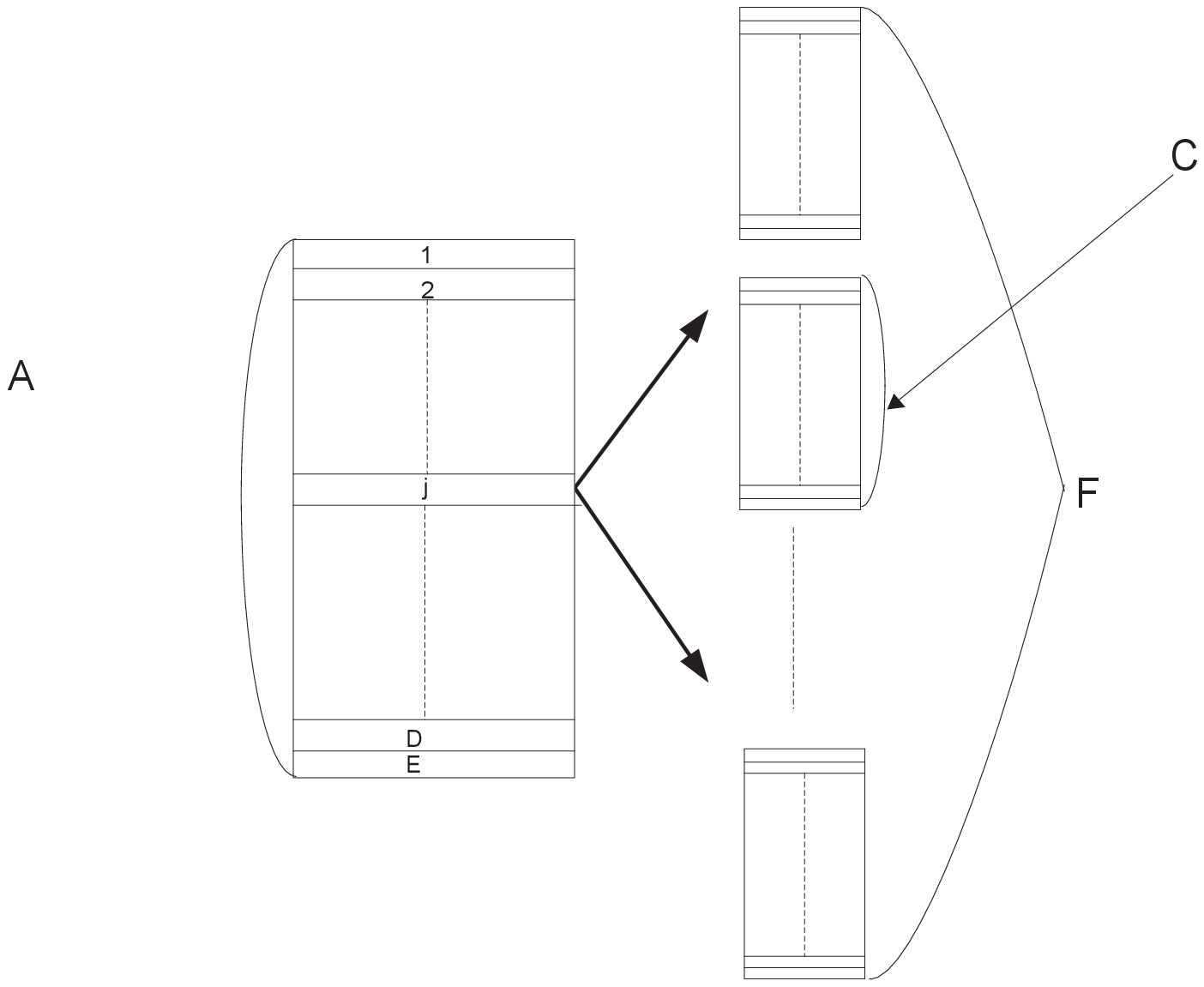}}
    \caption{Encoder 1 chooses the codeword $X_1^n(m_1)$ from its code-book of size $2^{nR_1}$. For each codeword, $X_1^n(m_1)$, we have $2^{nR_2}$ bins. We choose the second bin according to $m_2$ (notice that we have a total of $2^{nR_1}\times2^{nR_2}=2^{n(R_1+R_2)}$ bins, one for each message set, $(m_1,m_2)$). Now, we look in our bin, $(m_1,m_2)$, of $2^{n(\tilde{R}-(R_1+R_2))}$ codewords for a specific codeword, $U^n(m_1,m_2,k)$, such that it is jointly typical with $(X_1^n,S^n,A^n)$.} \label{AchievadilityFig}
\end{psfrags}
\end{center}
\end{figure}

\par {\it Encoding}: The uninformed encoder transmits $X_1^n(m_1)$. Next, the informed encoder chooses an action sequence $A^n(m_1,m_2)$. As a result, a state, $S^n$, is generated. Now, the informed encoder chooses a codeword $U^n(m_1,m_2,k)$ from bin $(m_1,m_2)$ with the smallest lexicographical order such that it is jointly typical with $(X_1^n,S^n,A^n)$, i.e.
\begin{equation}
(U^n(m_1,m_2,k),X_1^n(m_1),S^n,A^n)\in \mathcal{T}_{\epsilon}^{(n)}(U,X_1,S,A).
\end{equation}
If such a codeword, $U^n$, does not exist, namely, among the codewords in the bin none is jointly typical with $(X_1^n,S^n,A^n)$ , choose an arbitrary $U^n$ from the bin, $U^n(k)\in C(m_1,m_2)$ (in such a case the decoder will declare an error).
The input sequence to the channel, $X_2^n$, is according to $\sim p(x_2|x_1,s,u)$ and the encoder transmitter transmits $X_{2,i}$ at time $i \in [1,n]$.

\par {\it Decoding}: The Decoder looks for the smallest value of $\hat{m}_1,\hat{m}_2$ for which there exits a $\hat{k}$ such that:
\begin{equation}
(A^n(\hat{m}_1,\hat{m}_2),U^n(\hat{m}_1,\hat{m}_2,\hat{k}),X_1^n(\hat{m}_1),Y^n)\in \mathcal{T}_{\epsilon}^{(n)}
\end{equation}
If none or more than one such triplet is found, an error is declared. The estimated messages sent are $\hat{m}_1$ and $\hat{m}_2$.

\par{\bf Analysis of the probability of error:}\\
Without loss of generality, we can assume that messages $(m_1,m_2)=(1,1)$ were sent.\\
An error occurs in the following cases. We define the events:
\begin{eqnarray}
E_1 &=& \{\forall U^n\in C(1,1), (X_1^n,U^n,S^n,A^n)\notin \mathcal{T}_{\epsilon}^{(n)}(X_1,U,S,A)\},\\
E_2 &=& \{\forall U^n\in C(1,1), (A^n,X_1^n,U^n,Y^n)\notin \mathcal{T}_{\epsilon}^{(n)}(A,X_1,U,Y)\},\\
E_3 &=& \{\exists \hat{m}_2\neq 1: (A^n(1,\hat{m}_2),X_1^n(1),U^n(1,\hat{m}_2,k),Y^n)\in \mathcal{T}_{\epsilon}^{(n)}(A,X_1,U,Y)\},\\
E_4 &=& \{\exists \hat{m}_1\neq 1: (A^n(\hat{m}_1,1),X_1^n(\hat{m}_1),U^n(\hat{m}_1,1,k),Y^n)\in \mathcal{T}_{\epsilon}^{(n)}(A,X_1,U,Y)\},\\
E_5 &=& \{\exists \hat{m}_1\neq 1, \hat{m}_2\neq 1: (a^n(\hat{m}_1,\hat{m}_2),X_1^n(\hat{m}_1),U^n(\hat{m}_1,\hat{m}_2,k),Y^n)\in \mathcal{T}_{\epsilon}^{(n)}(A,X_1,U,Y)\}.
\end{eqnarray}
Then, by the union of events bound:
\begin{align*}
P_e^{(n)} &= \Pr(E_1\cup E_2\cup E_3\cup E_4\cup E_5)\\
          &\leq P(E_1)+P(E_2)+P(E_3)+P(E_4)+P(E_5).
\end{align*}
Now, let us find the probability of each event:\\

\begin{itemize}
\item For the first error, note that we have $2^{n(\tilde{R}-(R_1+R_2))}$ codewords in each bin. In addition, the probability of $(X_1^n(m_1),A^n(m_1,m_2))\in \mathcal{T}_{\epsilon}^{(n)}(X_1,A)$ is almost surely 1, and $S^n$ is distributed $\sim p(s^n|a^n,x_1^n)=\prod_{i=1}^n p(s_i|a_i)$. Therefore, if the number of codewords in each bin is bigger than $2^{nI(U;S|X_1,A)}$, namely $\tilde{R}-(R_1+R_2) > I(U;S|X_1,A)$, then, according to the covering lemma \cite{gamal2011network}, at least one codeword, $U^n$, is jointly typical with $(X_1^n,S^n,A^n)$ with high probability. Hence, if the number of codewords in each subcode-book is more then $2^{nI(U;S|X_1,A)}$, then the error $P(E_1)\rightarrow 0$ as $n\rightarrow \infty$. Thus,
    \begin{equation}
    \Pr\Big{(}(U^n(1,1,k),X_1^n,S^n,A^n(1,1))\in \mathcal{T}_{\epsilon}^{(n)}|X_1^n(1),A^n(1,1)\in \mathcal{T}_{\epsilon}^{(n)}(U,X_1,S,A)\Big{)}=2^{-nI(U;S|X_1,A)}, \nonumber
    \end{equation}
    therefore for $\tilde{R}-(R_1+R_2)>I(U;S|X_1,A)$ and $1\leq k \leq 2^{n(\tilde{R}-(R_1+R_2))}$, the probability
    \begin{equation}
    \Pr\Big{(}\cap_k(U^n(1,1,k),X_1^n(1),S^n,A^n(1,1))\notin \mathcal{T}_{\epsilon}^{(n)}|X_1^n(1),A^n(1,1)\in \mathcal{T}_{\epsilon}^{(n)}\Big{)} \rightarrow 0. \nonumber
    \end{equation}

\item Consider the second error:
\begin{equation}
P(E_2) = \Pr\Big{(}(A^n,X_1,U^n,Y^n)\notin \mathcal{T}_{\epsilon}^{(n)}\Big{)}. \nonumber
\end{equation}
As for the first event, if we have $\tilde{R}-(R_1+R_2)> I(U;S|X_1,A)$, then the probability of $(A^n,U^n,X_1^n,S^n)\in \mathcal{T}_{\epsilon}^{(n)}(U,X_1,S,A)$ approaches 1. Furthermore, due to the Markov Lemma \cite{Tom-Cover} and the Law of Large Numbers, as well as the fact that $Y^n$ is distributed according to $\sim P_{Y|X_1,X_2,S}$, the probability of $(A^n,U^n,S^n,X_1^n,X_2^n,Y^n)\in \mathcal{T}_{\epsilon}^{(n)}(U,X_1,S,A,Y)$  is almost surely 1. Hence, $P(E_2)\rightarrow 0$ as $n\rightarrow \infty$.\\

\item For the third error, consider:
\begin{equation}
P(E_3)= \Pr\Big{(}(A^n(1,\hat{m}_2),X_1^n(1),U^n(1,\hat{m}_2,k),Y^n)\in \mathcal{T}_{\epsilon}^{(n)}(A,X_1,U,Y)\Big{)}. \nonumber
\end{equation}
To bound the probability of this event, note that:
\begin{align}
\Pr\Big{(}(A^n(1,\hat{m}_2),X_1^n(1),U^n(1,\hat{m}_2,k),Y^n)\in \mathcal{T}_{\epsilon}^{(n)}(A,X_1,U,Y)|\hat{m}_2\neq 1\Big{)}&\leq 2^{-nI(A,U;Y|X_1)}\nonumber\\
&\stackrel{(*)}{\leq} 2^{-nI(U;Y|X_1)}.
\end{align}
Therefore, using the union bound:
\begin{equation}
\Pr\Big{(}\cup_k \{(A^n(1,\hat{m}_2),X_1^n(1),U^n(1,\hat{m}_2,k),Y^n)\in \mathcal{T}_{\epsilon}^{(n)}(A,X_1,U,Y)\}|\hat{m}_2\neq 1\Big{)}\leq 2^{-n(I(U;Y|X_1)-(\tilde{R}-(R_1+R_2)))}.
\end{equation}
Let us take $\tilde{R}-(R_1+R_2)=I(U;S|X_1,A)+\epsilon$. It follows that the probability where $\hat{m}_2\neq m_2$ for which there exists a $k$ such that $(A^n(1,j),X_1^n(1),U^n(1,j,k),Y^n)\in \mathcal{T}_{\epsilon}^{(n)}(A,X_1,U,Y)$ vanishes if we have $R_2< I(U;Y|X_1)-I(U;S|A,X_1)-\epsilon$.

\item For the forth error, consider:
\begin{equation}
P(E_4)= \Pr\Big{(}(A^n(\hat{m}_1,1),X_1^n(\hat{m}_1),U^n(\hat{m}_1,1,k),Y^n)\in \mathcal{T}_{\epsilon}^{(n)}(A,X_1,U,Y)\Big{)}. \nonumber
\end{equation}
To bound the probability of this event, note that:
\begin{align}
\Pr\Big{(}(A^n(\hat{m}_1,1),X_1^n(\hat{m}_1),U^n(\hat{m}_1,1,k),Y^n)\in \mathcal{T}_{\epsilon}^{(n)}(A,X_1,U,Y)|\hat{m}_1\neq 1\Big{)}&\leq 2^{-nI(A,X_1,U;Y)}\nonumber\\
&\stackrel{(*)}{\leq} 2^{-nI(X_1,U;Y)}.
\end{align}
Therefore, using the union bound:
\begin{align*}
\Pr\Big{(}\cup_k \{(A^n(\hat{m}_1,1),X_1^n(\hat{m}_1),U^n(\hat{m}_1,1,k),Y^n)\in \mathcal{T}_{\epsilon}^{(n)}(A,X_1,U,Y)\}|\hat{m}_1\neq 1\Big{)}&\leq 2^{-n(I(X_1,U;Y)-(\tilde{R}-(R_1+R_2)))}\\
                                                                                         &= 2^{-n(I(X_1,U;Y)-I(U;S|X_1,A)-\epsilon)}\\
                                                                                         &\stackrel{(**)}{=} 2^{-n(I(X_1,U;Y)-I(X_1,U;S|A)-\epsilon)},
\end{align*}
so it follows that the probability that there exists an $\hat{m}_1\neq m_1$ for which there exists a $k$ such that $(A^n(l,1),X_1^n(l),U^n(l,1,k),Y^n)\in \mathcal{T}_{\epsilon}^{(n)}(A,X_1,U,Y)$ vanishes if we have $R_1< I(X_1,U;Y)-I(X_1,U;S|A)$.
\\
\item For the fifth term, consider:
\begin{equation}
P(E_5)= \Pr\Big{(}(A^n(\hat{m}_1,\hat{m}_2),X_1^n(\hat{m}_1),U^n(\hat{m}_1,\hat{m}_2,k),Y^n)\in \mathcal{T}_{\epsilon}^{(n)}(A,X_1,U,Y)\Big{)}. \nonumber
\end{equation}
To bound the probability of this event note that:
\begin{align*}
\Pr\Big{(}(A^n(\hat{m}_1,\hat{m}_2),X_1^n(\hat{m}_1),U^n(\hat{m}_1,\hat{m}_2,k),Y^n)\in \mathcal{T}_{\epsilon}^{(n)}(A,X_1,U,Y)|\hat{m}_1\neq 1,\hat{m}_2\neq 1\Big{)}&\leq 2^{-nI(A,X_1,U;Y)}\\
                  &\stackrel{(*)}{\leq} 2^{-nI(X_1,U;Y)}.
\end{align*}
Therefore, using the union bound:
\begin{align*}
\Pr\Big{(}\cup_k \{(A^n(\hat{m}_1,\hat{m}_2),X_1^n(\hat{m}_1),U^n(\hat{m}_1,\hat{m}_2,k),Y^n)\in \mathcal{T}_{\epsilon}^{(n)}(A,X_1,U,Y)\}&|\hat{m}_1\neq 1,\hat{m}_2\neq 1\Big{)}\\
&\leq 2^{-n(I(X_1,U;Y)-(\tilde{R}-(R_1+R_2)))}\\
                                                                                                 &= 2^{-n(I(X_1,U;Y)-I(U;S|X_1,A)-\epsilon)}\\
                                                                                                 &\stackrel{(**)}{=} 2^{-n(I(X_1,U;Y)-I(X_1,U;S|A)-\epsilon)},
\end{align*}
so it follows that the probability that there exists an $\hat{m}_1\neq m_1$ and $\hat{m}_2\neq m_2$ for which there exists a $k$ such that $(A^n(l,j),X_1^n(l),U^n(l,j,k),Y^n)\in \mathcal{T}_{\epsilon}^{(n)}(A,X_1,U,Y)$ vanishes if we have $R_1+R_2< I(X_1,U;Y)-I(X_1,U;S|A)$.
\end{itemize}
$(*)$ Follows from {\it Corollary 1}.\\
$(**)$ Follows from the Markov chain $X_1-A-S$.

Combining the results, we have shown that $P(E)\rightarrow 0$ as $n\rightarrow \infty$ if
\begin{eqnarray}
R_2&\leq& I(U;Y|X_1)-I(U;S|X_1,A)\nonumber\\
R_1+R_2&\leq& I(X_1,U;Y)-I(X_1,U;S|A)\nonumber.
\end{eqnarray}
\par The above bound shows that the average probability of error, which, by symmetry, is equal to the probability for an individual pair of codewords, $(m_1,m_2)$, averaged over all choices of code-books in the random code construction, is arbitrarily small. Hence, there exists at least one code, $((2^{nR_1},2^{nR_2}),n)$, with an arbitrarily small probability of error.
\end{proof}


\section{Proof of Converse}\label{Converse}
In the previous section, we  proved the achievability part of  Theorem \ref{TheoremMain}. In this section, we provide the upper bound on the capacity region of the MAC with action-dependent state information at one encoder, i.e. we give the proof of the converse for  Theorem \ref{TheoremMain}.
\begin{proof}
Given an achievable rate, $(R_1,R_2)$, we need to show that there exists a joint distribution of the form $P_{X_1}P_{A|X_1}P_{S|A}P_{U|S,A,X_1}P_{X_1|X_1,S,U}P_{Y|X_1,X_2,S}$ such that
\begin{eqnarray}
R_2&\leq& I(U;Y|X_1)-I(U;S|A,X_1)\nonumber\\
R_1+R_2&\leq& I(X_1,U;Y)-I(X_1,U;S|A)\nonumber.
\end{eqnarray}
Since $(R_1,R_2)$ is an achievable pair-rate, there exists a code, $(n, 2^{nR_1} , 2^{nR_2})$, with a probability
of error, $P^{(n)}_e$, arbitrarily small. By Fano's inequality,
\begin{equation}
H(M_1,M_2|Y^n)\leq n(R_1+R_2)P^{(n)}_e+H(P^{(n)}_e)\triangleq \epsilon_n,
\end{equation}
so we can say that $\epsilon_n \rightarrow 0$ as $P^{(n)}_e \rightarrow 0$. Furthermore,
 \begin{equation}
H(M_2|M_1,Y^n)\leq H(M_1,M_2|Y^n)\leq \epsilon_n.
\end{equation}
To bound the rate $R_2$ consider:

\begin{align*}
    nR_2 &= H(M_2)\\
    & \stackrel{(a)}{=} H(M_2|M_1)\\
    & =H(M_2|M_1)-H(M_2|M_1,Y^n)+H(M_2|M_1,Y^n)\\
    & =I(M_2;Y^n|M_1)+H(M_2|M_1,Y^n)\\
    & \stackrel{(b)}{\leq}I(M_2;Y^n|M_1)+n\epsilon_n\\
    & \stackrel{(c)}{=}I(M_2;Y^n|M_1,X_1^n(M_1))+n\epsilon_n\\
    & \stackrel{(d)}{=}I(M_2;Y^n|M_1,X_1^n) - I(M_1,M_2;S^n|X_1^n,A^n)+n\epsilon_n\\
    & \stackrel{(e)}{=} \sum_{i=1}^n I(M_2;Y_i|M_1,X_1^n,Y^{i-1}) - I(M_1,M_2;S_i|X_1^n,S_{i+1}^n,A^n)+n\epsilon_n\\
    & \stackrel{(f)}{=} \sum_{i=1}^n I(M_2,S_{i+1}^n,A^n;Y_i|M_1,X_1^n,Y^{i-1})-I(S_{i+1}^n,A^n;Y_i|X_1^n,M_2,Y^{i-1}) \\&-I(M_1,M_2,Y^{i-1};S_i|X_1^n,S_{i+1}^n,A^n) + I(Y^{i-1};S_i,A^n|M_1,X_1^n,M_2,S_{i+1}^n,A^n)+n\epsilon_n\\
    & \stackrel{(g)}{=} \sum_{i=1}^n I(M_2,S_{i+1}^n,A^n;Y_i|M_1,X_1^n,Y^{i-1}) -I(M_1,M_2,Y^{i-1};S_i|X_1^n,S_{i+1}^n,A^n)+n\epsilon_n\\
    & \stackrel{(h)}{\leq} \sum_{i=1}^n H(Y_i|X_{1,i})-H(Y_i|X_1^n,M_1,M_2,S_{i+1}^n,Y^{i-1},A^n)\\
    &-[H(S_i|X_1^n,S_{i+1}^n,A^n)- H(S_i|X_1^n,M_1,M_2,Y^{i-1},S_{i+1}^n,A^n)]+n\epsilon_n\\
    & \stackrel{(i)}{=}\sum_{i=1}^n H(Y_i|X_{1,i})-H(Y_i|X_{1,i},U_i)-[H(S_i|X_{1,i},A_i) - H(S_i|X_{1,i},U_i,A_i)]+n\epsilon_n\\
    & = \sum_{i=1}^n I(Y_i;U_i|X_{1,i})-I(U_i;S_i|X_{1,i},A_i)+n\epsilon_n,
\end{align*}
where\\
$(a)$ follows from the fact that $M_1$ and $M_2$ are independent,\\
$(b)$ follows from Fano's inequality,\\
$(c)$ follows from the fact that $X_1^n(M_1)$ is a function of $M_1$,\\
$(d)$ follows from the fact that $(M_1,M_2,X_1^n)-A^n-S^n$ form a Markov chain,\\
$(e)$ follows from the chain rule,\\
$(f)$ follows from the fact that\\ $I(M_2;Y_i|M_1,X_1^n,Y^{i-1})=I(M_2,S_{i+1}^n,A^n;Y_i|M_1,X_1^n,Y^{i-1})-I(S_{i+1}^n,A^n;Y_i|M_1,M_2,X_1^n,Y^{i-1})$\\ (chain rule) and $I(M_1,M_2;S^n|M_1,X_1^n,S_{i+1}^n,A^n)=I(M_1,M_2,Y^{i-1};S_i|S_{i+1}^n,X_1^n,A^n) - I(Y^{i-1};S_i,A^n|M_2,S_{i+1}^n,M_1,X_1^n,A^n) $ (chain rule and Markov relation),\\
$(g)$ follows from the fact that\\
 $\sum_{i=1}^n I(S_{i+1}^n,A^n;Y_i|M_2,M_1,X_1^n,Y^{i-1})= \sum_{i=1}^n I(Y^{i-1};S_i,A^n|M_2,M_1,X_1^n,S_{i+1}^n,A^n)$\\
 due to the Csisz´ar sum identity,\\
$(h)$ follows from the definition of mutual information and from the fact that conditioning reduces entropy,\\
$(i)$ follows from the choice of $U_i = (X_1^{i-1},X_{1,i+1}^n,M_1,M_2,S_{i+1}^n,Y^{i-1},A^n)$ and the Markov chain $S_i-A_i-(S_{i+1}^n,A_{i-1},A_{i+1}^n,X^n)$.\\
Hence, we have:
\begin{equation}
R_2\leq \frac{1}{n}\sum_{i=1}^n [I(Y_i;U_i|X_{1,i})-I(U_i;S_i|X_{1,i},A_i)]+\epsilon_n. \label{eq2}
\end{equation}

\par To bound the sum of rates, $R_1+R_2$, consider:
\begin{align*}
    n(R_1+R_2) &= H(M_1,M_2)\\
    & =H(M_1,M_2)-H(M_1,M_2|Y^n)+H(M_1,M_2|Y^n)\\
    & =I(M_1,M_2;Y^n)+H(M_1,M_2|Y^n)\\
    & \stackrel{(a)}{\leq}I(M_1,M_2;Y^n)+n\epsilon_n\\
    & \stackrel{(b)}{=}I(M_1,X_1^n(M_1),M_2;Y^n)+n\epsilon_n\\
    & \stackrel{(c)}{=}I(X_1^n,M_1,M_2;Y^n) - I(X_1^n,M_1,M_2;S^n|A^n)+n\epsilon_n\\
    & \stackrel{(d)}{=} \sum_{i=1}^n I(X_1^n,M_1,M_2;Y_i|Y^{i-1})- I(X_1^n,M_1,M_2;S_i|S_{i+1}^n,A^n)+n\epsilon_n\\
    & \stackrel{(e)}{=} \sum_{i=1}^n I(X_1^n,M_1,M_2,S_{i+1}^n,A^n;Y_i|Y^{i-1})-I(S_{i+1}^n,A^n;Y_i|X_1^n,M_1,M_2,Y^{i-1}) \\& -I(X_1^n,M_1,M_2,Y^{i-1};S_i|S_{i+1}^n,A^n) + I(Y^{i-1};S_i,A^n|X_1^n,M_1,M_2,S_{i+1}^n,A^n)+n\epsilon_n\\
    & \stackrel{(f)}{=} \sum_{i=1}^n I(X_1^n,M_1,M_2,S_{i+1}^n,A^n;Y_i|Y^{i-1}) -I(X_1^n,M_1,M_2,Y^{i-1};S_i|S_{i+1}^n,A^n)+n\epsilon_n\\
    & \stackrel{(g)}{\leq} \sum_{i=1}^n H(Y_i)-H(Y_i|X_1^n,M_1,M_2,S_{i+1}^n,Y^{i-1},A^n)-[H(S_i|S_{i+1}^n,A^n)\\
    &- H(S_i|X_1^n,M_1,M_2,Y^{i-1},S_{i+1}^n,A^n)]+n\epsilon_n\\
    & \stackrel{(h)}{=}\sum_{i=1}^n H(Y_i)-H(Y_i|X_{1,i},U_i)-[H(S_i|A_i) - H(S_i|U_i,X_{1,i},A_i)]+n\epsilon_n\\
    & = \sum_{i=1}^n I(Y_i;X_{1,i},U_i)-I(X_{1,i},U_i;S_i|A_i)+n\epsilon_n,
\end{align*}
where\\
$(a)$ follows from Fano's inequality,\\
$(b)$ follows from the fact that $X_1^n(M_1)$ is a function of $M_1$,\\
$(c)$ follows from the fact that $(X_1^n,M_1,M_2)-A^n-S^n$ form a Markov chain,\\
$(d)$ follows from the chain rule,\\
$(e)$ follows from the fact that $I(X_1^n,M_1,M_2;Y_i|Y^{i-1})=I(X_1^n,M_1,M_2,S_{i+1}^n,A^n;Y_i|Y^{i-1})-I(S_{i+1}^n,A^n;Y_i|X_1^n,M_1,M_2,Y^{i-1})$ (chain rule) and $I(X_1^n,M_1,M_2;S^n|S_{i+1}^n,A^n)=I(X_1^n,M_1,M_2,Y^{i-1};S_i|S_{i+1}^n,A^n) - I(Y^{i-1};S_i,A^n|X_1^n,M_1,M_2,S_{i+1}^n,A^n) $ (chain rule and Markov chain),\\
$(f)$ follows from the fact that\\
 $\sum_{i=1}^n I(S_{i+1}^n,A^n;Y_i|X_1^n,M_1,M_2,Y^{i-1})= \sum_{i=1}^n I(Y^{i-1};S_i,A^n|X_1^n,M_1,M_2,S_{i+1}^n,A^n)$\\
 due to the Csisz´ar sum identity,\\
$(g)$ follows from the definition of mutual information and from the fact that conditioning reduces entropy,\\
$(h)$ follows from the choice of $U_i = (X_1^{i-1},X_{1,i+1}^n,M_1,M_2,S_{i+1}^n,Y^{i-1},A^n)$ and the Markov chain $S_i-A_i-(S_{i+1}^n,A_{i-1},A_{i+1}^n)$.\\
Hence, we have:
\begin{equation}
R_1+R_2\leq \frac{1}{n}\sum_{i=1}^n [I(Y_i;X_{1,i},U_i)-I(X_{1,i},U_i;S_i|A_i)]+\epsilon_n. \label{eq1}
\end{equation}

\par The expressions in (\ref{eq1}) and (\ref{eq2}) are the average of the mutual informations calculated from the empirical
distribution in column $i$ of the code-book. We can rewrite these equations with the new variable, $Q$, where $Q = i\in
\{1, 2, ..., n\}$ with probability $\frac{1}{n}$. The equations become:

\begin{align*}
R_2 &\leq \frac{1}{n}\sum_{i=1}^n [I(Y_i;U_i|X_{1,i})-I(U_i;S_i|X_{1,i},A_i)]+\epsilon_n\\
&=\frac{1}{n}\sum_{i=1}^n [I(Y_Q;U_Q|X_{1,Q},Q=i)-I(U_Q;S_Q|X_{1,Q},A_Q,Q=i)]+\epsilon_n\\
&=I(Y_Q;U_Q|X_{1,Q},Q)-I(U_Q;S_Q|X_{1,Q},A_Q,Q)+\epsilon_n\\
&= I(Y_Q;U_Q|X_{1,Q},Q)-I(U_Q,Q;S_Q|X_{1,Q},A_Q)+I(Q;S_Q|X_{1,Q},A_Q)+\epsilon_n\\
&\leq I(Y_Q;U_Q,Q|X_{1,Q})-I(U_Q,Q;S_Q|X_{1,Q},A_Q)+\epsilon_n,
\end{align*}

and similarly,
\begin{align*}
R_1+R_2&\leq \frac{1}{n}\sum_{i=1}^n [I(Y_i;X_{1,i},U_i)-I(X_{1,i},U_i;S_i|A_i)]+\epsilon_n\\
&= \frac{1}{n}\sum_{i=1}^n [I(Y_Q;X_{1,Q},U_Q|Q=i)-I(X_{1,Q},U_Q;S_Q|A_Q,Q=i)]+\epsilon_n\\
&= I(Y_Q;X_{1,Q},U_Q|Q)-I(X_{1,Q},U_Q;S_Q|A_Q,Q)+\epsilon_n\\
&= I(Y_Q;X_{1,Q},U_Q|Q)-I(X_{1,Q},U_Q,Q;S_Q|A_Q)+I(Q;S_Q|A_Q)+\epsilon_n\\
&\leq I(Y_Q;X_{1,Q},U_Q,Q)-I(X_{1,Q},U_Q,Q;S_Q|A_Q)+\epsilon_n,
\end{align*}
where the last step follows from the fact that $I(Y_Q;X_{1,Q},U_Q|Q)\leq I(Y_Q;X_{1,Q},U_Q,Q)$ and the stationarity of $S_i$.
\par Let us denote $X_1 \triangleq X_{1Q}$, $U \triangleq (U_Q,Q)$, $Y \triangleq Y_Q$, $S \triangleq S_Q$, $A \triangleq A_Q$. Now, taking the limit as $n\rightarrow \infty$, $P_e^{(n)}\rightarrow 0$, we conclude:
\begin{eqnarray}
R_2&\leq&I(U;Y|X_1)-I(U;S|X_1,A)\nonumber\\
R_1+R_2&\leq& I(X_1,U;Y)-I(X_1,U;S|A).\nonumber
\end{eqnarray}

\par Next, we want to prove the cardinality bound on $U$. Using the support lemma introduced in \cite{Csiszar-Korner}, $U$ needs to contain $|\mathcal{A}||\mathcal{S}||\mathcal{X}_1||\mathcal{X}_2|-1$ elements in order to preserve the joint distribution $P_{A,S,X_1,X_2}$. In addition, we need two more elements to preserve the expressions $H(Y|U,X_1)$ and $H(S|X_1,A,U)$. Further, from the markov relation $U-(X_1,X_2,S)-Y$, we can determine that the joint distribution $P_{Y,S,X_1,X_2}$ is also preserved. Therefore the cardinality of $U$ is bounded by $|\mathcal{A}||\mathcal{S}||\mathcal{X}_1||\mathcal{X}_2|+1$.

\par Finally, we will prove the following Markov relations to complete the converse:\\
\begin{enumerate}
\item $P_{S|A,X_1}(s|a,x_1)=P_{S|A}(s|a)$.\\
\begin{align*}
p(s|a,x_1)  &=\sum_{(m_1,m_2)} p(s,m_1,m_2|a,x_1)\\
            &\stackrel{(a)}{=} \sum_{(m_1,m_2)}p(m_1,m_2|a,x_1)p(s|a,x_1,m_1,m_2)\\
            &\stackrel{(b)}{=} \sum_{(m_1,m_2)}p(m_1,m_2|a,x_1)p(s|a,m_1,m_2)\\
            &\stackrel{(c)}{=} \sum_{(m_1,m_2)}p(m_1,m_2|a,x_1)p(s|a)\\
            &= p(s|a)\sum_{(m_1,m_2)}p(m_1,m_2|a,x_1) = p(s|a),\\
\end{align*}

where\\
$(a)$ follows from the chain rule,\\
$(b)$ follows from the fact that $X_1=f(M_1)$,\\
$(c)$ follows from the channel model that states that $(M_1,M_2)-A-S$ form a Markov chain.\\
\par Thus, we can say that the following Markov, $X_1-(M_1,M_2)-A-S$, holds. Therefore, we conclude that $P_{S|A,X_1}(s|a,x_1)=P_{S|A}(s|a)$.
\\

\item $P_{Y|X_1,A,S,U,X_2}(y|x_1,a,s,u,x_2)=P_{Y|X_1,X_2,S}(y|x_1,x_2,s)$\\
Follows from the fact that the channel output at any time, $i$, is assumed to depend only on the channel inputs
and state at time $i$.\\

\item $P_{X_2|X_1,S,U,A}(x_2|x_1,s,u,a)=P_{X_2|X_1,S,U}(x_2|x_1,s,u)$\\
This is due to the fact that we take the auxiliary random variable $U$ to be $U_i = (X_1^{i-1},X_{1,i+1}^n,M_1,M_2,S_{i+1}^n,Y^{i-1},A^n)$.
\end{enumerate}
With this we complete the proof.
\end{proof}


\section{The Gaussian Action-MAC}\label{GaussMac}


In this section, we examine the Gaussian channel setting of our Action-MAC. We present the channel model and the power constraints and obtain a closed formula for the capacity region. Using similar steps to those used in the proof of the Gaussian Action-MAC, we also derive the capacity expression for the Gaussian action-dependent point-to-point channel. The point-to-point model was introduced in \cite{Tsachy-Weissman} and was left open. We proceed to discuss the results of the Gaussian Action-MAC and observe that the formula for the capacity region found contains previously known results, such as the capacity region for the Gaussian Generalized Gel'fand Pinsker (GGP) MAC found in \cite{Baruch-Shamai-Verd} as well as the new action-dependent point-to-point result.
Furthermore, in Appendix \ref{AppendixPTPGauss} we give an alternative proof for the capacity of the action-dependent point-to-point channel, using a one-to-one correspondence to the GGP MAC \cite{Baruch-Shamai-Verd} with only a common message.

Section \ref{GaussianChannelModel} presents the channel model, the relations between the channel variables, the power constraints and the main theorem. In Section \ref{GaussainConverse}, we find an upper bound on the capacity region of the Gaussian channel. Section \ref{GaussianDirect} shows that the upper bound found can be achieved by taking particular distributions of our random variables, thus attaining that our upper bound is, indeed, tight. We conclude with some remarks on different results that can be derived from our capacity region.


\subsection{Capacity and Channel Model}\label{GaussianChannelModel}
Consider the following setting. The channel probability is given by the following relation between $X_{1,i},X_{2,i},S_i$ and $Y_i$
\begin{equation}
Y_i=X_{1,i}+X_{2,i}+S_i+Z_i,
\end{equation}
where
 \begin{itemize}
 \item $Z^n$ is an i.i.d Gaussian noise process with zero-mean and $E[Z_i^2]=N$, independent of $X$ and $S$.
 \item The process, $S^n$, is combined out of $A^n(M_1,M_2)$ and $W^n$, i.e. $S^n=A^n(M_1,M_2)+W^n$. Therefore, we can look at the channel output, $Y_i$, as:
     \begin{equation}
     Y_i=X_{1,i}+X_{2,i}+A_i+W_i+Z_i.\label{GaussianMac1}
     \end{equation}
 \item $W^n$ is an i.i.d Gaussian noise process with zero-mean and $E[W_i^2]=Q$, independent of $Z^n$ and $X_1^n,A^n$.
 \item The actions are constrained by
\begin{equation}
\frac{1}{n}\sum_{i=1}^n A_i\leq P_A.\label{PowerCon1}
\end{equation}
\item The power constraints on the channel inputs are:
\begin{eqnarray}
\frac{1}{n}\sum_{i=1}^n X_{1,i}^2\leq P_1,\ \ \ \ \ \ \ \ \frac{1}{n}\sum_{i=1}^n X_{2,i}^2\leq P_2.\label{PowerCon2}
\end{eqnarray}
\end{itemize}
\par We obtain the capacity region for this channel in the following theorem.

\begin{theorem}\label{GaussianTheorem}
The capacity region of the Gaussian action-dependent MAC, under the power constraints (\ref{PowerCon1})-(\ref{PowerCon2}) is given by the union of rate pairs satisfying
{\footnotesize{
\begin{eqnarray}
R_2\leq\frac{1}{2}\log{\frac{\Big{(}N+P_2+P_A+Q-P_2\rho_{12}^2-P_A\rho_{1A}^2+2\sqrt{P_2P_A}\rho_{2A}-2\sqrt{P_2P_A}\rho_{12}\rho_{1A}+2\sqrt{P_2Q}\rho_{2W}\Big{)}\Big{(}N(\rho_{1A}^2-1)-P_2\Delta\Big{)}}{N\Big{(}(\rho_{1A}^2-1)(N+Q+P_2\rho_{2W}^2+2\sqrt{P_2Q}\rho_{2W})-P_2\Delta\Big{)}}}\label{RegionTheorem1}\\
R_1+R_2\leq\frac{1}{2}\log{\frac{\Big{(}N+P_1+P_2+P_A+Q+2\sqrt{P_1P_2}\rho_{12}+2\sqrt{P_1P_A}\rho_{1A}+2\sqrt{P_2P_A}\rho_{2A}+2\sqrt{P_2Q}\rho_{2W}\Big{)}\Big{(}N(\rho_{1A}^2-1)-P_2\Delta\Big{)}}{N\Big{(}(\rho_{1A}^2-1)(N+Q+P_2\rho_{2W}^2+2\sqrt{P_2Q}\rho_{2W})-P_2\Delta\Big{)}}},\label{RegionTheorem2}
\end{eqnarray}
}}
for some $\rho_{12}\in[-1,1]$, $\rho_{1A}\in[-1,1]$, $\rho_{2A}\in[-1,1]$, $\rho_{2W}\in[-1,1]$ where
\begin{eqnarray}
\Delta = 1 -\rho_{12}^2-\rho_{1A}^2-\rho_{2A}^2-\rho_{2W}^2 + \rho_{1A}^2\rho_{2W}^2+2\rho_{1A}\rho_{2A}\rho_{12},
\end{eqnarray} such that
\begin{eqnarray}\label{GaussConstraint}
\Delta\geq 0.
\end{eqnarray}
\end{theorem}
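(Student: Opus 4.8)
The plan is to obtain Theorem~\ref{GaussianTheorem} as the Gaussian specialization of the single-letter region of Theorem~\ref{TheoremMain}. I would work throughout with the additive model $Y=X_1+X_2+A+W+Z$ and the decomposition $S=A+W$, where $W$ and $Z$ are independent of each other and $W$ is independent of $(X_1,A)$, and parametrize the ``physical'' part of any admissible test distribution by the normalized cross-correlations $\rho_{12}=E[X_1X_2]/\sqrt{P_1P_2}$, $\rho_{1A}=E[X_1A]/\sqrt{P_1P_A}$, $\rho_{2A}=E[X_2A]/\sqrt{P_2P_A}$ and $\rho_{2W}=E[X_2W]/\sqrt{P_2Q}$ (the correlations $\rho_{1W}$ and $\rho_{AW}$ vanish because $W\perp(X_1,A)$). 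The requirement that the covariance matrix of $(X_1,X_2,A,W)$ be positive semidefinite, after imposing these zeros, is precisely the determinant condition $\Delta\geq 0$ of \eqref{GaussConstraint}; this is where that inequality enters the statement.

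For the converse (Section~\ref{GaussainConverse}) I would start from the region in Corollary~\ref{corollary2} (equivalently \eqref{CapacityRegion}) and bound each constituent entropy term, noting that every term is a difference $h(\cdot)-h(\cdot)$. The ``positive'' pieces — essentially $h(Y|X_1)$, $h(Y)$ and the $h(Y|A,X_1,U)$-type terms — are upper bounded by the differential entropy of a Gaussian with the same (conditional) variance, using the power constraints \eqref{PowerCon1}--\eqref{PowerCon2} and the independence of $Z$; the ``negative'' pieces involving $S$ and $U$ are treated by a conditional maximum-entropy argument: since $U$ appears only inside conditional entropies of $Y$ and of $S$, replacing the joint law of $(X_1,A,S,X_2,U)$ by a jointly Gaussian vector with the same covariance can only enlarge the region, so it suffices to optimize over jointly Gaussian test distributions. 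After this reduction, all entropies become explicit functions of the covariance matrix of $(X_1,X_2,A,W,U)$, and eliminating the parameters describing $U$ (an MMSE / Schur-complement computation) collapses the two bounds to the right-hand sides of \eqref{RegionTheorem1}--\eqref{RegionTheorem2}.

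For achievability (Section~\ref{GaussianDirect}) I would exhibit a matching choice in Theorem~\ref{TheoremMain}: let $(X_1,A,X_2)$ be jointly Gaussian realizing the target correlations with the power constraints met with equality, and take the Gel'fand--Pinsker auxiliary of dirty-paper type, $U=X_2+\alpha S=X_2+\alpha(A+W)$, with $\alpha$ a scalar to be optimized (conditioning on $(X_1,A)$ is automatic, since those are known at the informed encoder). Substituting this $U$ into $I(U;Y|X_1)-I(U;S|A,X_1)$ and $I(X_1,U;Y)-I(X_1,U;S|A)$ gives rational functions of $\alpha$ and the $\rho$'s; choosing $\alpha$ equal to the appropriate MMSE coefficient for estimating $X_2$ plus the unknown interference from $Y$ — the Costa value adapted to the fact that the $A$-part of the interference is known to the encoder while the $W$-part is not — makes the penalty terms cancel in the right way and reproduces \eqref{RegionTheorem1}--\eqref{RegionTheorem2}, so the upper bound is tight. (As noted in the paper, the point-to-point corner also admits an alternative derivation via the correspondence with the GGP MAC with only a common message, carried out in Appendix~\ref{AppendixPTPGauss}.)

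I expect the main obstacle to be the converse: carefully justifying that a jointly Gaussian auxiliary $U$ is optimal while simultaneously respecting the structural Markov constraints of the model ($X_1-A-S$ and $S=A+W$ with $W\perp(X_1,A)$), and then carrying the resulting determinant algebra through to the stated closed form. The bookkeeping — tracking which correlations are forced to zero, writing each mutual information as a ratio of small Gram determinants, and simplifying — is lengthy but routine, and I would confine it to the two dedicated subsections rather than reproduce it here.
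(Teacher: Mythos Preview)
Your overall architecture (converse via a Gaussian reduction, direct via a dirty-paper auxiliary) matches the paper, but the converse step where you ``replace the joint law of $(X_1,A,S,X_2,U)$ by a jointly Gaussian vector with the same covariance'' is the point where your plan and the paper diverge, and it is a genuine gap rather than just omitted algebra. The quantities you need to bound are \emph{differences} of conditional entropies in which $U$ appears on both sides, e.g.\ $h(Y|X_1)-h(Y|U,X_1)-h(S|A,X_1)+h(S|A,X_1,U)$; a blanket maximum-entropy replacement does not move such a difference in a single direction, and the Markov constraints on $U$ make a covariance-matching argument delicate. You correctly flag this as ``the main obstacle,'' but as stated the proposal contains no mechanism to overcome it.

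The paper sidesteps this issue entirely: before any Gaussian argument, it first derives a \emph{$U$-free} outer bound (Lemma~\ref{GaussaianLemma1}),
\[
R_2\le I(A;Y|X_1)+I(X_2;Y|X_1,A,W),\qquad R_1+R_2\le I(A,X_1;Y)+I(X_2;Y|X_1,A,W),
\]
obtained by manipulating $I(U;Y|X_1,A)-I(U;W|X_1,A)$ into $I(U;Y|X_1,A,W)-I(U;W|X_1,A,Y)\le I(X_2;Y|X_1,A,W)$. Only then (Lemma~\ref{GaussaianLemma2}) does it invoke maximum entropy, now on the benign sum $h(Y)+h(W|Y,X_1,A)-h(W)-h(Z)$, where Gaussianity is straightforwardly optimal. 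This elimination of $U$ is the key idea your proposal is missing; once you have it, the rest of your plan (parametrize by the $\rho$'s, push second moments to the constraints, match with $U=X_2+\alpha S$ in the direct part) is essentially what the paper does, up to the cosmetic difference that the paper writes $U=X_1+X_2+\tilde\beta_3 S$ and frames the optimal $\alpha$ via the linear MMSE estimator $\hat X_2^{\mathrm{lin}}(X_1,A,W,X_2+Z)$.
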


\par The proof of this theorem is obtained in three stages. First, in the converse subsection, we state two lemmas that show that the capacity region (\ref{CapacityRegion}) given in Theorem \ref{TheoremMain} is upper bounded by the region
\begin{eqnarray}
R_2&\leq&I(A;Y|X_1)+h(X_2|X_1,A,W)-h(X_2-\hat{X}_2^{\text{lin}}(X_1,A,W,X_2+Z))\\
R_1+R_2&\leq&I(A,X_1;Y)+h(X_2|X_1,A,W)-h(X_2-\hat{X}_2^{\text{lin}}(X_1,A,W,X_2+Z)),
\end{eqnarray}
where $(X_1,X_2,A,W,Z,Y)$ are jointly Gaussian and $\hat{X}_2^{\text{lin}}(X_1,A,W,X_2+Z)$ is the optimal MMSE estimator given the measurements $(X_1,A,W,X_2+Z)$. Second, the exact expressions of the upper bound for our region are calculated as functions of $\sigma_{X_1}^2$, $\sigma_{X_2}^2$ and $\sigma_{A}^2$. Next, we show that replacing $\sigma_{X_1}^2$, $\sigma_{X_2}^2$ and $\sigma_{A}^2$ with $P_1$, $P_2$ and $P_A$, respectively, further increases the region.

This solution introduces the constraint (\ref{GaussConstraint}) which follows from the fact the allowable values of the covariances $E[X_1A]=\sigma_{1A}$, $E[X_1X_2]=\sigma_{12}$, $E[X_2A]=\sigma_{2A}$ and $E[X_2W]=\sigma_{2W}$ are such that the covariance matrix
\begin{equation}\Lambda=
\left(
  \begin{array}{ccccc}
    P_1 & \sigma_{12} & \sigma_{1A} & 0 & 0 \\
    \sigma_{12} & P_2 & \sigma_{2A} & \sigma_{2W} & 0 \\
    \sigma_{1A} & \sigma_{2A} & P_A & 0 & 0 \\
    0 & \sigma_{2W} & 0 & Q & 0 \\
    0 & 0 & 0 & 0 & N \\
  \end{array}
\right),
\end{equation}
satisfies the nonnegative-definiteness condition, i.e.
\begin{eqnarray}
\det{(\Lambda)}=\sigma_{1A}^2 \sigma_{2W}^2 N P_1 P_A +2\sigma_{12}\sigma_{1A}\sigma_{2A}NQ-\sigma_{2A}^2N P_1 Q -\sigma_{12}^2 N P_A Q +NP_1 P_2 P_A Q\geq 0,
\end{eqnarray}
or, equivalently, as a function of $\rho_{12},\rho_{1A},\rho_{2A}$ and $\rho_{2W}$
\begin{eqnarray}
\Delta = 1 -\rho_{12}^2-\rho_{1A}^2-\rho_{2A}^2-\rho_{2W}^2 + \rho_{1A}^2\rho_{2W}^2+2\rho_{1A}\rho_{2A}\rho_{12}\geq 0,
\end{eqnarray}
where
\begin{eqnarray}
\rho_{1A}=\frac{\sigma_{1A}}{\sqrt{P_1P_A}} \ \ \ \ \ \rho_{12}=\frac{\sigma_{12}}{\sqrt{P_1P_2}} \ \ \ \ \ \rho_{2A}=\frac{\sigma_{2A}}{\sqrt{P_2P_A}} \ \ \ \ \ \rho_{2W}=\frac{\sigma_{2W}}{\sqrt{P_2Q}}.
\end{eqnarray}
Finally, in the direct part, we show that this region is, indeed, achievable.

 \par The capacity region for the Gaussian Action-MAC setting is plotted in Fig. \ref{RegionVsPa} for various values of $P_A$. It is particularly interesting to notice that by taking $P_A=0$ we derive the rate region for the GGP MAC plotted in \cite{Baruch-Shamai-Verd}. We can interpret this as taking $A=0$, i.e. disregarding the action from our Gaussian channel model reduces our model to the Gaussian GGP MAC. In addition, we also see that the rate region increases with $P_A$. However, we see that the region only grows logarithmically fast as a function of $P_A$, a fact that we can also derive from (\ref{RegionTheorem1})-(\ref{RegionTheorem2}).
\vspace{-5mm}

 \begin{figure}[h!]
    \begin{center}
        \begin{psfrags}
        \psfragscanon
        \psfrag{A}[][][0.75]{$P_A=0$}
        \psfrag{B}[][][0.75]{$P_A=1$}
        \psfrag{C}[][][0.75]{$P_A=2$}
        \psfrag{D}[][][0.75]{$P_A=3$}
        \psfrag{E}[][][0.75]{$P_A=4$}
        \psfrag{F}[][][0.75]{$P_A=5$}
            \centerline{\includegraphics[scale = .45]{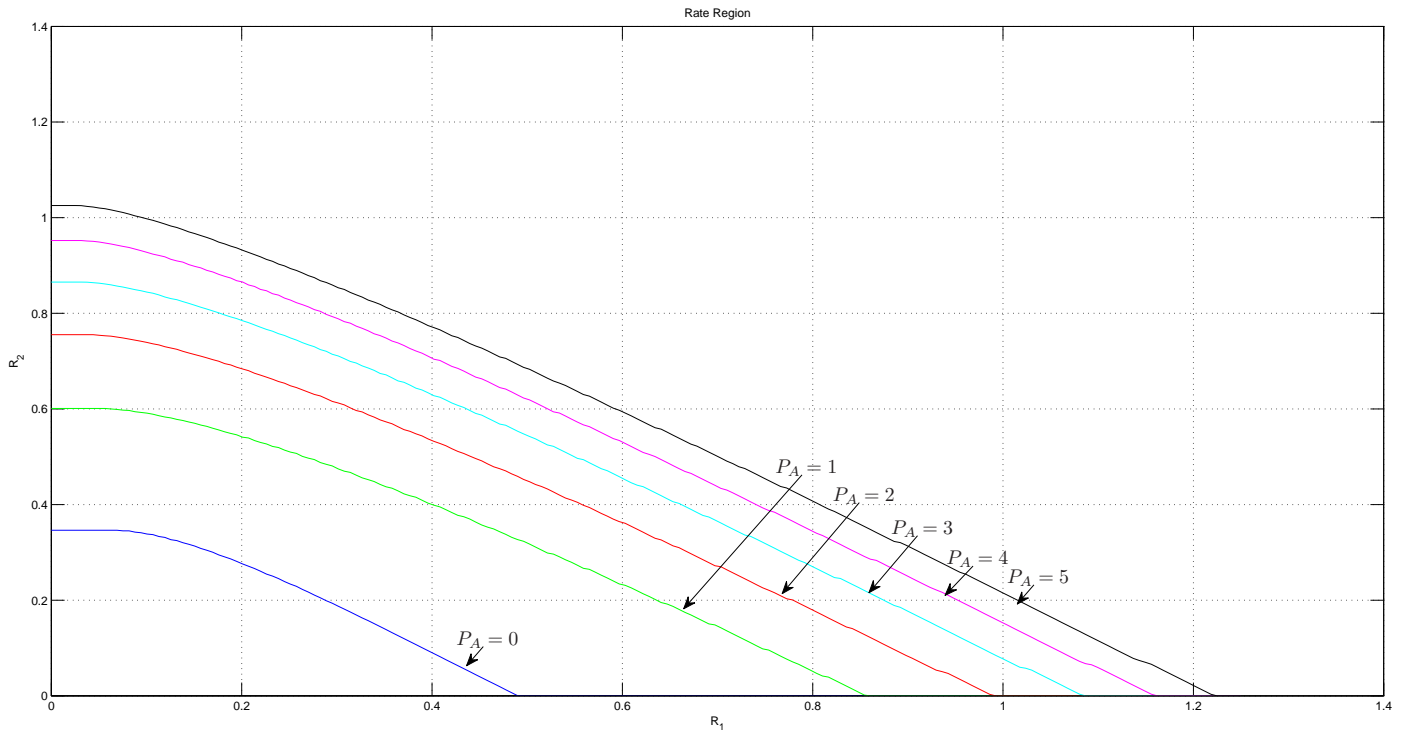}}
            \caption{The capacity region for various values of $P_A$ and $P_1=0.5$, $P_2=1$, $Q=1.5$, $N=1$. The smallest region corresponds to $P_A=0$ and is the exact region of the GGP MAC plotted in \cite{Baruch-Shamai-Verd}. The next regions correspond to $P_A=1,2,3,4,5$, where we can see that the regions grow grow as $P_A$ increases, but by smaller increments.} \label{RegionVsPa}
            \psfragscanoff
        \end{psfrags}
     \end{center}
 \end{figure}

\vspace{-10mm}
\subsection{Converse}\label{GaussainConverse}
In this section, we present a converse to the upper bound for the Gaussian setting of the Action-MAC. We start by finding an upper bound to the region found in Theorem \ref{TheoremMain} and show that it suffices to consider only random variables $(X_1,X_2,A,W,Z,Y)$ that are jointly Gaussian. Next, we calculate the expressions for the upper bound in terms of $E[X_1^2]$, $E[X_2^2]$ and $E[A^2]$. We show that replacing the last terms with $P_1$, $P_2$ and $P_A$, respectively, further increases the upper bound. Furthermore, we add the nonnegative-definite constraint on the covariance matrix of our variables such that we obtain an equivalent problem in terms of $P_1$, $P2$, $P_A$, $\rho_{12}$, $\rho_{1A}$, $\rho_{2A}$ and $\rho_{2W}$. We conclude by obtaining a set of equalities that give us a simpler representation for our upper bound. In the direct part, we will see that this representation for the upper bound is achievable.

We begin by presenting the next lemma, where we show that the following region provides an upper bound to the region given in Theorem \ref{TheoremMain}:
\begin{lemma}\label{GaussaianLemma1}
The closure of the convex hull of the set of rate pairs satisfying
\begin{eqnarray}\label{GaussianUpperBound}
R_2\leq I(A;Y|X_1)+I(X_2;Y|X_1,A,W)\nonumber\\
R_1+R_2\leq I(A,X_1;Y)+I(X_2;Y|X_1,A,W),
\end{eqnarray}
for some joint probability distribution of the form
\begin{equation}
P_{A,S,X_1,X_2,Y} = P_{X_1}P_{A|X_1}P_{S|A}P_{X_2|X_1,S,A}P_{Y|X_1,X_2,S},\nonumber
\end{equation}
is an outer bound on the capacity region found in Theorem \ref{TheoremMain}.
\end{lemma}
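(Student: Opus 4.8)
The plan is to start from the equivalent description of the Theorem \ref{TheoremMain} region given in Corollary \ref{corollary2}, i.e.\ from (\ref{DualCap}), in which the action rate $I(A;Y|X_1)$ and the uninformed-user rate $I(X_1;Y)$ have already been peeled off, leaving in both constraints the Gel'fand--Pinsker residual $I(U;Y|X_1,A)-I(U;S|X_1,A)$. Since the capacity region of Theorem \ref{TheoremMain} is convex and closed (Lemma \ref{lemma1}), it suffices to show that, for every input distribution of the form (\ref{distribution}), every rate pair it allows lies in the set (\ref{GaussianUpperBound}) evaluated at the induced law of $(X_1,A,S,X_2,Y)$; marginalizing $U$ out of (\ref{distribution}) leaves a law of exactly the form $P_{X_1}P_{A|X_1}P_{S|A}P_{X_2|X_1,S,A}P_{Y|X_1,X_2,S}$, and passing afterward to the closure of the convex hull can only enlarge the set, so it remains an outer bound.

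The heart of the argument is the single inequality
$$I(U;Y|X_1,A)-I(U;S|X_1,A)\;\le\; I(X_2;Y|X_1,A,W),$$
which says that the conditional Gel'fand--Pinsker rate (conditioned throughout on the already-decoded pair $(X_1,A)$) cannot exceed the rate obtained when the decoder is additionally handed the state. To prove it I would expand $I(U;Y,S|X_1,A)$ by the chain rule in two ways, obtaining $I(U;Y|X_1,A)-I(U;S|X_1,A)=I(U;Y|X_1,A,S)-I(U;S|X_1,A,Y)\le I(U;Y|X_1,A,S)$, then bound $I(U;Y|X_1,A,S)\le I(U,X_2;Y|X_1,A,S)=I(X_2;Y|X_1,A,S)+I(U;Y|X_1,X_2,A,S)$, and argue that the last term vanishes because, under (\ref{distribution}), $Y$ depends only on $(X_1,X_2,S)$ and is therefore conditionally independent of $(U,A)$ given $(X_1,X_2,S)$. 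Finally, since in the Gaussian model $S=A+W$ (so $W=S-A$, and more generally $W$ is chosen so that $(A,W)$ and $(A,S)$ generate the same information), conditioning on $(X_1,A,W)$ coincides with conditioning on $(X_1,A,S)$, which puts the bound into the form stated in the lemma.

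Substituting this inequality into (\ref{DualCap}) and using $I(X_1;Y)+I(A;Y|X_1)=I(X_1,A;Y)$ turns the $R_2$ and $R_1+R_2$ constraints of Theorem \ref{TheoremMain} precisely into those of (\ref{GaussianUpperBound}), completing the proof. I expect the main obstacle to be the bookkeeping of the Markov relations induced by (\ref{distribution}) — in particular verifying $I(U;Y|X_1,X_2,A,S)=0$, which must be read off carefully from the factorization since $U$ sits ``upstream'' of $X_2$ and $A$ ``upstream'' of the state — together with a clean justification that conditioning on the state innovation $W$ is interchangeable with conditioning on $S$ given $A$, so that the resulting bound matches the stated region verbatim.
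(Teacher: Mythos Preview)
Your proposal is correct and follows essentially the same route as the paper's proof: both reduce the claim to the single inequality $I(U;Y|X_1,A)-I(U;S|X_1,A)\le I(X_2;Y|X_1,A,W)$, establish it via the chain-rule swap $I(U;Y|X_1,A)-I(U;S|X_1,A)=I(U;Y|X_1,A,S)-I(U;S|X_1,A,Y)\le I(U;Y|X_1,A,S)$ (the paper writes this with $W$ in place of $S$ from the outset, using $S=A+W$), and then invoke the channel Markov $U-(X_1,X_2,S)-Y$ to replace $U$ by $X_2$. The only cosmetic differences are that the paper starts from Corollary~\ref{intuitiveRegion} rather than Corollary~\ref{corollary2} and performs the $S\leftrightarrow W$ substitution at the beginning instead of the end.
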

\begin{proof}
Recall the equivalent capacity region for Theorem \ref{TheoremMain}, (\ref{Corollary1}), given in Corollary \ref{intuitiveRegion}.
For the term $R_1+R_2\leq I(X_1,A,U;Y)+I(X_1,U;S|A)$ in (\ref{Corollary1}) we have:
\begin{align}
I(X_1,A,U;Y)+I(X_1,U;S|A)&= I(X_1,A,U;Y)-I(X_1,U;W+A|A)\nonumber\\
                           &\stackrel{(a)}{=}I(X_1,A,U;Y)-I(X_1,A,U;W)\nonumber\\
                           &= I(X_1,A;Y)+I(U;Y|X_1,A)-I(X_1,A;W)-I(U;W|X_1,A)\nonumber\\                   &\stackrel{(b)}{=} I(X_1,A;Y)+I(U;Y|X_1,A)-I(U;W|X_1,A)\nonumber\\
                           &= I(X_1,A;Y)-h(U|X_1,A,Y)+h(U|X_1,A,W)\nonumber\\
                           &= I(X_1,A;Y)+I(U;Y|X_1,A,W)-I(U;W|X_1,A,Y)\nonumber\\
                           &\leq I(X_1,A;Y)+I(U;Y|X_1,A,W)\nonumber\\
                           &=I(X_1,A;Y)+h(Y|X_1,A,W)-h(Y|X_1,A,W,U)\nonumber\\
                           &\stackrel{(c)}{\leq}I(X_1,A;Y)+h(Y|X_1,A,W)-h(Y|X_1,X_2,A,W,U)\nonumber\\                           &\stackrel{(d)}{=}I(X_1,A;Y)+I(X_2;Y|X_1,A,W),\nonumber
\end{align}
where\\
$(a)$ is due to the fact that $W$ and $A$ are independent of each other,\\
$(b)$ is due to the fact that $W$ is independent of $(X_1,A)$,\\
$(c)$ is due to the fact that conditioning reduces entropy,\\
$(d)$ is due to the properties of the channel,
\par The proof for $R_2\leq I(A;Y|X_1)+I(X_2;Y|X_1,A,W)$ is obtained in a similar manner to that above, except that for $R_2$ we have the term $I(A;Y|X_1)$ instead of $I(X_1,A;Y)$.
\end{proof}
\par Now, that we have an upper bound (\ref{GaussianUpperBound}) for the capacity region of Theorem \ref{TheoremMain}, we show in Lemma \ref{GaussaianLemma2} that choosing jointly Gaussian $(X_1,X_2,A,W,Z,Y)$ further increases the region.
\begin{lemma}\label{GaussaianLemma2}
The region given in (\ref{GaussianUpperBound}) increases by taking $(X_1,X_2,A,W,Z,Y)$ to be jointly Gaussian.
\end{lemma}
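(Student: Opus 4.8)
The plan is to fix an arbitrary joint distribution $P_{A,S,X_1,X_2,Y}$ of the admissible form achieving some rate pair in the region \eqref{GaussianUpperBound}, and to replace it by a jointly Gaussian distribution with the same second-order statistics, showing that the two mutual-information expressions appearing in \eqref{GaussianUpperBound} can only increase. Since the additive structure $Y=X_1+X_2+A+W+Z$ with $W,Z$ independent Gaussian is fixed by the channel, the only freedom is in the covariance matrix $\Lambda$ of $(X_1,X_2,A,W,Z)$; so let $\Lambda$ be the covariance matrix induced by the chosen $P$, and let $(X_1^G,X_2^G,A^G,W^G,Z^G)$ be zero-mean jointly Gaussian with that same $\Lambda$ (note $W^G=W$, $Z^G=Z$ automatically, and the independence of $W,Z$ from the rest is preserved as zero cross-covariances, which for jointly Gaussian vectors means true independence). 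First I would verify that the Gaussian surrogate still lies in the admissible distribution class, i.e.\ that the Markov/independence relations $X_1-A-S$ and the channel law $P_{Y|X_1,X_2,S}$ hold: these are built into the additive model, so this is immediate.

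The core of the argument is a term-by-term comparison. Consider the sum-rate bound $I(A,X_1;Y)+I(X_2;Y|X_1,A,W)$. Write $I(A,X_1;Y)=h(Y)-h(Y|A,X_1)$ and $I(X_2;Y|X_1,A,W)=h(Y|X_1,A,W)-h(Y|X_1,X_2,A,W)$. The plan is:
\begin{itemize}
\item For the ``$+$'' (unconditioned) differential entropies $h(Y)$ and $h(Y|X_1,A,W)$: among all distributions with a prescribed covariance, the Gaussian maximizes differential entropy, so $h(Y)$ and $h(Y|X_1,A,W)$ each do not decrease when we pass to the Gaussian surrogate (the conditional version follows because $h(Y|X_1,A,W)\le \tfrac12\log\big((2\pi e)\,\mathrm{Var}(Y\mid X_1,A,W)\big)$ with the conditional variance being a function of $\Lambda$ only, and equality holds in the jointly Gaussian case).
\item For the ``$-$'' (subtracted) terms $h(Y|A,X_1)$ and $h(Y|X_1,X_2,A,W)$: here I use that $Y$ given the conditioning variables is the conditioning variables plus $X_2+W+Z$ (resp.\ plus $W+Z$), so these entropies equal $h(X_2+W+Z\mid A,X_1)$ and $h(W+Z\mid X_1,X_2,A,W)=h(Z)$; the latter is already Gaussian, and for the former I invoke the entropy-power / worst-additive-noise type inequality (equivalently the conditional maximum-entropy bound applied ``in reverse'' via the fact that MMSE of a jointly Gaussian pair is linear): $h(X_2+W+Z\mid A,X_1)\ge h(X_2^G+W+Z\mid A^G,X_1^G)$ is \emph{not} automatic, so instead I bound it from above after subtracting, i.e.\ I pair each subtracted term with a suitable unconditioned term. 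Concretely, $I(A,X_1;Y)=I(A,X_1;Y)$ is handled by noting $I(A,X_1;Y)\le I(A,X_1;Y^G)$ does not hold in general either; the clean route is the one already used in Lemma \ref{GaussaianLemma1}'s spirit: express each bound as $h(Y\mid \cdot)-h(Z)$ plus a mutual information between Gaussian-plus-arbitrary signals and apply the vector version of the ``Gaussian maximizes conditional entropy'' together with the data-processing structure.
\end{itemize}
The cleanest formulation, which I would adopt, is: rewrite $I(X_2;Y|X_1,A,W)=h(X_2|X_1,A,W)-h(X_2|X_1,A,W,Y)$ and note $h(X_2|X_1,A,W,Y)\le h\big(X_2-\hat X_2^{\mathrm{lin}}(X_1,A,W,X_2+Z)\big)$ by the MMSE optimality of the linear estimator among all estimators — this upper bound depends on $\Lambda$ only and is achieved with equality in the jointly Gaussian case. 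Simultaneously $h(X_2|X_1,A,W)\le \tfrac12\log\big((2\pi e)\mathrm{Var}(X_2\mid X_1,A,W)\big)$, again a function of $\Lambda$ alone, tight for Gaussians. Combining, $I(X_2;Y|X_1,A,W)\le$ (its Gaussian value), but with the wrong sign on the subtracted term — so I instead bound $I(X_2;Y|X_1,A,W)$ from \emph{below} by its Gaussian value using that the first term is fixed-covariance-maximized by Gaussian and the second is a genuine conditional entropy which for the Gaussian surrogate equals the linear-MMSE expression, which upper-bounds the original. That gives exactly $I(X_2;Y|X_1,A,W)\big|_{\text{original}} \le$ nothing useful; the resolution is that we do not need monotonicity of $I(X_2;Y|X_1,A,W)$ alone — we need monotonicity of the \emph{whole} right-hand side, and the paper's stated post-lemma reduction already replaces $I(X_2;Y|X_1,A,W)$ by $h(X_2|X_1,A,W)-h(X_2-\hat X_2^{\mathrm{lin}}(X_1,A,W,X_2+Z))$, which is $\ge I(X_2;Y|X_1,A,W)$ and depends on covariances only. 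So the real plan is: (i) $I(A,X_1;Y)\le I(A,X_1;Y)\big|_{G}$ via Gaussian-maximizes-entropy applied to $h(Y)$ minus the fact that $h(Y|A,X_1)$ is itself maximized for Gaussian and these combine because $Y|A,X_1$ and $Y$ differ by the same additive term — handled by the standard argument that the capacity of a Gaussian-noise channel with average-power input constraint is maximized by Gaussian input; (ii) $I(X_2;Y|X_1,A,W)\le h(X_2|X_1,A,W)-h(X_2-\hat X_2^{\mathrm{lin}}(\cdot))$, a covariance-only quantity, tight for Gaussian.

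The main obstacle — and the step I would spend the most care on — is handling the \emph{subtracted} (negative) mutual-information/entropy terms, because ``Gaussian maximizes entropy'' pushes positive and negative terms in opposite directions. The way around it, which I would carry out explicitly, is never to compare a subtracted term in isolation: instead use the conditional Entropy Power Inequality (EPI), or equivalently the fact that for the worst additive-noise problem the linear-MMSE error dominates, to show $h(Y|X_1,A,W)-h(Y|X_1,X_2,A,W) = I(X_2;Y|X_1,A,W)\le h(X_2|X_1,A,W)-h\big(X_2-\hat X_2^{\mathrm{lin}}(X_1,A,W,X_2+Z)\big)$ where the RHS is evaluated solely from $\Lambda$; and $I(A,X_1;Y)\le \tfrac12\log\frac{\mathrm{Var}(Y)}{\mathrm{Var}(Y|X_1,A)}$ with numerator and denominator covariance-only and achieved by the Gaussian surrogate. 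Thus every bounding quantity becomes a function of $\Lambda$ that is met with equality when $(X_1,X_2,A,W,Z,Y)$ is jointly Gaussian with covariance $\Lambda$, so the Gaussian choice attains (hence does not decrease) the region. Finally, I would note that the union over admissible $\Lambda$ of the jointly Gaussian rate pairs is already closed and convex (it is the continuous image of the compact, convex set of feasible covariance matrices, the feasibility being precisely the nonnegative-definiteness constraint $\Delta\ge 0$), so taking the closed convex hull in \eqref{GaussianUpperBound} does not enlarge it beyond the Gaussian family, completing the lemma.
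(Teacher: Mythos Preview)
Your proposal circles the right issue---that ``Gaussian maximizes entropy'' pushes positive and negative conditional-entropy terms in opposite directions---but it never actually resolves it. In your final plan, step (i) asserts $I(A,X_1;Y)\le \tfrac12\log\frac{\mathrm{Var}(Y)}{\mathrm{Var}(Y|X_1,A)}$, but this is not a standard inequality and you give no argument for it: writing $I(A,X_1;Y)=h(Y)-h(Y|A,X_1)$, the numerator bound $h(Y)\le\tfrac12\log(2\pi e\,\mathrm{Var}(Y))$ is fine, but you need a \emph{lower} bound on $h(Y|A,X_1)=h(X_2+W+Z\mid A,X_1)$, and none of the tools you cite (max-entropy, EPI, worst-noise, Gaussian-channel capacity) delivers that, because here the ``noise'' $X_2+W+Z$ is neither Gaussian nor independent of $(A,X_1)$. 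Step (ii) has the same defect in reverse: your bound $h(X_2|X_1,A,W,Y)\le h\bigl(X_2-\hat X_2^{\mathrm{lin}}\bigr)$ goes the wrong way for what you need (it gives $I(X_2;Y|X_1,A,W)\ge\cdots$, not $\le$), and neither $h(X_2|X_1,A,W)$ nor $h(X_2-\hat X_2^{\mathrm{lin}})$ is ``covariance-only'' for non-Gaussian inputs.

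The paper avoids all of this with one algebraic move you missed: telescope the two mutual informations so that \emph{every} subtracted entropy is already a fixed Gaussian. Concretely,
\[
I(A,X_1;Y)+I(X_2;Y|X_1,A,W)=h(Y)-h(Y|X_1,A)+h(Y|X_1,A,W)-h(Z)
=h(Y)-I(W;Y|X_1,A)-h(Z),
\]
and since $W\perp(X_1,A)$,
\[
I(W;Y|X_1,A)=h(W)-h(W|Y,X_1,A),
\]
so the whole bound equals $h(Y)+h(W|Y,X_1,A)-h(W)-h(Z)$. Now $h(W)$ and $h(Z)$ are fixed by the channel, and both remaining positive terms $h(Y)$ and $h(W|Y,X_1,A)$ are maximized (for fixed second moments) by taking $(X_1,X_2,A,W,Z,Y)$ jointly Gaussian, via the maximum differential entropy lemma. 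The $R_2$ bound is handled identically with $h(Y|X_1)$ in place of $h(Y)$. That is the missing idea: find a decomposition in which no non-Gaussian term is subtracted.
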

\begin{proof}
Let us take the expression bounding $R_1+R_2$, i.e.  $I(A,X_1;Y)+I(X_2;Y|X_1,A,W)$:
\begin{align}
I(A,X_1;Y)+I(X_2;Y|X_1,A,W)&= h(Y)-h(Y|X_1,A)+h(Y|X_1,A,W)-h(Y|X_1,X_2,A,W)\nonumber\\
                           &= h(Y)-I(W;Y|X_1,A)-h(Z)\nonumber\\
                           &\stackrel{(a)}{=} h(Y)+h(W|Y,X_1,A)-h(W)-h(Z),
\end{align}
where $(a)$ follows from the fact that $W$ is independent of $(A,X_1)$.
\par We invoke the maximum differential entropy lemma \cite{gamal2011network}, which states that the differential entropy, $h(X^n|Y^n)$, for a pair of random vectors, $(X^n,Y^n)\sim f(x^n,y^n)$, with covariance matrices, $K_X$ and $K_Y$, is maximized for jointly Gaussian ($X^n,Y^n)$. Hence, the differential entropies of $h(Y)$ and $h(W|Y,X_1,A)$ are maximized if we take $(X_1,X_2,A,W,Z,Y)$ to be jointly Gaussian.
\par In the same manner, for the expression bounding $R_2$, i.e. $I(A;Y|X_1)+I(X_2;Y|X_1,A,W)$, we have
\begin{equation}
I(A;Y|X_1)+I(X_2;Y|X_1,A,W)=h(Y|X_1)+h(W|Y,X_1,A)-h(W)-h(Z).
\end{equation}
Studying the last expression again reveals that it is maximized by taking $(X_1,X_2,A,W,Z,Y)$ to be jointly Gaussian.
\end{proof}
Now we would like to calculate the expressions for fixed second moments
\begin{eqnarray}
E[X_1^2]=\sigma_{X_1}^2\leq P_1 \ \ \ \ E[X_2^2]=\sigma_{X_2}^2\leq P_2 \ \ \ \ E[A^2]=\sigma_{A}^2\leq P_A\nonumber\\
E[X_1X_2]=\sigma_{12} \ \ \ \ E[X_1A]=\sigma_{1A} \ \ \ \ E[X_2A]=\sigma_{2A} \ \ \ \ E[X_2W]=\sigma_{2W}.
\end{eqnarray}
Furthermore, we have
\begin{eqnarray}
\sigma_Y^2&=&\sigma_{X_1}^2+\sigma_{X_2}^2+\sigma_{A}^2+Q+N+2\sigma_{12}+2\sigma_{1A}+2\sigma_{2A}+2\sigma_{2W},\nonumber\\
\sigma_{Y|X_1}^2&=&\sigma_Y^2-\frac{(\sigma_{X_1}^2+\sigma_{12}+\sigma_{1A})^2}{\sigma_{X_1}^2},\nonumber\\
\sigma_{W|(Y,X_1,A)}^2&=&\Sigma_{11}-\Sigma_{12}\Sigma_{22}^{-1}\Sigma_{21},
\end{eqnarray}
where
\begin{eqnarray}
\Sigma_{11}&=&E[W^2]=Q,\\
\Sigma_{12}&=&E[W\cdot(Y,X_1,A)]=\Big{(}Q+\sigma_{2W}, 0 , 0\Big{)},\\
\Sigma_{21}&=&E[W\cdot (Y,X_1,A)^T]=\Big{(}Q+\sigma_{2W}, 0 , 0\Big{)}^T,\\
\Sigma_{22}&=&E[(Y,X_1,A)^T\cdot (Y,X_1,A)],\nonumber\\
&=&{\scriptsize{\left(
                                                   \begin{array}{ccc}
                                                     \sigma_y^2 & \sigma_{X_1}^2+\sigma_{1A}+\sigma_{12} & \sigma_{A}^2+\sigma_{1A}+\sigma_{2A} \\
                                                      \sigma_{X_1}^2+\sigma_{1A}+\sigma_{12} & \sigma_{X_1}^2 & \sigma_{1A} \\
                                                     \sigma_{A}^2+\sigma_{1A}+\sigma_{2A} & \sigma_{1A} & \sigma_{A}^2 \\
                                                   \end{array},
                                                 \right)}}\nonumber.
\end{eqnarray}
Therefore
\begin{align}
\hspace{-4.7cm}R_2&\leq h(Y|X_1)+h(W|Y,X_1,A)-h(W)-h(Z)\nonumber\\
    \hspace{-4.5cm}&=\frac{1}{2}\log{(\frac{\sigma_{Y|X_1}^2\sigma_{W|Y,X_1,A}^2}{QN})}\nonumber
\end{align}
{\footnotesize{
\begin{align}
     \hspace{0.6cm}&=\frac{1}{2}\log{\frac{\Big{(}N\hspace{-0.5mm}+\hspace{-0.5mm}\sigma_{X_2}^2\hspace{-0.5mm}+\hspace{-0.5mm}\sigma_{A}^2\hspace{-0.5mm}+\hspace{-0.5mm}Q\hspace{-0.5mm}-\hspace{-0.5mm}\sigma_{X_2}^2\rho_{12}^2\hspace{-0.5mm}-\hspace{-0.5mm}\sigma_{A}^2\rho_{1A}^2\hspace{-0.5mm}+\hspace{-0.5mm}2\sqrt{\sigma_{X_2}^2\sigma_{A}^2}\rho_{2A}\hspace{-0.5mm}-\hspace{-0.5mm}2\sqrt{\sigma_{X_2}^2\sigma_{A}^2}\rho_{12}\rho_{1A}\hspace{-0.5mm}+\hspace{-0.5mm}2\sqrt{\sigma_{X_2}^2Q}\rho_{2W}\Big{)}\Big{(}N(\rho_{1A}^2-1)-\sigma_{X_2}^2\Delta\Big{)}}{N\Big{(}(\rho_{1A}^2-1)(N+Q+\sigma_{X_2}^2\rho_{2W}^2+2\sqrt{\sigma_{X_2}^2Q}\rho_{2W})-\sigma_{X_2}^2\Delta\Big{)}}}\nonumber
\end{align}}}
\begin{align}
\hspace{-4.7cm}&=\Gamma(\sigma_{X_2}^2,\sigma_{A}^2,Q,N,\rho_{12},\rho_{1A},\rho_{2A},\rho_{2W})
\end{align}
and
\begin{align}
\hspace{-5.1cm}R_1+R_2&\leq h(Y)+h(W|Y,X_1,A)-h(W)-h(Z)\nonumber\\
        \hspace{-4.5cm}&=\frac{1}{2}\log{(\frac{\sigma_{Y}^2\sigma_{W|Y,X_1,A}^2}{QN})}\nonumber
\end{align}
{\footnotesize{
\begin{align}
        \hspace{1.3cm}&=\frac{1}{2}\log{\frac{\Big{(}N\hspace{-0.5mm}+\hspace{-0.5mm}\sigma_{X_1}^2\hspace{-0.5mm}+\hspace{-0.5mm}\sigma_{X_2}^2\hspace{-0.5mm}+\hspace{-0.5mm}\sigma_{A}^2\hspace{-0.5mm}+\hspace{-0.5mm}Q\hspace{-0.5mm}+\hspace{-0.5mm}2\sqrt{\sigma_{X_1}^2\sigma_{X_2}^2}\rho_{12}\hspace{-0.5mm}+\hspace{-0.5mm}2\sqrt{\sigma_{X_1}^2\sigma_{A}^2}\rho_{1A}\hspace{-0.5mm}+\hspace{-0.5mm}2\sqrt{\sigma_{X_2}^2\sigma_{A}^2}\rho_{2A}\hspace{-0.5mm}+\hspace{-0.5mm}2\sqrt{\sigma_{X_2}^2Q}\rho_{2W}\Big{)}\Big{(}N(\rho_{1A}^2\hspace{-0.5mm}-\hspace{-0.5mm}1)\hspace{-0.5mm}-\hspace{-0.5mm}\sigma_{X_2}^2\Delta\Big{)}}{N\Big{(}(\rho_{1A}^2-1)(N+Q+\sigma_{X_2}^2\rho_{2W}^2+2\sqrt{\sigma_{X_2}^2Q}\rho_{2W})-\sigma_{X_2}^2\Delta\Big{)}}}\nonumber
\end{align}}}
\begin{align}
        \hspace{-3.6cm}&=\Omega(\sigma_{X_1}^2,\sigma_{X_2}^2,\sigma_{A}^2,Q,N,\rho_{12},\rho_{1A},\rho_{2A},\rho_{2W}).
\end{align}

To sum up, our capacity region is upper bounded by the closure of the set that contains all the rates $(R_1,R_2)$ that satisfy
\begin{eqnarray}
R_2&\leq& \Gamma(\sigma_{X_2}^2,\sigma_{A}^2,Q,N,\rho_{12},\rho_{1A},\rho_{2A},\rho_{2W})\nonumber\\
R_1+R_2&\leq& \Omega(\sigma_{X_1}^2,\sigma_{X_2}^2,\sigma_{A}^2,Q,N,\rho_{12},\rho_{1A},\rho_{2A},\rho_{2W})\label{converseRegion}.
\end{eqnarray}
for some covariance matrix
\begin{equation}\Lambda=
\left(
  \begin{array}{ccccc}
    \sigma_{X_1}^2 & \sigma_{12} & \sigma_{1A} & 0 & 0 \\
    \sigma_{12} & \sigma_{X_2}^2 & \sigma_{2A} & \sigma_{2W} & 0 \\
    \sigma_{1A} & \sigma_{2A} & \sigma_{A}^2 & 0 & 0 \\
    0 & \sigma_{2W} & 0 & Q & 0 \\
    0 & 0 & 0 & 0 & N \\
  \end{array}
\right),
\end{equation}
such that
\begin{eqnarray}
\sigma_{X_1}^2\leq P_1 \ \ \ \ \sigma_{X_2}^2\leq P_2 \ \ \ \ \sigma_{A}^2\leq P_A.\nonumber
\end{eqnarray}
and the nonnegative-definiteness condition is satisfied, i.e.,
\begin{eqnarray}
\det{(\Lambda)}=\sigma_{1A}^2 \sigma_{2W}^2 N \sigma_{X_1}^2 \sigma_{A}^2 +2\sigma_{12}\sigma_{1A}\sigma_{2A}NQ-\sigma_{2A}^2N \sigma_{X_1}^2 Q -\sigma_{12}^2 N \sigma_{A}^2 Q +N\sigma_{X_1}^2 \sigma_{X_2}^2 \sigma_{A}^2 Q\geq 0,\label{condition1}
\end{eqnarray}
or, equivalently, as a function of $\rho_{12},\rho_{1A},\rho_{2A}$ and $\rho_{2W}$
\begin{eqnarray}
1 -\rho_{12}^2-\rho_{1A}^2-\rho_{2A}^2-\rho_{2W}^2 + \rho_{1A}^2\rho_{2W}^2+2\rho_{1A}\rho_{2A}\rho_{12}\geq 0.
\end{eqnarray}
The nonnegative-definiteness condition states that all leading principal minors of the covariance matrix need to be $\geq 0$, but we can easily convince ourselves that the condition (\ref{condition1}) is sufficient. This constraint is due to the fact that any symmetric nonnegative-definite matrix is a covariance matrix. Therefore, by adding this constraint we have a one-to-one correspondence between the problem in the form of random variables and the problem in the form of  $\sigma_{X_1}^2$, $\sigma_{X_2}^2$, $\sigma_{A}^2$, $\rho_{12}$, $\rho_{1A}$, $\rho_{2A}$, $\rho_{2W}$.

\par Next, we need to show that we can replace $\sigma_{X_1}^2$, $\sigma_{X_2}^2$ and $\sigma_{A}^2$ with $P_1$, $P_2$ and $P_A$, respectively. Let us assume that the region (\ref{converseRegion}) is indeed achievable, a fact that we will prove in the following subsection. Observing the term $\Omega(\sigma_{X_1}^2,\sigma_{X_2}^2,\sigma_{A}^2,Q,N,\rho_{12},\rho_{1A},\rho_{2A},\rho_{2W})$, which is dependent on $\sigma_{X_1}^2$, we can see that replacing $\sigma_{X_1}^2$ with $P_1$ increases the region. This is due to the fact that $\Omega(\sigma_{X_1}^2,\sigma_{X_2}^2,\sigma_{A}^2,Q,N,\rho_{12},\rho_{1A},\rho_{2A},\rho_{2W})$ increases with $\sigma_{X_1}^2$.  Furthermore, the term $\Gamma(\sigma_{X_2}^2,\sigma_{A}^2,Q,N,\rho_{12},\rho_{1A},\rho_{2A},\rho_{2W})$ is unaffected by $P_1$.

\par Now, we would like to show that we can replace $\sigma_{X_2}$ with $P_2$. Assume, by contradiction, that the best coding scheme is obtained for a $\sigma_{X_2}^2\leq P_2$ . Given this scheme, we reach a contradiction by showing that the informed encoder can utilize it's unused power $\tilde{P}_2=P_2-\sigma_{X_2}^2$. Consider the following new coding scheme. The informed encoder uses the given coding scheme to send its signal with power $\sigma_{X_2}^2$, but in addition, it also uses its extra unexploited power, $\tilde{P}_2$, to send the signal of the uninformed encoder. Thus, this new scheme is equivalent to the given coding scheme, but with the addition of sending the uninformed user's signal with power $\tilde{P}_2+P_1$. However, since the region (\ref{converseRegion}) increases with $\sigma_{X_1}^2$, taking $\sigma_{X_1}^2=\tilde{P}_2+P_1$ (where the extra power comes from the informed encoder, without exceeding the power constraints) increases (\ref{converseRegion}). Therefore taking $\sigma_{X_2}^2=P_2$ further increases (\ref{converseRegion}). Hence, we can replace $\sigma_{X_2}^2$ with $P_2$. A similar argument applies to taking $\sigma_A^2=P_A$.

\par Finally, after establishing the upper bound, we give an equivalent representation for it in the next equalities, (\ref{eq1.1})-(\ref{eq1.2}). This form of the upper bound is needed for the direct part. There, we will show that we can achieve this exact form from the general region of the Action-MAC; hence, we can conclude that the bound is tight.
\vspace{-3mm}

\begin{align}
I(A;Y&|X_1)+I(X_2;Y|X_1,A,W)\nonumber\\
           &=I(A;Y|X_1)+h(X_2|X_1,A,W)-h(X_2|X_1,A,W,Y)\nonumber\\
           &=I(A;Y|X_1)+h(X_2|X_1,A,W)-h(X_2-\hat{X}_2^{\text{lin}}(X_1,A,W,X_2+Z)|X_1,A,W,X_2+Z)\nonumber\\
           &=I(A,X_1;Y)+h(X_2|X_1,A,W)-h(X_2-(\beta_1 X_1+\beta_2 A +\beta_3 W+\beta_4 (X_2+Z))),\label{eq1.1}
\end{align}
and similarly,
\begin{align}
I(A,X_1;&Y)+I(X_2;Y|X_1,A,W)\nonumber\\
           &=I(A,X_1;Y)+h(X_2|X_1,A,W)-h(X_2-\hat{X}_2^{\text{lin}}(X_1,A,W,X_2+Z))\nonumber\\
           &=I(A,X_1;Y)+h(X_2|X_1,A,W)-h(X_2-(\beta_1 X_1+\beta_2 A +\beta_3 W+\beta_4 (X_2+Z))),\label{eq1.2}
\end{align}
where $\hat{X}_2(X_1,A,W,X_2+Z)=E[X_2|X_1,A,W,X_2+Z]$ is the optimal MMSE estimator of $X_2$ given $(X_1,A,W,X_2+Z)$, where $(X_1,X_2,A,W,Z,Y)$ are jointly Gaussian, hence it is also the linear optimal estimator, i.e. $\hat{X}_2=\hat{X}_2^{\text{lin}}$ and

\begin{equation}
\hat{X}_2=\hat{X}_2^{\text{lin}}(X_1,A,W,X_2+Z)=\left(
                                        \begin{array}{cccc}
                                          \sigma_{12} & \sigma_{2A} & \sigma_{2W} & P_2 \\
                                        \end{array}
                                      \right)\cdot \left(
                                                     \begin{array}{cccc}
                                                       P_1 & \sigma_{1A} & 0 & \sigma_{12} \\
                                                       \sigma_{1A} & P_A & 0 & \sigma_{2A} \\
                                                       0 & 0 & Q& \sigma_{2W} \\
                                                       \sigma_{12} & \sigma_{2A} & \sigma_{2W} & P_2+N\\
                                                     \end{array}
                                                   \right)^{-1}\cdot \left(
                                                                  \begin{array}{c}
                                                                    X_1 \\
                                                                    A\\
                                                                    W \\
                                                                    X_2+Z \\
                                                                  \end{array}
                                                                \right).
\end{equation}

$(\beta_1,\beta_2,\beta_3,\beta_4)$ are taken to be the coefficients of the optimal linear estimator of $X_2$ given $(X_1,A,W,X_2+Z)$.

To conclude the converse, in Lemma \ref{GaussaianLemma1} we gave a general upper bound to the region (\ref{Corollary1}). Lemma \ref{GaussaianLemma2} showed that it suffices to consider only jointly Gaussian random variables. Next, we calculated the expressions for the capacity region and showed that we can replace $\sigma_{X_1}^2$, $\sigma_{X_2}^2$ and $\sigma_{A}^2$ with $P_1$, $P_2$ and $P_A$, respectively. Finally, we derived a set of equalities, (\ref{eq1.1})-(\ref{eq1.2}), that will be shown to be achievable in the direct part.

\subsection{Direct Part}\label{GaussianDirect}
In this section, we will choose specific distributions of our random variables and place them in the capacity region (\ref{Corollary1}), given in Corollary \ref{intuitiveRegion} of Theorem \ref{TheoremMain}. We will see that, by this choice, we are able to achieve the upper bound (\ref{eq1.1})-(\ref{eq1.2}) found in the converse, concluding that it is, indeed, tight. To do so, let us choose $(X_1,X_2,A,W,Y)$ to be jointly Gaussian with
\begin{eqnarray}
E[X_1^2]=P_1,\ \ \ \ \ \ E[X_2^2]=P_2, \ \ \ \ \ \ E[A^2]=P_A.
\end{eqnarray}
In addition, we take the auxiliary random variable $U$ to be $U= X_1+X_2+\tilde{\beta}_3 S$ (the notation $\tilde{\beta}_3$ will become clearer later). Note that $U$ is a function of $X_1,X_2$ and $S$ so that the Markov $U-(X_1,X_2,S,A)-Y$ holds. Now, take the expression for $R_1+R_2$ from (\ref{Corollary1}), i.e. $I(X_1,A,U;Y)-I(X_1,U;S|A)$:
\begin{align}
I(X_1,A,U;Y)-&I(X_1,U;S|A)= I(X_1,A;Y)+ I(U;Y|X_1,A)-I(U;W|X_1,A)\nonumber\\
                         &= I(X_1,A;Y)+ h(U|W,X_1,A)-h(U|Y,X_1,A)\nonumber\\
                         &\stackrel{(a)}{=} I(X_1,A;Y)+ h(X_1+X_2+\tilde{\beta}_3(A+W)|X_1,A,W)-h(X_1+X_2+\tilde{\beta}_3(A+W)|Y,X_1,A)\nonumber\\
                         &= I(X_1,A;Y)+ h(X_2|X_1,A,W)-h(X_2+\tilde{\beta}_3 W|Y,X_1,A)\nonumber\\
                         &\stackrel{(b)}{=} I(X_1,A;Y)+ h(X_2|X_1,A,W)-h(X_2+\tilde{\beta}_3 W-\beta_4 Y|Y,X_1,A)\nonumber\\
                         &\stackrel{(c)}{=} I(X_1,A;Y)+ h(X_2|X_1,A,W)-h(X_2+\tilde{\beta}_3 W-\beta_4 (X_2+W+Z)|Y,X_1,A)\nonumber\\
                         &\stackrel{(d)}{=} I(X_1,A;Y)+ h(X_2|X_1,A,W)-h(X_2-\beta_3 W-\beta_4 (X_2+Z)|Y,X_1,A)\nonumber\\
                         &=I(X_1,A;Y)+ h(X_2|X_1,A,W)-h(X_2-\beta_3 W-\beta_4 (X_2+Z)-\beta_1 X_1-\beta_2 A|Y,X_1,A)\nonumber\\
                         &\stackrel{(e)}{=} I(X_1,A;Y)+ h(X_2|X_1,A,W)-h(X_2-\beta_1 X_1-\beta_2 A-\beta_3 W-\beta_4 (X_2+Z)),
\end{align}
where\\
$(a)$ is achieved by submitting $U= X_1+X_2+\tilde{\beta}_3 S$,\\
$(b)$ is due to the fact that adding a deterministic term, in this case $-\beta_4 Y$, does not effect the entropy,\\
$(c)$ is achieved be replacing $Y$ with $X_1+X_2+A+W+Z$ (according to the channel model) where $X_1$ and $A$ are conditioned upon and,therefore, known,\\
$(d)$ is due to the fact that we take $-\beta_3=\tilde{\beta}_3-\beta_4$.
\par Finally, to show equality $(e)$, let us look at the expression $h(X_2-\beta_1 X_1-\beta_2 A-\beta_3 W-\beta_4 (X_2+Z)|Y,X_1,A)$. It is clear that
\begin{equation}
h(X_2-\beta_1 X_1-\beta_2 A-\beta_3 W-\beta_4 (X_2+Z)|Y,X_1,A)\leq h(X_2-\beta_1 X_1-\beta_2 A-\beta_3 W-\beta_4 (X_2+Z)),\nonumber
\end{equation}
 since conditioning reduces entropy. Now consider
\begin{align*}
h(X_2-\beta_1 X_1-\beta_2 A-\beta_3 W-\beta_4 (X_2+Z)|Y,X_1,A)\\
                            &\hspace{-20mm}\geq h(X_2-\beta_1 X_1-\beta_2 A-\beta_3 W-\beta_4 (X_2+Z)|Y,X_1,A,W)\\
                            &\hspace{-20mm}= h(X_2\hspace{-0.5mm}-\hspace{-0.5mm}\beta_1 X_1\hspace{-0.5mm}-\hspace{-0.5mm}\beta_2 A\hspace{-0.5mm}-\hspace{-0.5mm}\beta_3 W\hspace{-0.5mm}-\hspace{-0.5mm}\beta_4 (X_2\hspace{-0.5mm}+\hspace{-0.5mm}Z)|X_1\hspace{-0.5mm}+\hspace{-0.5mm}X_2\hspace{-0.5mm}+\hspace{-0.5mm}A\hspace{-0.5mm}+\hspace{-0.5mm}W\hspace{-0.5mm}+\hspace{-0.5mm}Z,X_1,A,W)\\
                            &\hspace{-20mm}= h(X_2-\beta_1 X_1-\beta_2 A-\beta_3 W-\beta_4 (X_2+Z)|X_2+Z,X_1,A,W)\\
                            &\hspace{-20mm}= h(X_2-\beta_1 X_1-\beta_2 A-\beta_3 W-\beta_4 (X_2+Z)),
\end{align*}
where the last step follows from the fact that we can choose $\beta_1$, $\beta_2$, $\beta_3$ and $\beta_4$ such that $(X_2-\beta_1 X_1-\beta_2 A-\beta_3 W-\beta_4 (X_2+Z))$ is the error of the optimal MMSE estimate of $X_2$ given $(X_2+Z,X_1,A,W)$ and thus independent of $(X_2+Z,X_1,A,W)$. Therefore, we have the following equality
\begin{equation}
h(X_2-\beta_1 X_1-\beta_2 A-\beta_3 W-\beta_4 (X_2+Z)|Y,X_1,A)= h(X_2-\beta_1 X_1-\beta_2 A-\beta_3 W-\beta_4 (X_2+Z)).\nonumber
\end{equation}
The direct proof for the bound on $R_2$ is obtained in a similar manner.

\par Summing up the direct part, we have shown that for this choice of $U$ and by choosing $(X_1,X_2,A,W,Y)$ to be jointly Gaussian, we achieve the upper bound (\ref{eq1.1})-(\ref{eq1.2}) from the converse section. Hence, we obtain a tight expression for the capacity region of this channel.


\subsection{Action-dependent point-to-point Channel}
The capacity for the action-dependent point-to-point channel was left open in \cite{Tsachy-Weissman}. An achievable scheme was given, but was not shown to be tight. Here, we use similar steps to the proof of the Gaussian Action-MAC to find the capacity of the point-to-point channel. Moreover, since the point-to-point channel is a special case of the Action-MAC, its capacity expression can be derived directly from the capacity region of the Gaussian Action-MAC found in the previous subsection. This is done by considering only the private message, i.e. taking $R_1=0$.

\par The outline for the direct proof of the capacity expression is given here. An alternative proof is available in Appendix \ref{AppendixPTPGauss}. In the alternative proof we show a one-to-one correspondence to the previously solved setting of the GGP MAC with only a common message \cite{Baruch-Shamai-Verd}. After establishing a correspondence, we use the result of the GGP MAC to find the capacity of the point-to-point channel. Furthermore, we show that the achievable region found in \cite{Tsachy-Weissman} is equivalent to the capacity expression and is, indeed, tight.
\par Let us recall the setting of the Gaussian action-dependent point-to-point channel \cite{Tsachy-Weissman}. Here, the channel probability is given by the equation
\begin{eqnarray}
Y_i=X_i+S_i+Z_i=X_i+A_i+W_i+Z_i,
\end{eqnarray}
where $W$ and $Z$ are defined as before, in addition to the power constraints:
\begin{eqnarray}
\frac{1}{n}\sum_{i=1}^n A_i(M)\leq P_A,\ \ \ \ \ \ \ \frac{1}{n}\sum_{i=1}^n X_{i}(M)^2\leq P_X.
\end{eqnarray}
Now, using similar steps as in the proof of Theorem \ref{GaussianTheorem} we obtain a closed capacity expression for the Gaussian action-dependent point-to-point channel given in the next corollary.
\begin{corollary}\label{GaussPTPCorollary1}
The capacity of the Gaussian action-dependent point-to-point channel is
\begin{eqnarray}
C= \frac{1}{2}\log\Big{(}\frac{(N+Q+P_X+P_A+2\sqrt{P_XP_A}\rho_{XA}+2\sqrt{P_XQ}\rho_{XW})(N+P_X(1-\rho_{2A}^2-\rho_{XW}^2))}{N(N+Q+P_X\rho_{XW}^2+2\sqrt{P_XQ}\rho_{XW}-P_X(1-\rho_{XA}^2-\rho_{XW}^2))} \Big{)},\label{GaussainPTP1}
\end{eqnarray}
for some $\rho_{XA}\in[0,1]$ and $\rho_{XW}\in[-1,0]$, such that
\begin{eqnarray}
\rho_{XA}^2+\rho_{XW}^2\leq 1,
\end{eqnarray}
where
\begin{eqnarray}
\rho_{XA}=\frac{\sigma_{XA}}{\sqrt{P_AP_X}},\ \ \rho_{XW}=\frac{\sigma_{XW}}{\sqrt{P_X\sigma_W^2}}.\nonumber
\end{eqnarray}
\end{corollary}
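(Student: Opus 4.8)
The plan is to derive Corollary \ref{GaussPTPCorollary1} in the two ways the text advertises: directly as the $R_1=0$, $X_1=\emptyset$ specialization of Theorem \ref{GaussianTheorem}, and self-contained by repeating the three-stage argument of Section \ref{GaussMac}. For the specialization I would set $P_1=0$ and kill every correlation involving the absent input $X_1$, i.e. $\rho_{12}=\rho_{1A}=0$, rename the private input $X_2\to X$ with power $P_X$, and rename $\rho_{2A}\to\rho_{XA}$, $\rho_{2W}\to\rho_{XW}$. Substituting $\rho_{1A}=0$ into (\ref{RegionTheorem1}) makes $\Delta=1-\rho_{XA}^2-\rho_{XW}^2$, and after flipping the sign of numerator and denominator and collecting the $P_X$-terms the bound on $R_2$ collapses to (\ref{GaussainPTP1}); the feasibility condition $\Delta\ge 0$ becomes $\rho_{XA}^2+\rho_{XW}^2\le 1$. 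This is pure algebra and is the quick route.

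For the self-contained proof I would mirror the Action-MAC argument line by line. The converse starts from the point-to-point region $R\le I(U;Y)-I(U;S|A)$ (the specialization of Theorem \ref{TheoremMain} recorded in Case 1), and upper-bounds it by $R\le I(A;Y)+I(X;Y|A,W)$ exactly as in Lemma \ref{GaussaianLemma1} (writing $I(U;S|A)=-I(X,U;W+A|A)$, using independence of $W$ and $A$, ``conditioning reduces entropy'', and the Markov relation $U-(X,S,A)-Y$). Then, as in Lemma \ref{GaussaianLemma2}, I would rewrite $I(A;Y)+I(X;Y|A,W)=h(Y)+h(W|Y,A)-h(W)-h(Z)$ and invoke the maximum differential entropy lemma to conclude the bound is maximized by jointly Gaussian $(X,A,W,Z,Y)$. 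Evaluating $h(Y)$ and $h(W|Y,A)$ in closed form as functions of $\sigma_X^2,\sigma_A^2,Q,N,\rho_{XA},\rho_{XW}$, one checks monotonicity in $\sigma_X^2$ and $\sigma_A^2$ (here there is no second encoder to hand power to, so the argument is even simpler than in the MAC) and sets $\sigma_X^2=P_X$, $\sigma_A^2=P_A$; the nonnegative-definiteness of the $3\times3$ covariance matrix of $(X,A,W)$ supplies the constraint $\rho_{XA}^2+\rho_{XW}^2\le 1$.

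For the direct part I would copy Section \ref{GaussianDirect}: take $(X,A,W,Z,Y)$ jointly Gaussian with the chosen moments and the auxiliary $U=X+\tilde\beta S$, which is a function of $(X,S)$ so that $U-(X,S,A)-Y$ holds; substituting into $R\le I(U;Y)-I(U;S|A)$ and using that adding a term deterministic in the conditioning variables leaves a conditional entropy unchanged, together with the MMSE-orthogonality trick---choosing coefficients with $-\beta_3=\tilde\beta-\beta_4$ so that $X-\hat X^{\mathrm{lin}}(A,W,X+Z)$ is independent of $(A,W,X+Z)$---makes the achievable expression coincide with the converse bound $I(A;Y)+h(X|A,W)-h\!\left(X-\hat X^{\mathrm{lin}}(A,W,X+Z)\right)$, proving tightness. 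The closed form (\ref{GaussainPTP1}) then follows by computing this MMSE error variance from the relevant matrix inverse and optimizing over the free parameters, restricting to $\rho_{XA}\in[0,1]$ and $\rho_{XW}\in[-1,0]$ since the objective is monotone in these ``helpful'' signs (positive reinforcement of the action sequence seen at the output, negative correlation of $X$ with $W$ to partly pre-cancel the interference). I expect the main obstacle to be the bookkeeping in this last step---matching the direct-part entropy expression to the converse term by term via the orthogonality argument and the sign conventions for $\tilde\beta$, $\beta_3$, $\beta_4$---rather than any conceptual difficulty; the remaining matrix-inversion algebra that produces (\ref{GaussainPTP1}) is routine.
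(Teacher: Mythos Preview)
Your proposal is correct and follows essentially the same route as the paper: the self-contained argument mirrors the three-stage Gaussian Action-MAC proof (upper bound $I(A,U;Y)-I(U;S|A)\le I(A;Y)+I(X;Y|A,W)$ as in Lemma~\ref{GaussaianLemma1}, Gaussian optimality as in Lemma~\ref{GaussaianLemma2}, full-power monotonicity, and achievability via $U=X+\tilde\beta S$ with the MMSE-orthogonality trick), and the specialization from Theorem~\ref{GaussianTheorem} with $P_1\to 0$, $\rho_{12}=\rho_{1A}=0$ is exactly the alternative derivation the paper records. Aside from a sign typo in your sketch (you wrote $I(U;S|A)=-I(X,U;W+A|A)$; the minus sign should not be there), the plan matches the paper's proof.
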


Let us give an outline for the proof, using similar steps as for the proof of the Gaussian Action-MAC. First, recall the capacity expression for the general action-dependent point-to-point channel:
\begin{equation}
C=I(A,U;Y)-I(U;S|A). \label{ptpCapacity}
\end{equation}
Taking similar steps to those taken in the converse of the Gaussian Action-MAC setting, we can show that (\ref{ptpCapacity}) is upper bounded by:
\begin{equation}
I(A,U;Y)-I(U;S|A)\leq I(A;Y)+I(X;Y|W,A)\label{ptpGauss1}.
\end{equation}
Second, similarly to Lemma \ref{GaussaianLemma2}, we show that (\ref{ptpGauss1}) is maximized by taking $(X,A,W,Z,Y)$ to be jointly Gaussian.
The next step is to calculate the expression
\begin{align}
I(A;Y)+I(X;Y|W,A)&=h(Y)+h(W|Y,A)-h(W)-h(Z)\nonumber\\
                 &=\frac{1}{2}\log{(\frac{\sigma_{Y}^2\sigma_{W|Y,A}^2}{QN})}.
\end{align}
We calculate the expression for fixed second moments
\begin{eqnarray}
E[X_2]=\sigma_X^2\leq P_X \ \ \ \ \ E[A^2]=\sigma_A^2\leq P_A \ \ \ \ \ E[XA]=\sigma_{XA} \ \ \ \ \ E[XW]=\sigma_{XW}
\end{eqnarray}
and show that replacing $\sigma_X^2$ and $\sigma_A^2$ with $P_X$ and $P_A$, respectively, further increases the region. Following these steps shows that (\ref{GaussainPTP1}) is an upper bound for the capacity of the Gaussian setting.

\par Third, with the next set of equalities we can show that this upper bound is achievable in the same manner as in section \ref{GaussianDirect}.
\begin{equation}
I(A;Y)+I(X;Y|S,A)=I(A;Y)+h(X|A,W)-h(X-\hat{X}^{\text{lin}}(A,W,X+Z))\label{ptpAchieve},
\end{equation}
where $\hat{X}^{\text{lin}}(A,W,X+Z)$ is the optimal linear MMSE estimator of $X$ given $(A,W,X+Z)$.
For the direct part, we can achieve (\ref{ptpAchieve}) from the general capacity expression by taking $U=X+\beta S$, where $S= A+W$. Therefore, we can conclude that the expression (\ref{GaussainPTP1}) is indeed, tight and is the capacity for the Gaussian action-dependent point-to-point channel.

\par Alternatively, we can also derive (\ref{GaussainPTP1}) in two additional manners. Firstly, since the point-to-point channel is a special case of the MAC, we can derive (\ref{GaussainPTP1}) directly from the region (\ref{RegionTheorem2}) given in Theorem \ref{GaussianTheorem}. This is done by only considering $R_2$, i.e. $R_1=0$, and taking the following variables to be $P_1 \longrightarrow  0$, $P_2\longrightarrow P_X$, $P_A \longrightarrow P_A$, $Q \longrightarrow Q$, $N \longrightarrow N$, $\rho_{12} \longrightarrow 0$, $\rho_{1A} \longrightarrow 0$, $\rho_{2A} \longrightarrow \rho_{XA}$ and $\rho_{2W} \longrightarrow \rho_{XW}$.
Secondly, we show in Appendix \ref{AppendixPTPGauss} that we can obtain (\ref{GaussainPTP1}), using the equivalent representation
\begin{eqnarray}
C= \frac{1}{2}\log\Big{(}1+ \frac{P_X(1-\rho_{XA}^2-\rho_{XW}^2)}{N}\Big{)}+ \frac{1}{2}\log\Big{(}1+ \frac{(\sqrt{P_A}+\rho_{XA}\sqrt{P_X})^2}{P_X(1-\rho_{XA}^2-\rho_{XW}^2)+(\sigma_W +\rho_{XW}\sqrt{P_X})^2+N}\Big{)}.\label{GaussainPTP22}
\end{eqnarray}
The representation (\ref{GaussainPTP22}) is simply obtained by reorganizing (\ref{GaussainPTP1}). We prove the capacity of the action-dependent point-to-point channel is equal to the equivalent representation (\ref{GaussainPTP22}) by using a one-to-one correspondence to the GGP MAC \cite{Baruch-Shamai-Verd} with only a common message.


\section{Duality Channel-Source Coding with Action}\label{DualSection}


\par In this section we recognize and discuss the duality between the MAC with action-dependent state information at one encoder and the ``Successive Refinement with Actions'' rate distortion setting. The information-theoretic duality between channel coding and source coding was first recognized by Shannon in his milestone paper \cite{Shannon60}. Chiang and Cover \cite{ChiangCover01}, as well as Pradhan, Chou and Ramchandran \cite{Pradhan_duality03}, further expanded the discussion of duality to the basic settings with side information.  In addition, duality for source and channel coding with action-dependent side information was noted by Kittichokechai, Tobias, Oechterin and Skoglund \cite{KittichokechaiTobias}.
\par Source coding with side information was presented by Wyner and Ziv in their landmark paper \cite{Wyner_ziv76_side_info_decoder}. They discussed the case where side information is available noncausally at the decoder and not available at the encoder. This case is dual to the Gel'fand-Pinsker \cite{Gelfand-Pinsker} channel coding problem, as shown in \cite{ChiangCover01} and \cite{Pradhan_duality03}. Operational duality between these two setting was shown by Gupta and Verdu \cite{GuptaVerdue_Duality10}. Rate distortion problems for two decoders, where only one of them has access to correlated side information were considered by Kaspi \cite{Kaspi}, as well as by Heegrad and Berger \cite{HeegardBerger}. This model was extended by Steinberg and Merhav in \cite{Steinberg_merhav04_sucessuve_refienment_wyner_ziv} to successive refinement with side information. A successive refinement model with conditionally less noisy side information was studied by Timo, Oechtering and Wigger in \cite{TimoOechteringWigger12}.

\par The novel idea of source coding where the system can take actions that affect the availability, quality or nature of the side information was introduced in Weissman et al. in \cite{VendingMachine}. They presented the rate distortion setting with a side information vender at the decoder. This setting was shown to be dual to Weisman's action-dependent point-to-point channel in \cite{KittichokechaiTobias}. The ``Successive Refinement with Actions'' rate distortion setting, introduced in \cite{Khiang-Asnani-Weissman}, is an extension of these previously discussed rate distortion settings to the case where we have two decoders, where only one has access to vender side information. Additionally, the encoder transmits the source sequence to the decoders with a common rate and a private rate. This setting is illustrated in Fig. \ref{RateDistortionFig}.

\par Here, we characterize the dual relationship between the ``Successive Refinement with Actions'' rate distortion setting and the Action-MAC channel coding setting. We start by revising the ``Successive Refinement with Actions'' rate distortion model as well as the theorem for the rate region $\mathcal{R}(D_1,D_2)$. Next, we state the equivalence relationships of these channel coding and rate distortion problems and give a set of duality principles that show the connections between them. We end this section with a detailed discussion on the duality between the two settings; this is followed by additional duality relationships that are deduced as special cases. For instance, we recognize the duality between MAC with state information at one encoder \cite{Baruch-Shamai-Verd} and a special case of the ``Successive Refinement for the Wyner-Ziv problem'' \cite{Steinberg_merhav04_sucessuve_refienment_wyner_ziv}.

 \begin{figure}[h!]
    \begin{center}
        \begin{psfrags}
            \psfragscanon
            \psfrag{A}[][][0.8]{\ \ \ \ \ \ \ \ \ \ $X^n$}
            \psfrag{B}[][][0.8]{\ \ \ \ \ \ \ \ \ \ \ \ \ \ \ \ \ \ \ \ $\hat{X}_1^n$}
            \psfrag{C}[][][0.8]{\ \ \ \ \ \ \ \ \ \ \ \ \ \ \ \ \ \ \ \ $\hat{X}_2^n$}
            \psfrag{D}[][][0.8]{\ \ \ \ \ \ \ \ \ \ \ \ \ \ \ \ \ \ \ \ \ \ \ \ \ \ \ $T_1(X^n)\in \{1,2,...,2^{nR_1}\}$}
            \psfrag{E}[][][0.8]{\ \ \ \ \ \ \ \ \ \ \ \ \ \ \ \ \ \ \ \ \ \ \ \ \ \ \ $T_2(X^n)\in \{1,2,...,2^{nR_2}\}$}
            \psfrag{F}[][][0.8]{\ \ \ \ \ \ \ \ \ \ \ \ \ \ \ Encoder}
            \psfrag{G}[][][0.8]{\ \ \ \ \ \ \ \ \ \ \ \ \ \ \ \ \ Uninformed}
            \psfrag{H}[][][0.8]{\ \ \ \ \ \ \ \ \ \ \ \ \ \ \ \ \ Informed}
            \psfrag{I}[][][0.8]{\ \ \ \ \ \ \ \ \ \ \ \ \ \ \ \ \ Decoder}
            \psfrag{J}[][][0.8]{\ \ \ \ \ \ \ \ \ \ \ \ \ \ \ \ \ Decoder}
            \psfrag{K}[][][0.8]{\ \ \ \ \ \ \ \ \ \ \ \ \ \ \ \ Vender}
            \psfrag{L}[][][0.8]{\ \ \ \ \ \ \ \ \ \ \ \ \ \ \ \ \ $p(s|a,x)$}
            \psfrag{M}[][][0.8]{\ \ \ \ \ \ \ \ \ $A^n(T_1,T_2)$}
            \psfrag{N}[][][0.8]{$S^n$}
            \centerline{\includegraphics[scale = .6]{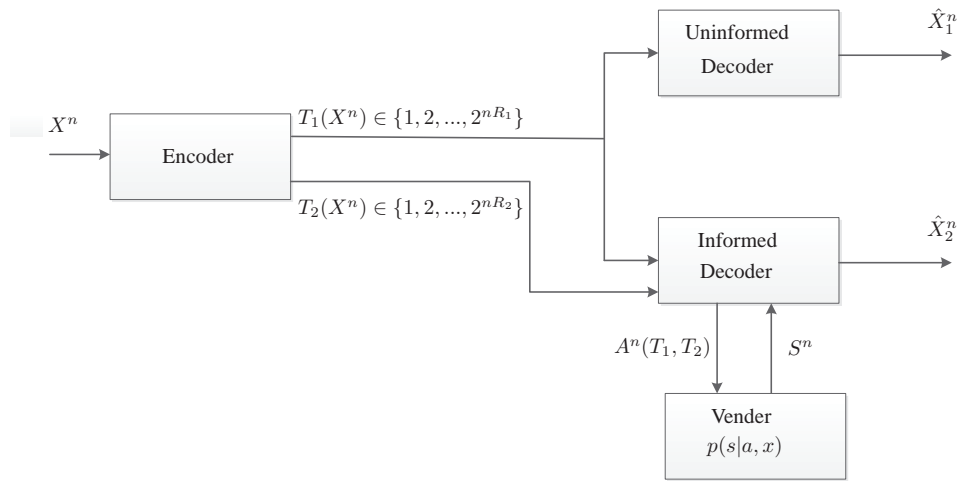}}
            \caption{The ``Successive Refinement with Actions'' rate distortion setting, presented in \cite{Khiang-Asnani-Weissman}} \label{RateDistortionFig}
            \psfragscanoff
        \end{psfrags}
     \end{center}
 \end{figure}

\subsection{The ``Successive Refinement with Actions'' rate distortion setting}\label{RateDistortionSetting}
Let us revise the formal definition of the dual rate distortion problem. We consider a source sequence $X_1,X_2,...,X_n$ of i.i.d. random variables where each variable is drawn $\sim p(x)$ from a source alphabet $\mathcal{X}$. Let $\hat{\mathcal{X}}_1$ and $\hat{\mathcal{X}}_2$ denote the reconstruction alphabets and consider two, single letter distortion measures
\begin{eqnarray}
d_j:\mathcal{X}\times\hat{\mathcal{X}}_j\rightarrow\mathbb{R},\ \ \ j=1,2.
\end{eqnarray}
The distortion between $n$-sequences in $\mathcal{X}\times\hat{\mathcal{X}}_j$ is defined in the usual way:
\begin{eqnarray}
d(x^n,\hat{x}_j^n)=\frac{1}{n}\sum_{i=1}^n d(x_i,\hat{x}_{j,i}), \ \ \ \text{for } j=1,2.
\end{eqnarray}

\begin{definition}
A $((2^{nR_1},2^{nR_2}),n)$ rate distortion code consists of an encoder and two sets of decoders, one which receives the message at rate $R_1$ and one which receives the message at rates $R_1$,$R_2$. The code is defined by encoding functions
\begin{eqnarray}
T_{1}&:&\mathcal{X}^n\rightarrow \{1,2,...,2^{nR_1}\}\nonumber\\
T_{2}&:&\mathcal{X}^n\rightarrow \{1,2,...,2^{nR_2}\},
\end{eqnarray}
an action strategy
\begin{eqnarray}
f&:&\{1,2,...,2^{nR_1}\}\times\{1,2,...,2^{nR_2}\}\rightarrow \mathcal{A}^n,
\end{eqnarray}
and decoding functions
\begin{eqnarray}
g_{1,n}&:&\{1,2,...,2^{nR_1}\}\rightarrow\hat{\mathcal{X}}_1^n \nonumber\\
g_{2,n}&:&\mathcal{S}^n\times\{1,2,...,2^{nR_1}\}\times\{1,2,...,2^{nR_2}\}\rightarrow\hat{\mathcal{X}}_2^n
\end{eqnarray}
\end{definition}
\begin{definition}
A quintuple $(R_1,R_2,D_1,D_2,C)$ is said to be achievable if, for all $\epsilon >0$ and for a sufficiently large $n$, there exists a $((2^{nR_1},2^{nR_2}),n)$ code satisfying
\begin{eqnarray}
E\Big{[}\sum_{i=1}^n d(X_i,\hat{X}_{j,i})\Big{]}\leq n(D+\epsilon)\ \ \ \text{for}\ \ \  j=1,2
\end{eqnarray}
and
\begin{eqnarray}
E\Big{[}\sum_{i=1}^n \Lambda(A_i)\Big{]}\leq n(C+\epsilon),
\end{eqnarray}
where $\Lambda$ is the cost function.
\end{definition}
\begin{definition}
 The rate distortion region $\mathcal{R}(D_1D_2)$ is defined as the closure of the set for all achievable
rate distortion quintuples $(R_1,R_2,D_1,D_2,C)$.
\end{definition}

The rate region $\mathcal{R}(D_1D_2)$ for the ``Successive Refinement with Actions'' setting, illustrated in Fig. \ref{RateDistortionFig}, was proven in \cite{Khiang-Asnani-Weissman} and is the closure of the set of all the rate tuples $(R_1,R_2)$ such that,
\begin{eqnarray}
R_1&\geq& I(X;\hat{X}_1)\nonumber\\
R_1+R_2&\geq& I(X;\hat{X}_1)+ I(A;X|\hat{X_1})+I(X;U|S,A,\hat{X}_1), \label{DistortionRegion}
\end{eqnarray}
for some joint distribution
\begin{eqnarray}
P(x,a,u,s,\hat{x}_1)=P(x)P(a,u,\hat{x}_1|x)P(s|x,a),
\end{eqnarray}
such that
\begin{eqnarray}
E\Big{[}d_1(X,\hat{X}_{1})\Big{]}&\leq& D_1,\\
E\Big{[}d_2(X,\hat{X}_{2}(S,U))\Big{]}&\leq& D_2,\\
E\Big{[}\Lambda(A)\Big{]}&\leq& C.
\end{eqnarray}


\subsection{Duality results between the Action-MAC and the ``Successive Refinement with Actions''}

\par Using a set of simple duality transformation principles we obtain a precise characterization of the functional duality between the Action-MAC and the ``Successive Refinement with Actions'' rate distortion setting. The duality transformation principles between the two settings are given in Table \ref{DualityTable}. In other words, for a given channel coding problem we obtain a rate distortion problem and vice versa. Here, the roles of encoder and decoder are functionally interchangeable and the input-output joint distribution is equivalent with some renaming of variables.

\begin{center}
\begin{table}[h!]
\caption{Duality transformation principles between the ``Successive Refinement with Actions'' rate distortion setting and the Action-MAC}\label{DualityTable}
\hspace{13mm}\begin{tabular}{|c|c|}
\hline
\textbf{``Successive Refinement with Actions''} & \textbf{Action-MAC}\\ \hline\hline
 $R\geq R(D_1,D_2)$ & $R\leq C$  \\ \hline
Decoder inputs / Encoder outputs: & Encoder inputs / Decoder outputs:\\
$T_1\in\{1,2,...,2^{nR_1}\}$ & $M_1\in\{1,2,...,2^{nR_1}\}$\\
$T_2\in\{1,2,...,2^{nR_2}\}$ & $M_2\in\{1,2,...,2^{nR_2}\}$\\ \hline
Decoder outputs / Source reconstruction:  & Encoder outputs / Channel input:\\
$\hat{X}_1^n$, $\hat{X}_2^n$ & $X_1^n$, $X_2^n$\\ \hline
Encoder input / Source:  & Decoder input / Channel output:\\
$X^n$ & $Y^n$ \\ \hline
Decoder functions: & Encoder functions:\\
$g_1:\{1,2,...,2^{nR_1}\}\rightarrow \mathcal{\hat{X}}_1^n$ & $f_1:\{1,2,...,2^{nR_1}\}\rightarrow \mathcal{X}_1^n$\\
$g_2:\{1,2,...,2^{nR_1}\}\times\{1,2,...,2^{nR_2}\}\times \mathcal{S}^n\rightarrow \mathcal{\hat{X}}_2^n$ &
$f_2:\{1,2,...,2^{nR_1}\}\times\{1,2,...,2^{nR_2}\}\times \mathcal{S}^n\rightarrow \mathcal{X}_2^n $\\ \hline
Action strategy: & Action encoder:\\
$f_A:\mathcal{T}_1\times\mathcal{T}_2\rightarrow \mathcal{A}^n$ & $f_A:\mathcal{M}_1\times\mathcal{M}_2\rightarrow \mathcal{A}^n$\\ \hline
$U$ auxiliary random variable & $U$ auxiliary random variable\\ \hline
$S$ side information & $S$ state information\\ \hline
Joint distribution $P(x,a,s,\hat{x}_1,u,\hat{x}_2)$ & Joint distribution: $P(a,s,x_1,u,x_2,y)$\\ \hline
Markov $S-(A,X)-U,\hat{X}_1$ & Markov $S-A-X_1$\\
\hline
\end{tabular}
\end{table}
\end{center}

\begin{center}
\begin{table}[h!]
\caption{Corner points $(R_1,R_2)$ of the Action-MAC setting and its dual rate distortion setting}\label{CornerPointsTable2}
\begin{tabular}{|c||c c|}
\hline
& \textbf{$R_1$} & \textbf{$R_2$}\\ \hline\hline
\textbf{Rate} & $I(X;\hat{X}_1)+ I(A;X|\hat{X_1})+I(X;U|A,\hat{X}_1)-I(S;U|A,\hat{X}_1)$ & $0$ \\
\textbf{Distortion} & $I(\hat{X}_1;X)$ & $I(A;X|\hat{X}_1) + I(U;X|\hat{X}_1,A) - I(U;S|\hat{X}_1,A)$\\ \hline
\textbf{Action} & $I(X_1;Y)+I(A;Y|X_1)+ I(Y;U|A,X_1) - I(S;U|A,X_1)$ & $0$\\
\textbf{MAC} & $I(X_1;Y)$ & $I(A;Y|X_1) + I(U;Y|X_1,A) - I(U;S|X_1,A)$\\
\hline
\end{tabular}
\end{table}
\end{center}

 \begin{figure}[h!]
    \begin{center}
        \begin{psfrags}
            \psfragscanon
            \psfrag{A}[][][0.6]{\ \ \ \ \ \ \ \ \ \ \ \ \ \ \ \ \ \  Encoder}
            \psfrag{B}[][][0.6]{\ \ \ \ \ \ \ \ \ \ \ \ \ \ \ \ \ \  Decoder}
            \psfrag{C}[][][0.6]{\ \ \ \ \ \ \ \ \ \ \ \ \ \ \ \ \ \ \ \   Uninformed}
            \psfrag{D}[][][0.6]{\ \ \ \ \ \ \ \ \ \ \ \ \ \  \ \   Informed}
            \psfrag{E}[][][0.6]{\ \ \ \ \ \ \ \ \ \ \ \ \ \ \ \   MAC}
            \psfrag{F}[][][0.6]{\ \ \ \ \ \ \ \ \ \ \ \ \ \ \ $p(y|x_1,x_2,s)$}
            \psfrag{G}[][][0.6]{\ \ \ \ $Y^n$}
            \psfrag{H}[][][0.6]{\ \ \ \ \ $X_1^n(M_1)$}
            \psfrag{I}[][][0.6]{\ \ \ \ \ \ \ \ \ \ \ $X_2^n(M_1,M_2,S^n)$}
            \psfrag{J}[][][0.6]{$M_1$}
            \psfrag{K}[][][0.6]{$M_2$}
            \psfrag{L}[][][0.6]{\ \ \ \ \ $(\hat{M}_1,\hat{M}_2)$}
            \psfrag{M}[][][0.6]{$A^n(M_1,M_2)$}
            \psfrag{N}[][][0.6]{$S^n$}
            \psfrag{O}[][][0.6]{\ \ \ \ \ \  $p(s|a)$}
            \psfrag{P}[][][0.6]{\ \ \  $X^n$}
            \psfrag{Q}[][][0.6]{\ \ \ \ \ \ \ \ $T_1$}
            \psfrag{R}[][][0.6]{\ \ \ \ \ \ \ \ $T_2$}
            \psfrag{S}[][][0.6]{$A^n(T_1,T_2)$}
            \psfrag{T}[][][0.6]{$S^n$}
            \psfrag{U}[][][0.6]{\ \ \ \ \ \ \ \ \ \ \ \ \ \   Vender}
            \psfrag{V}[][][0.6]{\ \ \ \ \ \ \ \ \ \ \ \ \ \ \ \ \ \ $p(s|a,x)$}
            \psfrag{W}[][][0.6]{$\hat{X}_1$}
            \psfrag{X}[][][0.6]{$\hat{X}_2$}
            \psfrag{Y}[][][0.6]{\ \ \ \ \ \ \ \ \ \ \ \ \ \ \ \ \ \    Uninformed}
            \psfrag{Z}[][][0.6]{\ \ \ \ \ \ \ \ \ \ \ \ \ \  \ \ \ \   Informed}
            \psfrag{a}[][][0.6]{\ \ \ \  $R_2$}
            \psfrag{b}[][][0.6]{\ \ \ \ \ \ \ \  $R_1$}
            \psfrag{c}[][][0.6]{\ \ \ \ \ \ \ \ \ \ \ \ \ \ \ \ \ \ \ \ \ \ \ \ \ \ \ \ \ \ \ \ \ \ \ \ \ \ \ \ \ \ \ \ \ \ \ \ \ \ $I(A;X|\hat{X}_1) + I(U;X|\hat{X}_1,A) - I(U;S|\hat{X}_1,A)$}
            \psfrag{d}[][][0.6]{\ \ \ \ \ \ \ \ \ \ \ \ \ \ \ \ \ \ \ \ \ \ \ \ \ \ \ \ \ \ \ \ \ \ \ \ \ \ \ \ \ \ \ \ \ \ \ \ \ \  $I(A;Y|X_1) + I(U;Y|X_1,A) - I(U;S|X_1,A)$}
            \psfrag{e}[][][0.6]{\ \ \ \ \ \ \ \ \ \ $I(\hat{X}_1;X)$}
            \psfrag{f}[][][0.6]{\ \ \ \ \ \ \ \ \ \ $I(X_1;Y)$}
            \psfrag{g}[][][0.6]{\ \ \ \ \ \ \ \ \ \ \ \ \ \ \ \ \ \ \ \ \ \ \ \ \ \ \ \ \ \ \ \ \ \ \ \ \ \ \ \ \ \ \ \ \ \ \ \ \ \ \ \ \ \ \ \ \ \ \ \ $I(X;\hat{X}_1)+ I(A;X|\hat{X_1})+I(X;U|A,\hat{X}_1)-I(S;U|A,\hat{X}_1)$}
            \psfrag{h}[][][0.6]{\ \ \ \ \ \ \ \ \ \ \ \ \ \ \ \ \ \ \ \ \ \ \ \ \ \ \ \ \ \ \ \ \ \ \ \ \ \ \ \ \ \ \ \ \ \ \ \ \ \ \ \ \ \ \ \ \ \ \ \ $I(X_1;Y)+I(A;Y|X_1)+ I(Y;U|A,X_1) - I(S;U|A,X_1)$}
            \psfrag{i}[][][0.8]{\ \ \ \ \ \ \ \ \ \ \ \ \ \ \ \ \ \ \ \ \ \ \ \ \ \ \ \ \ \ \ \ \ \ \ \ \ \ \ \ \ \ \ \ \ \ \ \ \ \ \ \ \ \ \ \ \ \  ``Successive Refinement with Actions'' }
            \psfrag{j}[][][0.8]{\ \ \ \ \ \ \ \ \ \ \ \ \ \ \ \ \ \ \ \ \ \ \ \ \ \ \ \ \ \ \ \ \ \ \ \ \ \ \ \ \ \ \ \ \ \ \ \ \ \ \ \ \ \ \ \ \ \  }
            \psfrag{k}[][][0.8]{\ \ \ \ \ \ \ \ \ \ \ \ \ \ \ \ \ \ \ \ \ \ \ \ \ \ \ \ \ \ \ \ \ \ \ MAC with Action-Dependent State Information}
            \psfrag{l}[][][0.8]{\ \ \ \ \ \ \ \ \ \ \ \ \ \ \ \ \ \ \ \ \ \ \ \ \ \ \ \ \ \ \ \ \ \ \ at One Encoder}
            \centerline{\includegraphics[scale = .8]{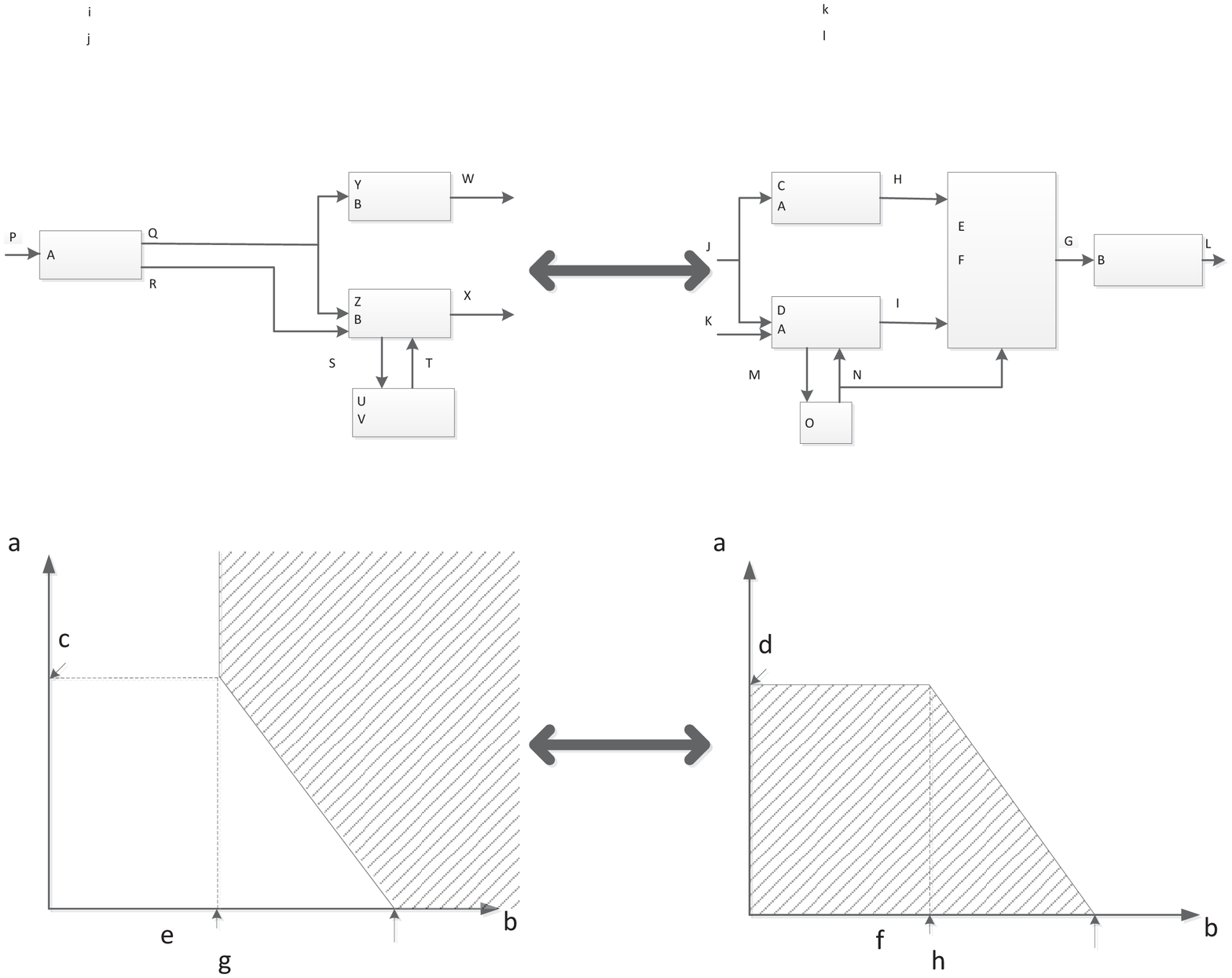}}
            \caption{A summary of the duality results. In the figure we see that if we take out the channel block from the Action-MAC model, we have an exact mirror reflection of the two settings. In this mirror reflection the roles of encoder and decoder are functionally interchangeable. The corner points are of the same form, with an exchange of variables $X\leftrightarrow Y$ and $\hat{X}_1\leftrightarrow X_1$. The bound on $R_1+R_2$ is also of the same form, with the same exchange of variables and with the additional exchange of $\geq\leftrightarrow\leq$.} \label{CornerFig2}
            \psfragscanoff
        \end{psfrags}
     \end{center}
 \end{figure}

\par To further investigate the duality between the settings, we examine their regions' corner points. Observing the resemblance between the regions' corner points helps us gain better intuition regarding their correspondence. The resemblance is not clear from simply looking at the expressions of both regions. In order to see the similarities, we need the following lemma.
\begin{lemma}\label{RateLemma}
Let $S-(A,X)-(U,\hat{X}_1)$ be a Markov chain. Then the following equality holds:
\begin{eqnarray}
I(X;\hat{X}_1)+ I(A;X|\hat{X_1})+I(X;U|S,A,\hat{X}_1)= I(X;\hat{X}_1)+ I(A;X|\hat{X_1})+I(X;U|A,\hat{X}_1)-I(S;U|A,\hat{X}_1)
\end{eqnarray}
\end{lemma}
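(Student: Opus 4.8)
The plan is to cancel the two terms that appear identically on both sides and reduce the statement to a single chain-rule identity. Subtracting $I(X;\hat{X}_1)+I(A;X|\hat{X}_1)$ from each side, the claim becomes
\begin{equation}
I(X;U|S,A,\hat{X}_1) = I(X;U|A,\hat{X}_1)-I(S;U|A,\hat{X}_1).\nonumber
\end{equation}
So it suffices to prove this reduced equality under the Markov hypothesis $S-(A,X)-(U,\hat{X}_1)$.

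First I would expand the joint conditional mutual information $I(X,S;U|A,\hat{X}_1)$ in the two ways permitted by the chain rule:
\begin{equation}
I(X,S;U|A,\hat{X}_1) = I(S;U|A,\hat{X}_1)+I(X;U|S,A,\hat{X}_1) = I(X;U|A,\hat{X}_1)+I(S;U|X,A,\hat{X}_1).\nonumber
\end{equation}
Equating the two right-hand expressions and rearranging yields exactly the reduced identity, provided the cross term $I(S;U|X,A,\hat{X}_1)$ is zero.

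Next I would verify $I(S;U|X,A,\hat{X}_1)=0$ directly from the hypothesis. The Markov chain $S-(A,X)-(U,\hat{X}_1)$ is precisely the statement $I(S;U,\hat{X}_1|A,X)=0$. Since conditioning on the additional variable $\hat{X}_1$ cannot create dependence here, we have $0\le I(S;U|X,A,\hat{X}_1)\le I(S;U,\hat{X}_1|X,A)=0$, so the term vanishes. Substituting this back completes the argument.

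I do not expect a genuine obstacle: the whole proof is a routine information-theoretic manipulation. The only point requiring a little care is making sure the Markov assumption is invoked with the \emph{pair} $(U,\hat{X}_1)$ on one side, which is exactly what guarantees $I(S;U|X,A,\hat{X}_1)=0$; stating it instead only as $S-(A,X)-U$ would not immediately let us condition freely on $\hat{X}_1$, so the precise form of the hypothesis is what makes the cancellation clean.
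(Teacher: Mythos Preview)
Your proposal is correct and is essentially the same argument as the paper's: both cancel the common terms, expand $I(X,S;U|A,\hat{X}_1)$ via the chain rule in two ways, and use the Markov chain $S-(A,X)-(U,\hat{X}_1)$ to kill the term $I(S;U|X,A,\hat{X}_1)$. Your write-up is in fact a bit cleaner, since you make explicit why the Markov hypothesis with the \emph{pair} $(U,\hat{X}_1)$ is needed to conclude $I(S;U|X,A,\hat{X}_1)=0$, whereas the paper simply cites ``the Markov property.''
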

\begin{proof}
Consider the term
\begin{align}
I(X;U|S,A,\hat{X}_1)&= I(S,X;U|S,A,\hat{X}_1)-I(S;U|A,\hat{X}_1)\nonumber\\
                    &= I(X;U|S,A,\hat{X}_1)+I(S;U|X,A,\hat{X}_1)-I(S;U|A,\hat{X}_1)\nonumber\\
                     &{=} I(X;U|A,\hat{X}_1)-I(S;U|A,\hat{X}_1)\nonumber
\end{align}
where the last equality is due to the Markov property.
\end{proof}
Using the last lemma, we can write the rate region for our rate distortion setting as
\begin{eqnarray}
R_1&\geq& I(X;\hat{X}_1)\nonumber\\
R_1+R_2&\geq& I(X;\hat{X}_1)+ I(A;X|\hat{X_1})+I(X;U|A,\hat{X}_1)-I(S;U|A,\hat{X}_1).\label{DualDis} \label{DistortionRegion2}
\end{eqnarray}
Let us recall the capacity region for our Action-MAC setting (\ref{DualCap}) given in Corollary \ref{intuitiveRegion}
\begin{eqnarray}
R_2 &\leq& I(A;Y|X_1) + I(Y;U|A,X_1) - I(S;U|A,X_1)\nonumber\\
R_1 + R_2 &\leq& I(X_1;Y)+I(A;Y|X_1)+ I(Y;U|A,X_1) - I(S;U|A,X_1).\nonumber
\end{eqnarray}
\par Looking at these two regions we notice equivalent relationships between the Action-MAC capacity region and the rate distortion rate region. Using the duality transformation principles presented in Table \ref{DualityTable}, we see an evident connection between the formulas bounding the rate sum $R_1+R_2$. This is done by consistently renaming the variables in (\ref{DistortionRegion2}), $X\leftrightarrow Y$ and $\hat{X}_1\leftrightarrow X_1$. Furthermore, by looking at these expressions for the regions of the two settings, we observe the duality of the regions' corner points. The corner points are displayed in Table \ref{CornerPointsTable2}.
\par A graphical summary of the duality results we have obtained is displayed in Fig. \ref{CornerFig2}.


\subsection{Remarks and special cases}
Throughout the paper we have seen that our Action-MAC setting generalizes two settings previously solved: the action-dependent point-to-point channel \cite{Tsachy-Weissman} and the MAC with state information available at one encoder setting (known as the GGP MAC) \cite{Baruch-Shamai-Verd}. Similarly to the Action-MAC, its dual setting, the ``Successive Refinement with Actions'', also generalizes previously solved rate distortion settings. These setting are the ``rate distortion when side information my be absent'' model, considered by Heegard and Berger \cite{HeegardBerger} as well as by Kaspi \cite{Kaspi}, a special case of the ``Successive Refinement Wyner-Ziv'' problem  \cite{Steinberg_merhav04_sucessuve_refienment_wyner_ziv} and source coding with side information ``vending machine'' \cite{VendingMachine}. In this subsection, we recognize the duality between the special cases.

\subsubsection{Duality between the Heegard-Berger/Kaspi rate distortion setting to a MAC with common message and state information}

 \begin{figure}[h!]
    \begin{center}
        \begin{psfrags}
            \psfragscanon
            \psfrag{A}[][][0.7]{\ \ \ \ \ \ \ \ \ \ \ \ \ \   Encoder}
            \psfrag{B}[][][0.7]{\ \ \ \ \ \ \ \ \ \ \ \ \ \  Decoder}
            \psfrag{C}[][][0.7]{\ \ \ \ \ \ \ \ \ \ \ \ \  Uninformed}
            \psfrag{D}[][][0.7]{\ \ \ \ \ \ \ \ \ \ \ \ \  Informed}
            \psfrag{E}[][][0.7]{\ \ \ \ \ \ \ \ \ \ \ \   MAC}
            \psfrag{F}[][][0.7]{\ \ \ \ \ \ \ \ \ \ \  \ \ $p(y|x_1,x_2,s)$}
            \psfrag{G}[][][0.7]{\ \ \ \ $Y^n$}
            \psfrag{H}[][][0.7]{\ \ \ \ \ $X_1^n(M)$}
            \psfrag{I}[][][0.7]{\ \ \ \ \ \ \ \ \ \ \ {$X_2^n(M,S^n)$}}
            \psfrag{J}[][][0.7]{$M$}
            \psfrag{K}[][][0.7]{$M_2$}
            \psfrag{L}[][][0.7]{\ \ \ \ \ $\hat{M}$}
            \psfrag{M}[][][0.7]{$A^n(M_1,M_2)$}
            \psfrag{N}[][][0.7]{$S^n$}
            \psfrag{O}[][][0.7]{\ \ \ \ \ \  $p(s|a)$}
            \psfrag{P}[][][0.7]{\ \ \  $X^n$}
            \psfrag{Q}[][][0.7]{\ \ \ \ \ \ \ \ $T$}
            \psfrag{R}[][][0.7]{\ \ \ \ \ \ \ \ $T_2$}
            \psfrag{S}[][][0.7]{$A^n(T_1,T_2)$}
            \psfrag{T}[][][0.7]{$S^n$}
            \psfrag{U}[][][0.7]{\ \ \ \ \ \ \ \ \ \ \ \ \ \   Vender}
            \psfrag{V}[][][0.7]{\ \ \ \ \ \ \ \ \ \ \ \ \ \ \ \ \ \ $p(s|a,x)$}
            \psfrag{W}[][][0.7]{$\hat{X}_1$}
            \psfrag{X}[][][0.7]{$\hat{X}_2$}
            \psfrag{Y}[][][0.7]{\ \ \ \ \ \ \ \ \ \ \ \ \ \  Uninformed}
            \psfrag{Z}[][][0.7]{\ \ \ \ \ \ \ \ \ \ \ \ \ \  Informed}
            \centerline{\includegraphics[scale = .8]{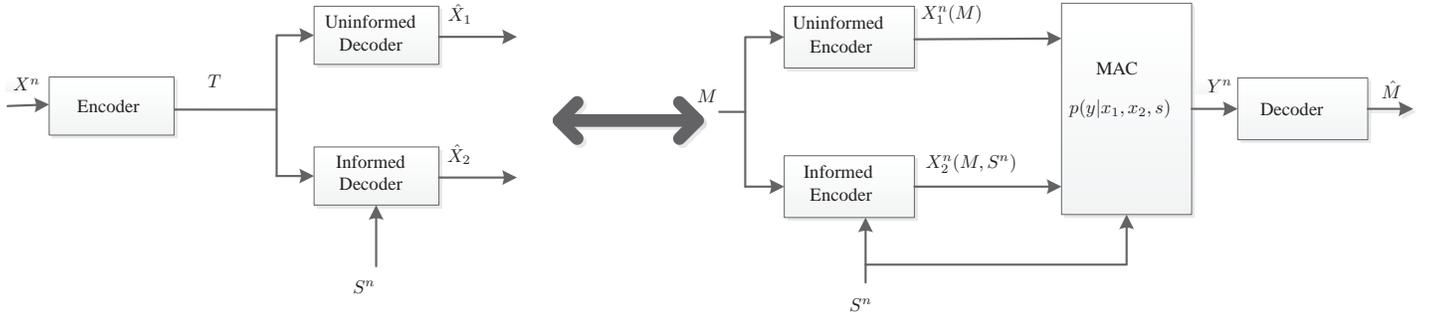}}
            \caption{The Heegard-Berger/Kaspi rate distortion setting (on the left) compared to the common message MAC with state information(on the right). If we ignore the channel block in the MAC, we have an exact mirror reflection between the settings. In this mirror reflection, the roles of the encoders and decoders are substitutable.} \label{CommonVsKaspi}
            \psfragscanoff
        \end{psfrags}
     \end{center}
 \end{figure}

  \begin{center}
\begin{table}[h!]
\caption{Duality results of the Heegard-Berger/Kaspi rate distortion setting vs. MAC with common message and state information at one encoder}\label{actionVSvending}\hspace{15mm}
\begin{tabular}{|c||c|}
\hline
\textbf{Heegard-Berger/Kaspi} & \textbf{MAC with common message}\\
\textbf{rate distortion setting} & \textbf{and state information at one encoder}\\   \hline\hline
 $R\geq R(D_1,D_2)$ & $R\leq C$  \\ \hline
 Rate distortion function & Capacity expression\\
$R(D)=I(\hat{X}_1;X)+I(U;X|\hat{X}_1)-I(U;S|\hat{X}_1)$ & $C=I(X_1;Y)+I(U;Y|X_1)-I(U;S|X_1)$\\ \hline
Decoder inputs / Encoder outputs: & Encoder inputs / Decoder outputs:\\
$T\in\{1,2,...,2^{nR_1}\}$ & $M\in\{1,2,...,2^{nR_1}\}$\\  \hline
Decoder outputs / Source reconstruction:  & Encoder outputs / Channel input:\\
$\hat{X}_1^n$, $\hat{X}_2^n$ & $X_1^n$, $X_2^n$\\ \hline
Encoder input / Source:  & Decoder input / Channel output:\\
$X^n$ & $Y^n$ \\ \hline
Decoder functions: & Encoder functions:\\
$g_1:\{1,2,...,2^{nR_1}\}\rightarrow \mathcal{\hat{X}}_1^n$ & $f_1:\{1,2,...,2^{nR_1}\}\rightarrow \mathcal{X}_1^n$\\
$g_2:\{1,2,...,2^{nR_1}\}\times \mathcal{S}^n\rightarrow \mathcal{\hat{X}}_2^n$ &
$f_2:\{1,2,...,2^{nR_1}\}\times \mathcal{S}^n\rightarrow \mathcal{X}_2^n $\\ \hline
$U$ auxiliary random variable & $U$ auxiliary random variable\\ \hline
$S$ side information & $S$ state information\\ \hline
\hline
\end{tabular}
\end{table}
\end{center}

The first special case duality we recognize is between the well known Heegard-Berger/Kaspi rate distortion setting and the MAC with common message and state information at one encoder, solved in \cite{Baruch-Shamai-Verd}. We establish this duality by using similar duality principles as those given in Table \ref{DualityTable}; i.e. we see an interchangeable relationship between the encoders and decoders in both settings, as well as the variable transformation $X\leftrightarrow Y$ and $\hat{X}_1\leftrightarrow X_1$. Given these transformations, we look at the rate distortion function for the Heegard-Berger/Kaspi model and the capacity expression for the MAC. For the rate distortion function, we use the representation given in \cite{Kaspi}, and for the capacity expression we us the common message capacity for the GGP-MAC given in \cite{Baruch-Shamai-Verd}.

\subsubsection{Duality between the GGP MAC and the Stienberg-Merhav rate distortion setting}
 \begin{figure}[h!]
    \begin{center}
        \begin{psfrags}
            \psfragscanon
            \psfrag{A}[][][0.7]{\ \ \ \ \ \ \ \ \ \ \ \ \ \   Encoder}
            \psfrag{B}[][][0.7]{\ \ \ \ \ \ \ \ \ \ \ \ \ \  Decoder}
            \psfrag{C}[][][0.7]{\ \ \ \ \ \ \ \ \ \ \ \ \  Uninformed}
            \psfrag{D}[][][0.7]{\ \ \ \ \ \ \ \ \ \ \ \ \  Informed}
            \psfrag{E}[][][0.7]{\ \ \ \ \ \ \ \ \ \ \ \   MAC}
            \psfrag{F}[][][0.7]{\ \ \ \ \ \ \ \ \ \ \  \ \ $p(y|x_1,x_2,s)$}
            \psfrag{G}[][][0.7]{\ \ \ \ $Y^n$}
            \psfrag{H}[][][0.7]{\ \ \ \ \ $X_1^n(M_1)$}
            \psfrag{I}[][][0.7]{\ \ \ \ \ \ \ \ \ \ \ \ \ \ \ \ \ \ \footnotesize{$X_2^n(M_1,M_2,S^n)$}}
            \psfrag{J}[][][0.7]{$M_1$}
            \psfrag{K}[][][0.7]{$M_2$}
            \psfrag{L}[][][0.7]{\ \ \ \ \ \ \ \ \ \ {$(\hat{M}_1,\hat{M}_2)$}}
            \psfrag{M}[][][0.7]{$A^n(M_1,M_2)$}
            \psfrag{N}[][][0.7]{$S^n$}
            \psfrag{O}[][][0.7]{\ \ \ \ \ \  $p(s|a)$}
            \psfrag{P}[][][0.7]{\ \ \  $X^n$}
            \psfrag{Q}[][][0.7]{\ \ \ \ \ \ \ \ $T_1$}
            \psfrag{R}[][][0.7]{\ \ \ \ \ \ \ \ $T_2$}
            \psfrag{S}[][][0.7]{$A^n(T_1,T_2)$}
            \psfrag{T}[][][0.7]{$S^n$}
            \psfrag{U}[][][0.7]{\ \ \ \ \ \ \ \ \ \ \ \ \ \   Vender}
            \psfrag{V}[][][0.7]{\ \ \ \ \ \ \ \ \ \ \ \ \ \ \ \ \ \ $p(s|a,x)$}
            \psfrag{W}[][][0.7]{$\hat{X}_1$}
            \psfrag{X}[][][0.7]{$\hat{X}_2$}
            \psfrag{Y}[][][0.7]{\ \ \ \ \ \ \ \ \ \ \ \ \ \  Uninformed}
            \psfrag{Z}[][][0.7]{\ \ \ \ \ \ \ \ \ \ \ \ \ \  Informed}
            \centerline{\includegraphics[scale = .8]{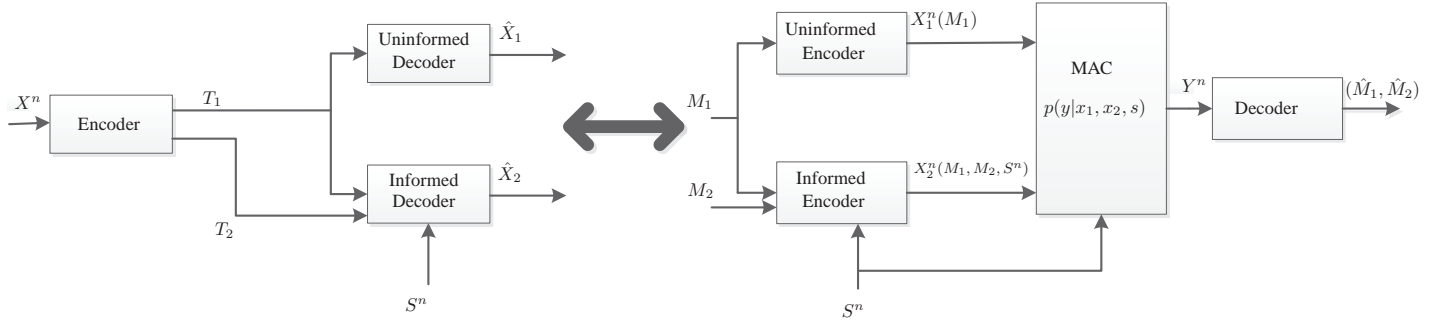}}
            \caption{The special case of the Stienberg Merhav rate distortion setting (on the left) compared to the  GGP MAC (on the right). Once again, taking out the channel block in the MAC, we have an exact mirror reflection between the settings.} \label{ShlomoVsKaspi}
            \psfragscanoff
        \end{psfrags}
     \end{center}
 \end{figure}

Using the same set of rules for the duality transformation principles obtained in Table \ref{DualityTable} we can establish a duality between the GGP MAC and the Stienberg-Merhav rate distortion setting. These results are a straightforward extension of the duality results of the Action-MAC and its dual rate distortion model, obtained by dismissing the action $A$.

\begin{center}
\begin{table}[h!]
\caption{Corner points $(R_1,R_2)$ of the GGP MAC setting and the Stienberg-Merhav rate distortion setting}\label{CornerPointsTable3}
\hspace{20mm}\begin{tabular}{|c||c c|}
\hline
& \textbf{$R_1$} & \textbf{$R_2$}\\ \hline\hline
\textbf{GGP} & $I(X_1;Y)+ I(Y;U|X_1) - I(S;U|X_1)$ & $0$\\
\textbf{MAC} & $I(X_1;Y)$ & $ I(U;Y|X_1) - I(U;S|X_1)$\\ \hline
\textbf{Stienberg-} & $I(X;\hat{X}_1)+I(X;U|A,\hat{X}_1)-I(S;U|\hat{X}_1)$ & $0$ \\
\textbf{Merhav} & $I(\hat{X}_1;X)$ & $I(U;X|\hat{X}_1) - I(U;S|\hat{X}_1)$\\
\hline
\end{tabular}
\end{table}
\end{center}

\subsubsection{Duality between the point-to-point action-dependent channel and source coding with side information ``Vending Machine''}
\begin{figure}[h!]
    \begin{center}
        \begin{psfrags}
            \psfragscanon
            \psfrag{A}[][][0.7]{\ \ \ \ \ \ \ \ \ \ \ \ \ \ \ \ Encoder}
            \psfrag{B}[][][0.7]{\ \ \ \ \ \ \ \ \ \ \ \ \  \ \ \  Decoder}
            \psfrag{C}[][][0.7]{\ \ \ \ \ \ \ \ \ \  Channel}
            \psfrag{D}[][][0.7]{\ \ \ \ \ \ \ \ \ \ \ \ $p(y|x,s)$}
            \psfrag{E}[][][0.7]{\ \ \ \ \ \ \ \ \ $p(s|a)$}
            \psfrag{F}[][][0.7]{\ \ \ \ \ \ \ \ \ \  $A(M)$}
            \psfrag{G}[][][0.7]{\ \ \ \ $S^n$}
            \psfrag{H}[][][0.7]{\ \ \ \ \ $X^n(M)$}
            \psfrag{I}[][][0.7]{\ \ \ \ \ \ $Y^n$}
            \psfrag{J}[][][0.7]{$M$}
            \psfrag{K}[][][0.7]{$\hat{M}$}
            \psfrag{L}[][][0.7]{\ \ \ \ \ $X^n$}
            \psfrag{M}[][][0.7]{\ \ \ \ \ \ \ \ \ $T(X^n)$}
            \psfrag{N}[][][0.7]{$\hat{X}$}
            \psfrag{O}[][][0.7]{\ \ \   $A(T)$}
            \psfrag{P}[][][0.7]{\ \ \  $S^n$}
            \psfrag{Q}[][][0.7]{\ \ \ \ \ \ \ \ \ \ \ Vender}
            \psfrag{R}[][][0.7]{\ \ \ \ \ \ \ \ \ \ \  $p(s|a,x)$}
            \centerline{\includegraphics[scale = .6]{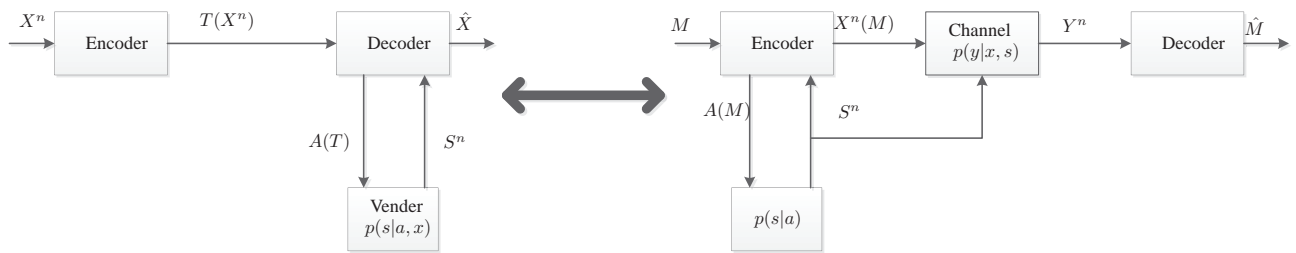}}
            \caption{The action-dependent point-to-point channel (on the right) compared to the source coding with side information ``vending machine'' at the decoder (on the left). Again, we see that if we take out the channel block from the point-to-point model, we have an exact mirror reflection between the two settings.}\label{VendingVsAcion}
            \psfragscanoff
        \end{psfrags}
     \end{center}
 \end{figure}

 \begin{center}
\begin{table}[h!]
\caption{Duality results of the action-dependent point-to-point channel vs. source coding with
side information ``vending machine''}\label{actionVSvending}
\begin{center}
\begin{tabular}{|c||c|}
\hline
\textbf{Action-dependent} & \textbf{Source coding with side information}\\
\textbf{point-to-point channel} & \textbf{``vending machine''}\\   \hline\hline
 Capacity expression & Rate distortion function\\
$C=I(A;Y)+I(U;Y|A)-I(U;S|A)$ & $R(D)=I(A;X)+I(U;X|A)-I(U;S|A)$\\ \hline \hline
\end{tabular}
\end{center}
\label{t_dual_point_to_point}
\end{table}
\end{center}

The duality between these two settings was noted first in \cite{KittichokechaiTobias}. Similarly
to the previous compression, the duality transformation principles presented in Table
\ref{DualityTable} are used to determine the duality between the action-dependent point-to-point
channel and the source coding with side information ``vending machine'' model. It is particularly
interesting to observe that the capacity expression and the rate distortion function are
equivalent with some renaming of variables, as presented in Table \ref{t_dual_point_to_point}.


\section{Conclusions}\label{Conclusion}
In multi-user communication systems today, there is an increasing demand for high data rates.
Therefore, it is essential to find the fundamental limits of channel models in order to benefit
from the channel structure. Motivated by this growing need, we extended the study of the MAC. We
consider a MAC with two encoders: an encoder which has access to a common message and an encoder
which has access to a common message as well as a private message. This second encoder can take
an action, dependent on both of the messages, that affects the formation of the states. We find a
single letter characterization of the capacity region for this channel. This is done by using a
three-stage binning coding scheme. We focus specifically on the Gaussian case, find an
characterize the capacity region and plot some computed results. In the process of this analysis,
we find the capacity expression for the Gaussian point-to-point action-dependent channel, which
was left open in \cite{Tsachy-Weissman} and is independently reported in \cite{Choudhuri-Mitra}.
In addition, we explore the dual rate distortion for our Action-MAC model. We give duality
principles that enable us to effectively estimate the result of one problem, given the result of
the other. Using the insight gained by this duality, we discover further duality comparisons
between previously solved settings in source coding and channel coding.

On a final note, we remark that finding a closed, single letter expression of the capacity region
leads us to believe that such solutions exist for various extensions of this model. One practical
extension is a MAC where the informed encoder has access to a noisy observation of the channel
state, $\tilde{S}^n$, instead of the noiseless state sequence $S^n$. Such a setting can model a
pubic relay, where the informed encoder observes a noisy version of the pubic relay's output.
Another possible extension is where the non informed encoder takes the action rather than the
informed encoder. This setting corresponds to the case that the non informed encoder influences
the public relay, while the public relay is observed (possibly with noise) by the informed
encoder.

\appendices                 
\section{Proof of Lemma \ref{lemma1}}\label{LemmaProof}
Let us prove that the capacity region described in {\it Theorem 1}, given in (\ref{CapacityRegion}), is convex.
\begin{proof}
Let $P_i,i=1,2,3$ be three distributions of the form (\ref{distribution}), i.e.:
\begin{align*}
P_i(a,s,u,x_1,x_2,y) = P_i(x_1)P_i(a|x_1)P_i(s|a)P_i(u|s,a,x_1)P_i(x_2|x_1,s,u)P_i(y|x_1,x_2,s).
\end{align*}
The given distributions induce the quantities:
\begin{equation}
\Big{(}I_i(U;Y|X_1), I_i(U;S|A,X_1), I_i(X_1,U;Y), I_i(X_1,U;S|A)\Big{)},
\end{equation}
for $i=1,2,3$, respectively. Let $\gamma$ be a number characterized by $0\leq\gamma\leq1$ and $\overline{\gamma}= 1-\gamma$, where $P_3=\gamma P_1 +\overline{\gamma}P_2$. In addition, let $Q$ be a binary random variable, where $P(q=1)=\gamma$ and $P(q=2)=\overline{\gamma}$, independent of $A,S,U,X_1,X_2$. The distribution of $A,S,U,X_1,X_2$ is according to $P_1$ when $q=1$ and according to $P_2$ when $q=2$. Furthermore, we denote $\tilde{U}=(U,Q)$. Note that by marginalizing $P(x_1)P(a|x_1)P(s|a)P(\tilde{u}|s,a,x_1)P(x_2|x_1,s,u)P(y|x_1,x_2,s)$ over $Q$, we obtain $P_3$ in the form of (\ref{distribution}). This is due to the fact that we can write $P(\tilde{u}|s,a,x_1)$ equivalently as $P(q)P(u|q,s,a,x_1)$. Finally, from the set of equalities:
\begin{eqnarray}
\gamma I_1(U;Y|X_1)+\overline{\gamma} I_2(U;Y|X_1)=I_3(U;Y|X_1,Q)&=&I_3(\tilde{U};Y|X_1),\nonumber\\
\gamma I_1(U;S|X_1,A)+\overline{\gamma} I_2(U;S|X_1,A)=I_3(U;S|X_1,A,Q)&=&I_3(\tilde{U};S|X_1,A),\nonumber\\
\gamma I_1(X_1,U;Y)+\overline{\gamma} I_2(X_1,U;Y)=I_3(X_1,U;Y,Q)&=&I_3(X_1,\tilde{U};Y|A),\nonumber\\
\gamma I_1(X_1,U;S|A)+\overline{\gamma} I_2(X_1,U;S|A)=I_3(X_1,U;S|A,Q)&=&I_3(X_1,\tilde{U};S|A),\nonumber
\end{eqnarray}
we derive that region is, indeed, convex.

\end{proof}


\section{Alternative Proof to the Gaussian Action-Dependent Point-to-Point Channel}\label{AppendixPTPGauss}

Weissman  \cite{Tsachy-Weissman} first introduced the Gaussian channel model for the action-dependent point-to-point channel illustrated in Fig \ref{PTPchannel}. He derived an achievable scheme  for a lower bound on the capacity of the channel. We would like to show that the expression (\ref{GaussainPTP22}) is the capacity of this setting and that the lower bound presented in \cite{Tsachy-Weissman} is an equivalent expression. In order to do so, we look at the Gaussian channel model presented in \cite{Baruch-Shamai-Verd}, namely the Gaussian Generalized Gel'fand-Pinsker (GGP) MAC. We examine the case where only a common message is sent over the channel. In this setting, we have an informed encoder which has access to the common message in addition to knowledge of the channel states non-causally. We also have an uninformed encoder in the sense that it does not know the channel states. This setting is illustrated in Fig \ref{GGPchannel}. The capacity region for this model was solved in \cite{Baruch-Shamai-Verd}. We use the result obtained in \cite{Baruch-Shamai-Verd} to prove the capacity of the action-dependent point-to-point channel. We note that similar results to this alternative proof were obtained simultaneously and independently by Choudhuri and Mitra in \cite{Choudhuri-Mitra}.

\begin{figure}[h!]
\begin{center}
    \psfrag{A}{$M$}
    \psfrag{B}{$S^n=A^n(M)+W^n$}
    \psfrag{C}{$X^n(M)$}
    \psfrag{D}{$Y^n=X^n+A^n+W^N+N^n$}
    \psfrag{E}{$\hat{M}$}
    \psfrag{F}{\scriptsize{Action Encoder}}
    \psfrag{G}{\scriptsize{Channel Encoder}}
    \psfrag{H}{\ \ \scriptsize{Gaussian channel}}
    \psfrag{I}{\ \ $p(s|a)$}
    \psfrag{J}{\ \ \scriptsize{Decoder}}
\includegraphics[scale = 0.45]{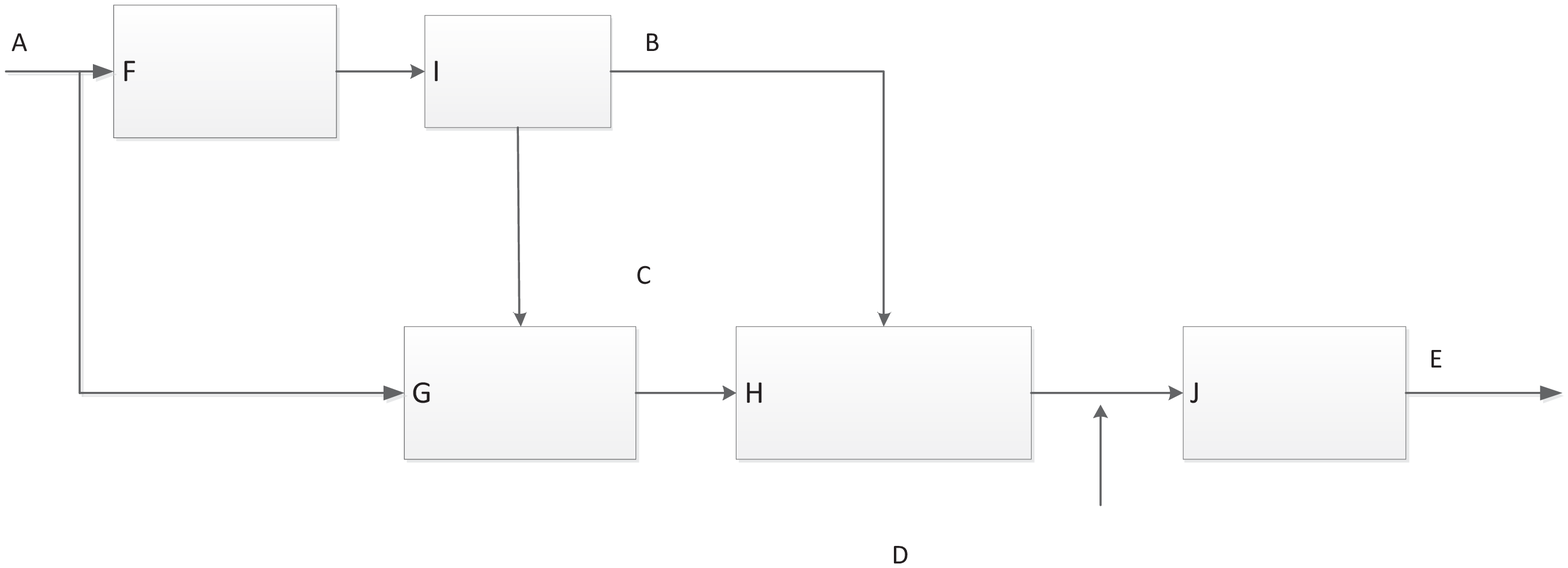}
\caption{Point-to-point channel with action-dependent states.} \label{PTPchannel}
\end{center}
\end{figure}

\begin{center}
\begin{figure}[h!]
        \psfrag{A}[][][0.7]{\ \ \ \ \ \ \ \ \ \ \ \ \ \ \ \  Uninformed}
        \psfrag{B}[][][0.7]{\ \ \ \ \ \ \ \ \ \ \ \ \ \ \ \  Informed}
        \psfrag{C}[][][0.7]{\ \ \ \ \ \ \ \ \ \ \ \ \ \ \  Gaussian}
        \psfrag{D}[][][0.7]{\ \ \ \ \ \ \ \ \ \ \ \ \ \ \  MAC}
        \psfrag{E}[][][0.7]{\ \ \ \ \ \ \ \ \ \ \  Decoder}
        \psfrag{F}[][][0.7]{\ \ \ \ \ \ {\footnotesize{$X_1^n(M)$}}}
        \psfrag{G}[][][0.7]{\ \ \ \ \ \ \  {\footnotesize{$X_2^n(M,W^n)$}}}
        \psfrag{H}[][][0.7]{\ \ \ \ \ \ \ \ \ \ \ \ \ \ \ \ \ \ \ \ \ \ \ \ \ \ \ \ \ \ \ \ \ $Y^n=X_1(M)^n+X_2^n(M,W^n)+W^n+Z^n$}
        \psfrag{I}[][][0.7]{\ \ \ \ \ \  $W^n$}
        \psfrag{J}[][][0.7]{$M$}
        \psfrag{L}[][][0.7]{\ \ \ $\hat{M}$}
        \psfrag{M}[][][0.7]{\ \ \ \ \ \ \ \ \ \ \ \ \ \ \ \ \ Encoder}
        \psfrag{N}[][][0.7]{\ \ \ \ \ \ \ \ \ \ \ \ \ \ \ \ \  Encoder}
        \centerline{\includegraphics[scale = 0.45]{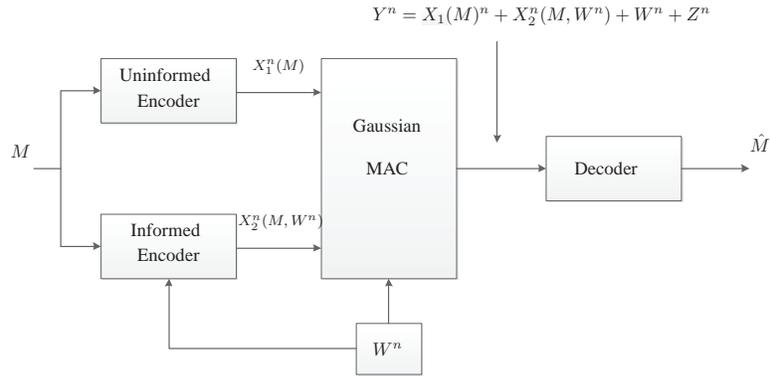}}
        \caption{MAC with state information at one encoder and a common message} \label{GGPchannel}
      \end{figure}
\end{center}

\subsection{Alternative Proof of the Capacity of the Action-Dependent Point-to-Point Channel}

The main idea of the proof is to show that we can obtain a one-to-one correspondence between the action-dependent point-to-point channel introduced in \cite{Tsachy-Weissman} and the MAC with state information known to one encoder channel, presented in \cite{Baruch-Shamai-Verd}.

\begin{proof}
\par We briefly recall the channel model presented in \cite{Baruch-Shamai-Verd}. The channel is given by:
\begin{equation}
Y_i=X_{1,i}+X_{2,i}+W_i+N_i.
\end{equation}
Here, the states are denoted as $W^n$, where the noise processes $W_i$ and $N_i$ are zero-mean Gaussian i.i.d. with $E[N_i^2]=N$ and $E[W_i^2]=Q$. Individual power constraints are considered:
\begin{equation}
\frac{1}{n}\sum_{i=1}^n X_{1,i}^2\leq P_1,\ \ \frac{1}{n}\sum_{i=1}^n X_{2,i}^2\leq P_2.
\end{equation}

Now, we would like to show that the setting of the Gaussian action-dependent point-to-point channel, illustrated in Fig \ref{PTPchannel} , is a special case of the Gaussian GGP channel with only a common message, illustrated in Fig. \ref{GGPchannel}. Let us take $X^n_1(M)$ as $A^n(M)$, and $X_2(M)$ as $X(M)$. We can look at the block of the ``Action Encoder'' as the ``Uninformed Encoder'' and the ``Channel Encoder'' as ``Informed Encoder''.
\begin{center}
\begin{tabular}{|c|c|}
\hline
Action-dependent p-t-p channel & GGP channel with common message\\ \hline
$A^n$ & $X_1^n$ \\
$X^n$ & $X_2^n$ \\
$f_A:\mathcal{M}\rightarrow \mathcal{A}^n$ & $f_{X_1}:\mathcal{M}\rightarrow \mathcal{X}_1^n$\\
$f_X:\mathcal{M}\times\mathcal{S}^n\rightarrow \mathcal{X}^n$ & $f_{X_2}:\mathcal{M}\times\mathcal{S}^n\rightarrow \mathcal{X}_2^n$\\ \hline
\end{tabular}
\end{center}
 \begin{eqnarray}
f_A:\mathcal{M}\rightarrow \mathcal{A}^n &\Rightarrow& f_{X_1}:\mathcal{M}\rightarrow \mathcal{X}_1^n\nonumber\\
f_X:\mathcal{M}\times\mathcal{S}^n\rightarrow \mathcal{X}^n &\Rightarrow& f_{X_2}:\mathcal{M}\times\mathcal{S}^n\rightarrow \mathcal{X}_2^n\nonumber
\end{eqnarray}
Notice that we do not lose any of the properties of the setting by looking at the action-dependent point-to-point channel as a Gaussian GGP MAC. Both the ``Channel Encoder'' and ``Informed Encoder'' have knowledge of $S^n$ and the channel inputs and the channel outputs are the same. The covariance matrix $\Sigma_{X_1,X_2,S,N}$ becomes:
\begin{equation}
\Sigma_{X_1,X_2,W,N}= \left(  \begin{array}{cccc}  P_1 & \sigma_{12} & 0 & 0 \\ \sigma_{12} & P_2 & \sigma_{2W} & 0 \\ 0 & \sigma_{2W} & Q & 0 \\ 0 & 0 & 0 & \sigma_N^2 \end{array}  \right)\ \Leftrightarrow \  \Sigma_{A,X,W,N}= \left(  \begin{array}{cccc}  P_A & \sigma_{XA} & 0 & 0 \\ \sigma_{XA} & P_X & \sigma_{XW} & 0 \\ 0 & \sigma_{XW} & Q & 0 \\ 0 & 0 & 0 & \sigma_N^2 \end{array}  \right).
\end{equation}

Conversely, we show that the Gaussian GGP channel, with only a common message, is a special case of the Gaussian action-dependent point-to-point channel. Let us take $A^n(M)$ as $X^n_1(M)$ and $X(M)$ as $X_2(M)$. So, we can look at the ``Uninformed Encoder'' block as the ``Action Encoder'' and the ``Informed Encoder'' block as the ``Channel Encoder''.
 \begin{eqnarray}
f_{X_1}:\mathcal{M}\rightarrow \mathcal{X}_1^n &\Rightarrow& f_A:\mathcal{M}\rightarrow \mathcal{A}^n\nonumber\\
f_{X_2}:\mathcal{M}\times\mathcal{S}^n\rightarrow \mathcal{X}_2^n &\Rightarrow& f_X:\mathcal{M}\times\mathcal{S}^n\rightarrow \mathcal{X}^n\nonumber
\end{eqnarray}

In conclusion, we obtained a one-to-one correspondence between the Gaussian action-dependent point-to-point channel and the Gaussian GGP channel with only a common message. The capacity region for the Gaussian GGP channel was found in \cite{Baruch-Shamai-Verd} and is equal to:
\begin{eqnarray}
&&R_2\leq \frac{1}{2}\log\Big{(}1+ \frac{P_2(1-\rho_{12}^2-\rho_{2S}^2)}{N}\Big{)}\nonumber\\
&&R_1+R_2\leq \frac{1}{2}\log\Big{(}1+ \frac{P_2(1-\rho_{12}^2-\rho_{2S}^2)}{N}\Big{)}+ \frac{1}{2}\log\Big{(}1+ \frac{(\sqrt{P_1}+\sqrt{P_2})^2}{P_2(1-\rho_{12}^2-\rho_{2S}^2)+(\sigma_W +\rho_{2S}\sqrt{P_2})^2N}\Big{)},\nonumber\\\label{GGPregion}
\end{eqnarray}
where
\begin{equation}
\rho_{12}=\frac{\sigma_{12}}{\sqrt{P_1P_2}},\ \ \rho_{2W}=\frac{\sigma_{2W}}{\sqrt{P_2Q}}.
\end{equation}
\begin{eqnarray}
\rho_{12}^2+\rho_{2W}^2\leq 1.
\end{eqnarray}
Hence, the capacity for the Gaussian action-dependent point-to-point channel can be achieved by substituting the following transformations in the Gaussian GGP capacity expression. Here, we have $M_2=0$, thus $R_2=0$, $P_1\rightarrow P_A$, $P_2\rightarrow P_X$, $\rho_{12}\rightarrow \rho_{XA}$ and $\rho_{2W}\rightarrow \rho_{XW}$. Substituting these expressions into (\ref{GGPregion}) we get
\begin{eqnarray}
R_1\leq \frac{1}{2}\log\Big{(}1+ \frac{P_X(1-\rho_{XA}^2-\rho_{XW}^2)}{N}\Big{)}+ \frac{1}{2}\log\Big{(}1+ \frac{(\sqrt{P_A}+\rho_{12}\sqrt{P_X})^2}{P_X(1-\rho_{XA}^2-\rho_{XW}^2)+(\sigma_W +\rho_{XW}\sqrt{P_X})^2+N}\Big{)}.
\end{eqnarray}
where
\begin{equation}
\rho_{XA}=\frac{\sigma_{XA}}{\sqrt{P_XP_A}},\ \ \rho_{XW}=\frac{\sigma_{XW}}{\sqrt{P_XQ}}.
\end{equation}
\begin{eqnarray}
\rho_{XA}^2+\rho_{XW}^2\leq 1.
\end{eqnarray}
\end{proof}


\subsection {Remarks}
 Recall the result for the achievable rate introduced in Weissman's paper \cite{Tsachy-Weissman} for the Gaussian point-to-point channel with action-dependent states
    \begin{equation}
    C_G=\frac{1}{2}\log\Big{(}\max_{(\alpha,\gamma):\alpha^2P_A+\gamma^2Q\leq P_X}\frac{[(1+2\alpha)P_A +P_X+1+(1+2\gamma)Q][1+(P_X-(\alpha^2P_A+\gamma^2Q))]}{P_X-\alpha^2P_A+1+Q(1+2\gamma)}\Big{)}.\label{LowerBound}
    \end{equation}
    We would like to show that our result, (\ref{GaussainPTP22}), is consistent with the result obtained in \cite{Tsachy-Weissman}. Here, $X$ is defined as:
    \begin{equation}
    X=\alpha A + \gamma W +G
    \end{equation}
    where $\alpha^2 P_A+\gamma^2Q\leq P_X$ and $G\sim N(0,P_X-(\alpha^2 P_A+\gamma^2Q))$. Therefore we have:
    \begin{eqnarray}
    \sigma_{XA}&=&\alpha P_A\nonumber\\
    \sigma_{XW}&=&\gamma Q\nonumber\\
    \rho_{XA}&=&\frac{\sigma_{XA}}{\sqrt{P_AP_X}}=\alpha\frac{\sqrt{P_A}}{\sqrt{P_X}}\nonumber\\ \rho_{XW}&=&\frac{\sigma_{XW}}{\sqrt{P_XQ}}=\gamma\frac{\sigma_W}{\sqrt{P_X}}.\label{expressions}
    \end{eqnarray}
    Notice that now the constraint
    \begin{eqnarray}
    \rho_{XA}^2+\rho_{XW}^2\leq 1\nonumber
    \end{eqnarray}
    becomes
    \begin{eqnarray}
    \alpha^2 P_A+\gamma^2Q\leq P_X.\nonumber
    \end{eqnarray}
    Now, substituting the expressions in (\ref{expressions}) into the capacity region (\ref{GaussainPTP22}), we achieve the lower bound (\ref{LowerBound}) found in \cite{Tsachy-Weissman} and have shown that it is, indeed, tight. Therefore, the capacity expression in (\ref{LowerBound}) is equivalent to the expression in (\ref{GaussainPTP22}).


\bibliographystyle{unsrt}
\bibliography{Ref-Mine}

\begin{thebibliography}{10}

\bibitem{Shannon}
C.~E. Shannon.
\newblock Channels with side information at the transmitter.
\newblock {\em IBM J. Res. Dev.}, 2(4):289--293, October 1958.

\bibitem{Gelfand-Pinsker}
S.~I. Gel'fand and M.~S. Pinsker.
\newblock {Coding for Channel with Random Parameters}.
\newblock {\em Problems of Control Theory}, 9(1):19--31, 1980.

\bibitem{Heegard-Gamal}
C.~Heegard and A.~A.~El Gamal.
\newblock On the capacity of computer memory with defects.
\newblock {\em IEEE Transactions on Information Theory}, 29(5):731--739, 1983.

\bibitem{Costa}
M.~Costa.
\newblock {Writing on dirty paper (Corresp.)}.
\newblock {\em Information Theory, IEEE Transactions on}, 29(3):439--441, May
  1983.

\bibitem{Baruch-Shamai-Verd}
A.~Somekh-Baruch, S.~Shamai, and S.~Verdú.
\newblock Cooperative multiple-access encoding with states available at one
  transmitter.
\newblock {\em IEEE Transactions on Information Theory}, 54(10):4448--4469,
  2008.

\bibitem{Kotagiri07multipleaccess}
S.~Kotagiri and J.~N. Laneman.
\newblock Multiple access channels with state information known at some
  encoders.
\newblock {\em CoRR}, abs/cs/0607102, 2006.

\bibitem{DBLP:journals/corr/abs-1106-0380}
A.~Lapidoth and Y.~Steinberg.
\newblock A note on multiple-access channels with strictly-causal state
  information.
\newblock {\em CoRR}, abs/1106.0380, 2011.

\bibitem{5513459}
A.~Lapidoth and Y.~Steinberg.
\newblock The multiple access channel with two independent states each known
  causally to one encoder.
\newblock In {\em Information Theory Proceedings (ISIT), 2010 IEEE
  International Symposium on}, pages 480 --484, june 2010.

\bibitem{DBLP:journals/corr/abs-1011-6639}
O.~Simeone M.~Li and A.~Yener.
\newblock Multiple access channels with states causally known at transmitters.
\newblock {\em CoRR}, abs/1011.6639, 2010.

\bibitem{DBLP:journals/corr/abs-1105-5975}
P.~Piantanida A.~Zaidi and S.~Shamai.
\newblock Multiple access channel with states known noncausally at one encoder
  and only strictly causally at the other encoder.
\newblock {\em CoRR}, abs/1105.5975, 2011.

\bibitem{Tsachy-Weissman}
T.~Weissman.
\newblock Capacity of channels with action-dependent states.
\newblock {\em IEEE Trans. Inf. Theory}, 56(11):5396--5411, 2010.

\bibitem{Choudhuri-Mitra}
C.~Choudhuri and U.~Mitra.
\newblock How useful is adaptive action?
\newblock In {\em Globecom2012}, December 2012.

\bibitem{Khiang-Asnani-Weissman}
Y.K. Chia, H.~Asnani, and T.~Weissman.
\newblock Multi-terminal source coding with action dependent side information.
\newblock {\em CoRR}, abs/1110.6916, 2011.

\bibitem{gamal2011network}
A.E. Gamal and Y.H. Kim.
\newblock {\em Network Information Theory}.
\newblock Cambridge University Press, 2011.

\bibitem{Tom-Cover}
Thomas~M. Cover and Joy~A. Thomas.
\newblock {\em {Elements of information theory}}.
\newblock Wiley-Interscience, New York, NY, USA, 1991.

\bibitem{Csiszar-Korner}
I.~Csisz{\'a}r and J.~K{\"o}rner.
\newblock {\em {Information Theory: Coding Theorems for Discrete Memoryless
  Systems}}.
\newblock Academic, New York, 1981.

\bibitem{Shannon60}
C.~E. Shannon.
\newblock Coding theorems for a discrete source with fidelity criterion.
\newblock In R.~E. Machol, editor, {\em Information and Decision Processes},
  pages 93--126. McGraw-Hill, 1960.

\bibitem{ChiangCover01}
M.~Chiang and T.~Cover.
\newblock Unified duality between channel capacity and rate distortion with
  state information.
\newblock In {\em Proceedings 2001 IEEE Int. Symp. on Inf. Theory}, page 301,
  2001.

\bibitem{Pradhan_duality03}
S.S. Pradhan, J.~Chou, and K.~Ramchandran.
\newblock Duality between source coding and channel coding and its extension to
  the side information case.
\newblock {\em IEEE Trans. Inf. Theory}, 49(5):1181 -- 1203, may 2003.

\bibitem{KittichokechaiTobias}
K.~Kittichokechai, T.~J. Oechtering, and M.~Skoglund.
\newblock Coding with action-dependent side information and additional
  reconstruction requirements.
\newblock {\em CoRR}, abs/1202.1484, 2012.

\bibitem{Wyner_ziv76_side_info_decoder}
A.~D. Wyner and J.~Ziv.
\newblock The rate-distortion function for source coding with side information
  at the decoder.
\newblock {\em IEEE Trans. Inf. Theory}, 22(1):1--10, 1976.

\bibitem{GuptaVerdue_Duality10}
A.~Gupta and S.~Verdú.
\newblock Operational duality between lossy compression and channel coding.
\newblock {\em IEEE Transactions on Information Theory}, 57(6):3171--3179,
  2011.

\bibitem{Kaspi}
A.~H. Kaspi.
\newblock Rate-distortion function when side-information may be present at the
  decoder.
\newblock {\em IEEE Trans. Inf. Theor.}, 40(6):2031--2034, September 2006.

\bibitem{HeegardBerger}
C.~Heegard and T.~Berger.
\newblock Rate distortion when side information may be absent.
\newblock {\em IEEE Trans. Inf. Theor.}, 31(6):727--734, November 1985.

\bibitem{Steinberg_merhav04_sucessuve_refienment_wyner_ziv}
Y.~Steinberg and N.~Merhav.
\newblock On successive refinement for the {W}yner-{Z}iv problem.
\newblock {\em IEEE Trans. Inf. Theory}, IT-50:1636--1654, 2004.

\bibitem{TimoOechteringWigger12}
T.~J.~Oechtering R.~Timo and M.~Wigger.
\newblock Source coding problems with conditionally less noisy side
  information.
\newblock 2012.

\bibitem{VendingMachine}
H.~H. Permuter and T.~Weissman.
\newblock Source coding with a side information "vending machine".
\newblock {\em IEEE Transactions on Information Theory}, 57(7):4530--4544,
  2011.

\end{thebibliography}

\end{document}